\definecolor{darkblue}{rgb}{0,0,0.5}
\newcommand{\defeq}{\vcentcolon=}
\newcommand\mc[1]{\mathcal{#1}}
\newcommand\kCPF[1]{\mathcal{U}_{\text{CPF}}^{#1}}
\newcommand\bs[1]{\boldsymbol{#1}}
\newtheorem{theorem}{Theorem}
\newtheorem{lemma}{Lemma}
\newtheorem{defin}{Definition}
\newenvironment{proof}[1][Proof]{\noindent\textbf{#1.} }{\ \rule{0.5em}{0.5em}}
\begin{document}

\title{Analytical Bounds for Dynamic Multi-Channel Discrimination}
\author{Cillian Harney and Stefano Pirandola} 
\affiliation{Department of Computer Science, University of York, York YO10 5GH, United Kingdom}
\begin{abstract}
The ability to precisely discriminate multiple quantum channels is fundamental to achieving quantum enhancements in data-readout, target detection, pattern recognition, and more. Optimal discrimination protocols often rely on entanglement shared between an incident probe and a protected idler-mode. While these protocols can be highly advantageous over classical ones, the storage of idler-modes is extremely challenging in practice. In this work, we investigate idler-free block protocols based on the use of multipartite entangled probe states. In particular, we focus on a class of idler-free protocol which uses non-disjoint distributions of multipartite probe states irradiated over multi-channels, known as dynamic discrimination protocols.  We derive new, analytical bounds for the average error probability of such protocols in a bosonic Gaussian channel setting, revealing idler-free strategies that display performance close to idler-assistance for powerful, near-term quantum sensing applications.
\end{abstract}

\maketitle
A critical setting of quantum hypothesis testing \cite{HelstromQHT, BaeQSD, QSDBergou, QSDChefles, DiscrimQOps, MarcheseQHT} is Quantum Channel Discrimination (QCD), in which a discriminator is tasked with discerning between a collection of quantum channels \cite{QSensing,AdvPhS,FLQCD}. Quantum channels are ubiquitous in their description of physical phenomena. Hence the design of QCD protocols that leverage superiority over optimal classical strategies, offers exciting technological advancement in many fields of quantum sensing; data read-out \cite{PirBD, NairNDS,Tej2013, Spedalieri2012, Zhuang2017_3,Hirota2017, expqread,QEnhDataCl}, target-detection \cite{Lloyd2008, Tan2008, Shapiro2009, Zhang2013, Zhang2015, Barzanjeh2015, Dallarno2012, Zhuang2017_1, Zhuang2017_2, LasHeras2017, DePalma2018, NairIllum,Lopaeva2013, Barzanjeh2020}, cryptography \cite{GSCryptoQR, AmpDampB} and more.\par

Whilst binary QCD has been an avenue of  intense research ever since Helstrom's seminal work in the late 1960's, much less is known about multi-channel discrimination protocols (MQCD). In this setting, one must distinguish between multiple quantum channels, or \textit{channel patterns}, which consist of an arrangement of $m \geq 2$ channels. MQCD is then the task of classifying between different channel patterns, from a potentially large and non-uniform pattern space. There is also an enormous space of generally adaptive quantum protocols that can be employed for such a task, making the determination of optimal protocols extremely difficult. \par

A new insight has been obtained with the advent of Channel Position Finding (CPF) \cite{EntEnhanced}. CPF describes the multi-ary discrimination task of locating a single target channel hidden amongst background channels. It has been recently shown that special classes of multi-channels satisfying joint teleportation covariance can be optimally discriminated non-adaptively \cite{UltPrec2017,ULMCD}, using quantum probes consistent of a signal-mode and idler-mode which are maximally entangled. The signal-mode interacts with a channel, while the idler-mode is perfectly preserved from decoherence. With a particular focus on the bosonic quantum channel setting, these probes take the form of Two Mode Squeezed Vacuum (TMSV) states, and can be used to define optimal protocols for the discrimination of Gaussian Phase Insensitive (GPI) channels. Furthermore, the CPF framework provides a stepping stone to increasingly complex discrimination problems, unveiling quantum enhancements in future pattern recognition applications \cite{PatternRecog,ThermalPatt}.\par

Unfortunately, the necessity of ancillary idler-modes poses a considerable practical challenge, as it is never possible to offer perfect protection from decoherence. The alleviation of loss or noise may be achieved via quantum memories; however the design of quantum memories with simultaneously long storage time and memory efficiency is a prominent theoretical/experimental hurdle, and active area of research \cite{Lvovsky_QMem_2009, Jensen_QMem_2010,Cho_QMem_2016,Wang_QMem_2019}. It is thus desirable to ask: Do there exist high performance idler-free discrimination protocols that offer quantum advantage?\par

This question has been recently addressed through the use of multipartite entangled probe states \cite{Jason_IdlerFree, MP_IdlerFree}. While mostly unexplored, the alternative utilisation of multipartite probe-mode entanglement has been shown to offer expedient quantum advantage, opening new doors to practically feasible multi-channel discrimination. Research in \cite{MP_IdlerFree} showed that idler-free, multipartite entangled probe states can be designed that are superior over classical protocols for large regions of channel parameter space, in some cases matching the efficacy of idler-assistance. Yet, the enormous space of possible multipartite probing structures and complexity of performance benchmarking meant that much of this analysis was numerical, focussing on smaller pattern dimensions.\par

Here, we improve upon this progress by deriving analytical discrimination error bounds for unassisted discrimination protocols using multipartite entangled probe states. 
We investigate unassisted \textit{dynamic discrimination protocols}, where multi-channels are probed by multipartite input states which reconfigure and vary their spatial probing domains over the course of multiple rounds.  
Remarkably, the variation of probing domains provide an intrinsic error-correcting behaviour for non-adaptive, idler-free protocols. 
By carefully engineering these protocols with entangled CV-GHZ states (the bosonic analogue to the discrete-variable GHZ state), then we derive analytical error probability bounds for the tasks of bosonic quantum reading and environment localisation \cite{PirBD,OptEnvLoc}. These are problems which are relevant to accomplishing future quantum enhancements in optical data-readout/transfer, quantum cryptography, and pattern recognition.\par

This paper proceeds as follows: In Section~\ref{sec:Prelims} the tasks of quantum pattern recognition and multi-channel discrimination are introduced, and channel patterns are specified to bosonic Gaussian quantum channels. In Section~\ref{sec:Unassisted} we review recent developments in the regime of unassisted protocols, and specifically define fixed and dynamic block protocols which exploit multipartite quantum states. 
A general method for the derivation of analytical error bounds associated with unassisted fixed and dynamic block protocols is developed in Section~\ref{sec:Methods}. Section~\ref{sec:DDP} then applies the machinery of the previous sections to produce the key results of this paper; providing new, compact error bounds for high-performance dynamic block protocols. Finally, Section~\ref{sec:Concl} concludes our work, with discussions of future investigative paths.

\section{Preliminaries\label{sec:Prelims}}
\subsection{Quantum Pattern Recognition}

Let us define a binary quantum channel pattern as an $m$-length multi-channel, where each channel is characterised as a background channel ($B$) or a target quantum channel ($T$),
\begin{align}
\mc{E}_{\bs{i}} &\defeq \mc{E}_{i_1} \otimes \mc{E}_{i_2} \otimes \ldots \otimes \mc{E}_{i_m},\\ &=  \bigotimes_{k=1}^m \mc{E}_{i_k},\> i_k \in \{B,T\}.
\end{align}

This object is best interpreted as a quantum mechanical representation of an $m$-pixel \textit{image} described by a sequence of binary variables $\bs{i} = i_1, \ldots, i_m$. Each pixel $i_k \in \{B,T\}$ can be described by a background channel $\mc{E}_B$ or target channel $\mc{E}_{T}$ which describes a physical property of the environment that is being investigated; such as environmental temperature, or reflectivity of a surface. As such, quantum patterns can be used to provide a language for pattern recognition in terms of generally quantum resources.
{Formally, a channel pattern can also be represented as \textit{multiset} of binary variables which captures the channel pattern $\bs{i} =\{ i_1, \ldots, i_m\}$. Treating channel patterns as multisets is very useful for concepts and operations used later in this work.}
\par

A channel pattern describes a single instance of a binary quantum multi-channel. For multi-channel discrimination, we are concerned with discriminating ensembles of multi-channels from one another. For this reason, we may define an \textit{image space} $\mc{U} = \{\bs{i}_1, \bs{i}_2, \ldots, \bs{i}_N\}$ as a collection of channel patterns. Then, given an arbitrary image space $\mc{U}$ it possible to define binary quantum pattern recognition as the task of discriminating a statistical ensemble of quantum multi-channels $\{ p_{\bs{i}} ; \mc{E}_{\bs{i}} \}_{\bs{i}\in \mc{U}}$, such that $\{\mc{E}_{\bs{i}}\}_{\bs{i}\in\mc{U}}$ is a set of possible multi-channels, each of which occur according to the probability distribution $\{p_{\bs{i}} \}_{\bs{i}\in\mc{U}}$. 

There are a number of important image spaces that can be used to model physical scenarios for future quantum technologies. Of course, it is essential to consider the uniform binary image space $\bs{i} \in \mc{U}_{m}$ which collects all $2^m$ possible combinations of binary $m$-channels. This image space has been studied from the perspective of optical and thermal pattern recognition \cite{PatternRecog,ThermalPatt} (and can also be referred to as quantum barcode decoding). This is the most difficult quantum pattern recognition scenario, since a discrimination protocol must be able to distinguish between all possible binary patterns. 

\begin{figure}[t!]
\includegraphics[width=0.9\linewidth]{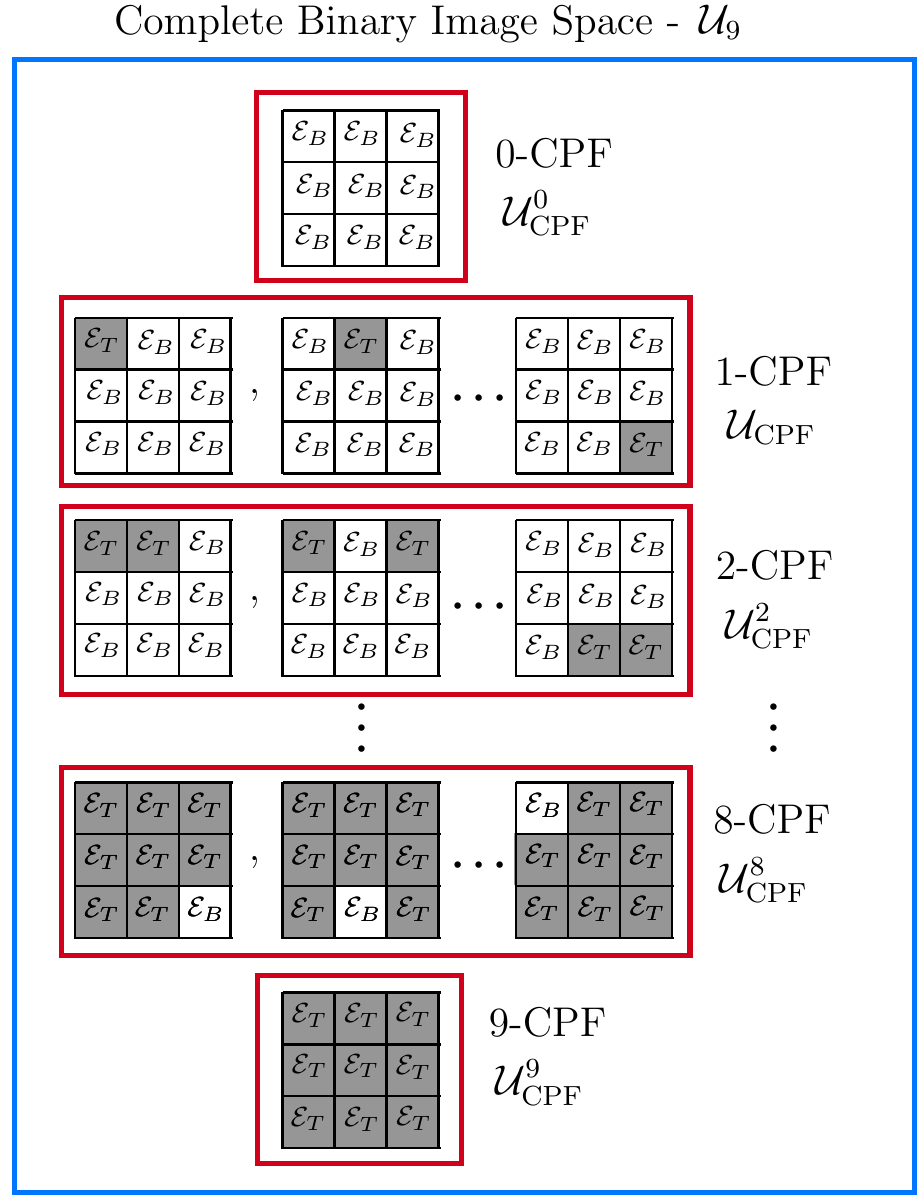}
\caption{Examples of CPF image spaces for $m=9$-length binary channel patterns. Furthermore, the complete set of all binary channel patterns can be decomposed into the sets of $u$-CPF image spaces.}
\label{fig:CPFexplain}
\end{figure}

The challenge of channel position finding (CPF) is capture via a smaller, more specific image space \cite{EntEnhanced}. Indeed, $m$-channel CPF is the task of identifying a single target channel which is hidden amongst $(m-1)$ background channels. This can be captured by considering an image space $\mc{U}_{\text{CPF}}$ which contains all the possible $m$-length patterns that possess exactly one target channel, e.g.~for $m=3$, 
\begin{align}
\begin{aligned}
\mc{U}_{\text{CPF}} &= \{\{B,B,T\},\{B,T,B\},\{T,B,B\}\},\\
&= \{\{0,0,1\},\{0,1,0\},\{1,0,0\}\},
\end{aligned}
\end{align}
where we can equivalently represent channel patterns as binary strings, such that $i_j = 0$ represents a background channel, and $i_j= 1$ represents a target channel. More generally, we can investigate a $u$-CPF image space, which is the set of all $m$-length channel patterns such that each pattern contains precisely $u$ target channels. This image space $\mc{U}_{\text{CPF}}^{u}$ contains a total of $C_m^u = \frac{m!}{u!(m-u)!}$ patterns (where $C_m^u$ is the binomial coefficient). Clearly, when $u=1$, we retrieve the original task of single target CPF. \par

CPF is an extremely useful platform for investigating quantum pattern recognition. It immediately models a number of relevant tasks within target detection, quantum enhanced data-readout, and much more. Furthermore, the construction of effective protocols for multi-channel discrimination in this framework can act as a primitive for studying more complex tasks, and inventing high performance protocols in more challenging settings. In Fig.~\ref{fig:CPFexplain} we illustrate the CPF image spaces and how they decompose the complete binary pattern space. 

\subsection{Multi-Channel Block Protocols}

The most general adaptive discrimination protocol is described as a quantum comb \cite{ChiribellaComb}. This is a quantum circuit board that possesses $M$ ``slots"  which may be filled with instances of $m$-length channel patterns $\mc{E}_{\bs{i}}$. The comb itself is a register with an arbitrary number of quantum systems, initially prepared in some state $\rho_{0}$. Each slot in the comb offers an opportunity to interrogate a multi-channel using quantum systems from the register. Before and after each pattern instance, the discriminator may perform arbitrary joint quantum operations (QOs), which can be assumed to be trace-preserving. This allows for the exploitation of unlimited entanglement shared between input and output, and feedback that can be used to adaptively optimise subsequent pattern interactions. After $M$ adaptive probings, the comb is in its final state $\rho_{\bs{i}}^{M}$, and is subject to an optimal, joint POVM $\{\Pi_{\bs{i}^{\prime} }\}_{\bs{i}^{\prime} \in \mc{U}}$. This result of this measurement can be classically post-processed to infer the channel pattern with some probability of error. Let us denote general adaptive protocols by the label $\mc{P}$, and assuming that any pattern $\mc{E}_{\bs{i}}$ occurs with probability $p_{\bs{i}}$ then the average classification error probability is given by
\begin{equation}
p_{\text{err}}(\mc{P}) \defeq  \sum_{\bs{i} \neq \bs{i}^{\prime} \in \mc{U}} p_{\bs{i}} \text{Tr}\left[ {\Pi}_{\bs{i}^{\prime}} \rho_{\bs{i}}^{M}\right] . \label{eq:p_err_gen}
\end{equation}
Here, the sum runs over all unequal channel patterns in the image space. The generality of $\mc{P}$ makes it very difficult to determine optimal strategies and their performance. Despite advancements in this pursuit \cite{ULMCD,UltLimProDisc}, in practice it is much easier to employ non-adaptive protocols.\par

Block protocols $\mc{B}$ present an extremely important sub-class of discrimination protocol, in which channel patterns are irradiated with $M$ identical and independent copies of some input probe state $\rho^{\otimes M}$.
The classification error probability is now,
\begin{equation}
p_{\text{err}}(\mc{B}) \defeq  \sum_{\bs{i} \neq \bs{i}^{\prime} \in \mc{U}} p_{\bs{i}} \text{Tr}\left[ {\Pi}_{\bs{i}^{\prime}} \rho_{\bs{i}}^{\otimes M}\right] . \label{eq:p_err_block}
\end{equation}
If this protocol makes use of entangled, ancillary idler-modes then it is a {block-assisted protocol} $\mc{B}^{\text{a}}$. Otherwise, it is a {block-unassisted protocol} $\mc{B}^{\text{u}}$. Block-assisted protocols are often highly effective, but demand perfect protection of idler-modes in order to maintain entanglement with the signal-mode. Such protocols also rely on quantum memories in order to store and preserve their idlers, which poses a significant challenge for near-term quantum technologies. \par

We can benchmark multi-channel discrimination performance via general, fidelity-based bounds from Refs.~\cite{LBMont, UBPGM}. Let us define the Bures fidelity as,
\begin{equation}
F(\rho,\sigma) = \| \sqrt{\rho}\sqrt{\sigma} \|_1 = \text{Tr}\sqrt{\sqrt{\rho}\sigma\sqrt{\rho}}.
\end{equation}
Now consider an $m$-length channel image space $\bs{i}\in \mc{U}$ that generates the ensemble of multi-channels $\{p_{\bs{i}}, \mc{E}_{\bs{i}}\}_{\bs{i}\in\mc{U}}$ with prior distribution $\{ p_{\bs{i}} \}_{\bs{i}\in\mc{U}}$. Assuming the use of generic, $M$-copy input states $\rho^{\otimes M}$, then the error probability $p_{\text{err}}$ of this discriminatory protocol is bounded by,
\begin{align}
p_{\text{err}} &\geq \frac{1}{2} \sum_{\bs{i}\neq\bs{i}^{\prime}} p_{\bs{i}} p_{\bs{i}^{\prime}} F^{2M}(\rho_{\bs{i}},\rho_{\bs{i}^{\prime}}) ,\label{eq:LB}\\
p_{\text{err}} &\leq \sum_{\bs{i}\neq\bs{i}^{\prime}} \sqrt{p_{\bs{i}} p_{\bs{i}^{\prime}}} F^{M}(\rho_{\bs{i}},\rho_{\bs{i}^{\prime}}), \label{eq:UB}
\end{align}
where we have used the multiplicativity of the fidelity, $F(\rho^{\otimes M},\sigma^{\otimes M})  = F^M(\rho,\sigma)$ to simplify our bounds. 

Throughout this work we focus on deriving bounds for the worst case discrimination scenario, such that no \textit{a priori} information is known about the probability distribution $\{ p_{\bs{i}}\}_{\bs{i}\in \mc{U}}$. That is, we assume that all channel patterns may occur with equal probability, $p_{\bs{i}} = 1/|\mc{U}|, \forall \bs{i}\in\mc{U}$. In this case, it is useful to define a \textit{total error quantity} for an arbitrary block protocol,
\begin{equation}
D[\mc{U},M] \defeq \sum_{\bs{i}\neq\bs{i}^{\prime}\in\mc{U}} F^{M}(\rho_{\bs{i}},\rho_{\bs{i}^{\prime}}), \label{eq:GenB}
\end{equation}
such that the error probability bounds in Eqs.~(\ref{eq:LB}) and (\ref{eq:UB}) can be more succinctly written as
\begin{equation}
\frac{1}{2|\mc{U}|^2} D[\mc{U},2M] \leq p_{\text{err}} \leq \frac{1}{|\mc{U}|} D[\mc{U},M] .
\end{equation}

\subsection{Bosonic Gaussian Quantum Patterns \label{sec:BQPats}}
In this work, we focus on the discrimination of Gaussian phase insensitive (GPI) channels. These are a family of bosonic quantum channels ubiquitous in quantum metrology, communications, and computation. They are parameterised by transmissivity $\tau$ and induced environmental noise $\nu$ \cite{GaussRev, SerafiniCV}. GPI channels preserve the Gaussianity of input states, and are described by symplectic transformations on the covariance matrix (CM) and first moments vector of a Gaussian state. If $V$ is an  CM of a single-mode Gaussian state (with zero first moments), then its transformation under the action of a GPI channel with transmissivity $\tau$ and environmental noise $\nu$ is given by
\begin{equation}
V \mapsto (\sqrt{\tau} I) \>V\> (\sqrt{\tau} I)^T +  \nu I, \label{eq:smCMtrans}
\end{equation}
where $I$ is the $2\times 2$ identity matrix, and $T$ denotes the transposition operator. Sequences of GPI channels with different transmissivity/noise properties can therefore be used to construct quantum channel patterns.\par
Setting $\tau = \eta$, and $\nu = \frac{1- \eta}{2}$ such that $\eta \in (0,1)$, describes a bosonic pure-loss channel. The discrimination of bosonic loss is integral to the task of \textit{quantum reading} \cite{PirBD,PatternRecog}, in which a user is tasked with retrieving classical data from an optical memory, using quantum resources and measurements to enhance performance. Classical information can be stored in cells of different transmissivity, $\eta_j \in \{\eta_B,\eta_T\}$. Constructing practical, quantum-enhanced protocols for multi-lossy channel discrimination is invaluable for high-rate data-readout/transfer, and optical pattern recognition.\par
Alternatively one may investigate patterns constituent of thermal-loss/amplifier or additive-noise channels, in which environmental noise is considered. In this work we focus on the idealised absence of loss, describing a Gaussian additive-noise channel such that $\tau = 1$ and $\nu > 0$. The classification of channels with identical transmissivities ($\tau_B = \tau_T$) but different noise properties ($\nu_j \in \{\nu_B,\nu_T\}$) is known as \textit{environment localisation} \cite{OptEnvLoc}. This has relevant applications in thermal imaging, and eavesdropper identification in multi-mode communication channels.\par
In order to describe the transformation induced by quantum channel patterns on global, $m$-mode Gaussian states, let us define the useful matrix function 
\begin{equation}
I_{[a]_{\bs{i}}} \defeq \bigoplus_{j=1}^m \begin{pmatrix} a_{i_j} & 0 \\ 0 & a_{i_j} \end{pmatrix},
\end{equation} 
where $a$ is some physical channel property that varies with respect to position in the pattern, $\bs{i}$. If $V$ is an $m$-mode CM of a Gaussian state (with zero first moments), then transformation under the action of $\mc{E}_{\bs{i}}$ is given by
\begin{equation}
V_{\bs{i}} = (I_{[\sqrt{\tau}]_{\bs{i}}}) \>V\> (I_{[\sqrt{\tau}]_{\bs{i}}})^T +  I_{[\nu]_{\bs{i}}}. \label{eq:CMtrans}
\end{equation}\par

\section{Unassisted Multi-Channel Discrimination\label{sec:Unassisted}}
\subsection{Unassisted Block Protocols}

Since the use of idlers is not always practical, it is interesting to ask how we may design a quantum enhanced block protocol without them. That is, can we identify a block-unassisted protocol $\mc{B}^{\text{u}}$ that can also provide quantum advantage over the best classical protocol? To do this, one may introduce entanglement between \textit{multipartite signal states} rather than relying on idler-assisted entanglement. An important question arises when employing multipartite states for multi-channel discrimination: How should entangled probe states be distributed over the channel pattern in order to enhance the block protocol? For an $m$-length channel pattern there are many ways in which these probe states can be distributed. We refer to \textit{probe-domains} as regions of a channel pattern over which multipartite probe states are irradiated. 

Consider an $m$-length channel pattern. A block-unassisted protocol using multipartite states allocates an $M$-copy, $n$-mode probe state to interact with a subset of $n \leq m$ channels whose labels are contained within the set $\bs{s} = \{s_1,\ldots,s_n\},~s_i \in [1,m]$. In turn, we can describe a probe sub-state $\rho_{\bs{s}}^{\otimes M}$ which is potentially entangled over the probe-domain $\bs{s}$ but is separable with respect to any subset of channels that are not contained in $\bs{s}$. Considering a number of multipartite states irradiated over different probe-domains, we may define an $n$-partite discrete distribution (or partition set) of channel sub-patterns \begin{equation}
 \mc{S} = \bigcup_{j=1}^n \{ \bs{s}_j \},~ \exists \> j \text{ s.t } k \in \bs{s}_j, \forall k \in [1,m],
\end{equation} 
where the condition on the RHS demands that each channel in the pattern is contained in at least one subset. One can then define a global, unassisted, multipartite probe state
\begin{equation}
\rho_{\mc{S}}^{\otimes M} = \bigotimes_{j=1}^n \rho_{\bs{s}_j}^{\otimes M}.
\end{equation}
As introduced in Ref.~\cite{MP_IdlerFree}, we can identify two unique regimes of unassisted block protocols with multipartite probe states: Fixed block protocols and dynamic block protocols. These strategies differ in their disjointedness properties of the probe-domain distribution.

\subsection{Fixed Block Protocols}

If the probe-domain distribution is disjoint (which we label via the subscript $\mc{S}_{\text{d}}$) then no two subsets are permitted to share the same channel. The probe-domain distribution takes the form 
\begin{equation} 
 \mc{S}_{\text{d}} = \bigcup_{j=1}^n \{ \bs{s}_j \}, \text{ such that ${\bs{s}_j \cap \bs{s}_k} = \varnothing$, $\forall j,k$.}
\end{equation}
{where the condition on the RHS asserts the pairwise disjointedness of all subsets of channel labels in $\mc{S}_{\text{d}}$}. Again we demand that all channels labels are accounted for, but now each label is strictly contained in a single subset $\bs{s}_j$.\par
Since all probe-domains are disjoint, then all the sub-states $\rho_{\bs{s}_j}^{\otimes M}$ can be simultaneously irradiated on their respective sub-regions of the channel pattern without any overlaps. We may describe this strategy as a \textit{fixed block protocol}, $\mc{B}_{\text{fix}}^{\text{u}}$, inspired by its operational interpretation. Disjointedness in the distribution $\mc{S}$ implies that each probe sub-state is \textit{fixed} on a specific region of channels over the complete course of discrimination.\par

\subsection{Dynamic Block Protocols}

If the probe-domain partition set is non-disjoint, then there will exist subsets $\bs{s}\in\mc{S}$ that share some channels. That is, multipartite probe states may overlap over sub-regions of the channel pattern, and the overall block-unassisted protocol must be modified to address this property. Let us relabel the non-disjoint probe-domain distribution $\mc{S}_{\text{nd}}$ in order to make clear the distinction between disjoint and non-disjoint probe-domain distributions.

A non-disjoint distribution can always be decomposed into an $r$-length sequence of disjoint sets. More precisely,
\begin{equation}
 \mc{S}_{\text{nd}} = \bigcup_{i=1}^r \mc{S}_{\text{d}}^i  = \bigcup_{i=1}^r  \bigcup_{\bs{s} \in \mc{S}_{\text{d}}^i} \{ \bs{s} \}
\end{equation}
such that $\mc{S}_{\text{d}}^i$ denotes a disjoint sub-collection of probe-domains. Each $\mc{S}_{\text{d}}^i$ need not contain all channel labels $1,\ldots,m$, however the overall distribution $\mc{S}_{\text{nd}}$ must account for all channels. \par

This decomposition informs us that a non-disjoint distribution of multipartite probes can be irradiated on a multi-channel over the course of $r$ rounds of disjoint pattern interaction. That is, one can apply multipartite probe states with overlapping domains at different disjoint rounds in a protocol, contributing to a single complete round of discrimination.
This defines a \textit{dynamic block protocol}, $\mc{B}_{\text{dy}}^{\text{u}}$, i.e.~since the probe configuration ``moves" throughout the protocol, it can be described as dynamic (see Ref.~\cite{MP_IdlerFree} for more details).

The investigation and utilisation of dynamic protocols require some important clarifications. Firstly, one must be careful when comparing $M$-copy probe state resources within fixed/dynamic protocols. A disjoint probe-domain distribution using $M$-copy probe states will interact with each channel precisely $M$ times. However, a non-disjoint distribution can have overlapping probe-domains, therefore some channels may be probed more than $M$ times over the course of $r$ disjoint rounds of pattern interaction. For this reason, we must instead study the \textit{average channel use} $\bar{M}$ which refers to the average number of times a channel is probed across a complete, $M$-copy block protocol. For an arbitrary probe-domain distribution $\mc{S}$, the average channel use is simply
\begin{equation}
\bar{M}(\mc{S}) = \frac{M}{m} \sum_{\bs{s}\in\mc{S}} |\bs{s}|.
\end{equation}
Clearly, if we utilise $M$-copy probes in a fixed protocol we will return $\bar{M}(\mc{S}_{\text{d}}) = M$, since there are no overlaps and thus $\sum_{\bs{s}\in\mc{S}_{\text{d}}} |\bs{s}| = m$. But a dynamic protocol with overlapping channels will have $\bar{M}(\mc{S}_{\text{nd}}) \geq M$. Therefore, when comparing block protocols, we demand that they have an identical average channel use. By fixing the average channel use $\bar{M}$, we can then compute the appropriate number of probe copies required for a fair comparison between different protocols,
\begin{equation}
M(\mc{S}) = \frac{m}{\sum_{\bs{s}\in\mc{S}} |\bs{s}|} \bar{M}.
\end{equation}\par
The average channel use provides a much more accurate interpretation of channel interaction for multipartite block protocols. Interestingly, non-disjoint probe-domain distributions can invoke fractional numbers of probe-copies $ 0 < M < 1$ while still using an average channel use $\bar{M} = 1$. This has the operational interpretation of probabilistically interacting with the quantum channels in each disjoint round. 
 
 \subsection{Dynamic to Fixed Block Protocol Transformation \label{sec:Transf}}
 
 Given an image space $\mc{U}$ and an ensemble $\{ p_{\bs{i}} ; \mc{E}_{\bs{i}} \}_{\bs{i}\in\mc{U}}$ that defines a multi-channel discrimination problem, we can always utilise the upper and lower bounds on the error probability of classification from Eqs.~(\ref{eq:LB}) and (\ref{eq:UB}) to benchmark their performance. In the case of unassisted block protocols, this requires that we choose a probe-domain distribution $\mc{S}$, a class of input state $\rho$ to build multipartite probe states, and a number of probe copies $M = M(\mc{S})$. In many cases, the error bounds can then be readily computed.\par
 
 For dynamic block protocols with a non-disjoint $\mc{S}_{\text{nd}}$, a further modification is required to investigate the error bounds. An overlapping channel within $\mc{S}_{\text{nd}}$ can be probed by two independent and separable quantum states, irradiated over unique probe-domains. These interactions will happen at different disjoint rounds in the dynamic block protocol. We cannot treat these instances as channel interactions within the same Hilbert space, because they are completely separable, taking place at different times.\par
 
 To deal with this, we can invoke a transformation of the dynamic block protocol into a fixed representation. All instances of overlapping channels can be considered to introduce a new quantum channel which is an identical copy of its original, and is concatenated onto the original channel pattern. For a non-disjoint probe-domain distribution $\mc{S}_{\text{nd}}$ with $m_{\text{ov}}$ overlaps, an $m$-length channel pattern will be mapped to a $(m+m_{\text{ov}})$ modified channel pattern with $m_{\text{ov}}$ ``copy channels". More precisely, this mapping follows,
 \begin{equation}
 \bs{i} = \{i_1,\ldots,i_m\} \mapsto \bs{\nu_i} = \biguplus_{\bs{s}\in\mc{S}_{\text{nd}}} \{ i_j \}_{j\in\bs{s}}.
 \end{equation}
 Here, we have denoted $\bs{\nu_i}$ as the modified quantum pattern, {and utilised the multiset union operator $\uplus$ which permits the concatenation of repeated channel labels in the modified image. For example, for $m=4$ length channel patterns $\bs{i} = \{i_1,i_2,i_3,i_4\}$ and a probe-domain distribution $\mc{S}_{\text{nd}} = \{\{1,2,3\}, \{2,3,4\}\}$, modified channel patterns would admit the form $\bs{\nu_i} = \{i_1,i_2,i_3,i_2,i_3,i_4\}$.} By iterating over all patterns in the image space, a modified image space is constructed $\mc{U} \rightarrow \{ \bs{\nu_i} \}_{\bs{i}\in\mc{U}}$.\par
 
In this way a dynamic block protocol can be equivalently represented as a fixed block protocol, with the simultaneous irradiation of all the sub-states on a modified image space. The multi-channel ensemble that embodies a discrimination problem is also transformed $\{ p_{\bs{i}} ; \mc{E}_{\bs{i}} \}_{\bs{i}\in\mc{U}} \mapsto \{ p_{\bs{i}} ; \mc{E}_{\bs{\nu_i}} \}_{\bs{i}\in\mc{U}}$, where the modified channel patterns take the form,
\begin{equation}
\mc{E}_{\bs{i}} \mapsto \mc{E}_{\bs{\nu_i}} = \bigotimes_{\bs{s}\in\mc{S}_{\text{nd}}} \bigotimes_{j\in\bs{s}} \mc{E}_{i_j}.
 \end{equation}
Thanks to this transformation from dynamic to fixed protocol, we can now conveniently analyse the error bounds from Eqs.~(\ref{eq:LB}) and (\ref{eq:UB}). Assuming the use of a global, unassisted probe state in accordance with a generally non-disjoint $\mc{S}$,
\begin{equation}
\rho_{\mc{S}} = \bigotimes_{j=1}^n \rho_{\bs{s}_j} \rightarrow \rho_{\mc{S},\bs{\nu_i}}= \mc{E}_{\bs{\nu_i}}( \rho_{\mc{S}}),
\end{equation}
then its error probability of classification is bounded by
\begin{align}
p_{\text{err}} &\geq \frac{1}{2} \sum_{\bs{i}\neq\bs{i}^{\prime}\in\mc{U}} p_{\bs{i}} p_{\bs{i}^{\prime}} F^{2M}(\rho_{\mc{S},\bs{\nu_i}},\rho_{\mc{S},\bs{\nu}_{\bs{i}^{\prime}}}) ,\label{eq:modLB}\\
p_{\text{err}} &\leq \sum_{\bs{i}\neq\bs{i}^{\prime}\in\mc{U}} \sqrt{p_{\bs{i}} p_{\bs{i}^{\prime}}} F^{M}(\rho_{\mc{S},\bs{\nu_i}},\rho_{\mc{S},\bs{\nu}_{\bs{i}^{\prime}}}). \label{eq:modUB}
\end{align}
 For more insight into unassisted multi-channel discrimination protocols, we refer the reader to Ref.~\cite{MP_IdlerFree}.

\section{Analytical Methods \label{sec:Methods}}

 \subsection{Bosonic Unassisted Protocols}
 
In this work we assume the use of a specific form of multipartite quantum probe for the discrimination of bosonic channel patterns; the CV-GHZ state $\Phi_{\mu}^m$ \cite{multimodeGHZ}. This a fully-symmetric, $m$-mode Gaussian state which (assuming zero first moments) is fully characterised by its covariance matrix (CM),
\begin{equation}
\Phi_{\mu}^m \rightarrow V_{\mu}^m = \begin{pmatrix}
\mu I & \Gamma & \ldots & \Gamma \\
\Gamma  & \mu I & \ldots & \Gamma \\
\vdots &\ddots  & \ddots & \vdots\\
\Gamma & \Gamma & \ldots & \mu I
\end{pmatrix},
\begin{tabular}{ l } 
$\mu \defeq N_S + \frac{1}{2}$,  \\
\\
$\Gamma \defeq \text{diag}(c_1,c_2)$,
\end{tabular}
\end{equation}
where $N_S$ is the mean photon number irradiated on each channel, and $c_1,c_2$ are correlation parameters. Setting 
\begin{equation}
c_1 = -c_2  = c_{\text{max}} \defeq (m-1)^{-1}{\sqrt{\mu^2 - {1}/{4} }}
\end{equation} 
maximises the quantum correlations between each mode, such that all $m$-modes in the global state are entangled. When $m=2$ these become TMSV states, which we denote by $\varphi_{\mu} \defeq \Phi_{\mu}^{m=2}$.
 
When utilised within a multi-channel discrimination setting, CV-GHZ states allow us to exploit inter-probe entanglement to enhance output state distinguishability. These states are extremely versatile since they are simply parameterised by their intensity of squeezing, and can be readily used to construct $n$-partite entangled states. Indeed, given a general (disjoint or non-disjoint) $n$-partite probe-domain distribution $\mc{S}$ we may define a multipartite CV-GHZ state,
\begin{equation}
\Phi_{\mu}^{\mc{S}} = \bigotimes_{j=1}^n \Phi_{\mu}^{\bs{s}_j} \rightarrow V_{\mu}^{\mc{S}} = \bigoplus_{j=1}^n V_{\mu}^{\bs{s}_j}
\end{equation}
where $\Phi_{\mu}^{\bs{s}_j}$ and $V_{\mu}^{\bs{s}_j}$ describe an entangled CV-GHZ sub-state irradiated over the $j^{\text{th}}$ channel sub-pattern $\bs{s}_j$. After interacting with an arbitrary channel pattern $\mc{E}_{\bs{s}}$, the output states from the protocol can be represented via
\begin{equation}
 \Phi_{\bs{i},\mu}^{\mc{S}} \defeq \mc{E}_{\bs{i}}(\Phi_{\mu}^{\mc{S}}) \rightarrow V_{\bs{i},\mu}^{\mc{S}},
\end{equation} 
where $V_{\bs{i},\mu}^{\mc{S}}$ is transformed according to Eq.~(\ref{eq:CMtrans}).\par

\subsection{Fidelity Degeneracies}

In the most general setting, we can always compute the fidelity-based performance bounds from Eqs.~(\ref{eq:LB}) and (\ref{eq:UB}) numerically, by iterating over all the unequal pairs of channel patterns in an image space $\mc{U}$ and computing the fidelity between their respective output states. However, for larger channel patterns and image spaces, this numerical approach can become computationally infeasible. For a deeper understanding of discrimination protocols and greater efficiency, it is useful to unveil analytical properties of these bounds. 

The use of fully-symmetric probe states, like the CV-GHZ states, is extremely useful for simplifying the calculation of these fidelity bounds. When assuming the use of these states, we find that the fidelity is often highly degenerate. That is, a particular output fidelity $F(\Phi_{\bs{i},\mu},\Phi_{\bs{i}^{\prime},\mu})$ is often equivalent to the output fidelity of many other pairs of channel patterns. This is more precisely explained for $m$-mode CV-GHZ states in Theorem~\ref{theorem:Degen1}.

\begin{theorem}\label{theorem:Degen1}
Consider two subsets of $u/v$-CPF image spaces respectively $\mc{X}$ and $\mc{Y}$, where
\begin{gather} 
d_h(\bs{i},\bs{i}^{\prime}) = d > 0,\\
\forall \bs{i} \in \mc{X}\subseteq\mc{U}_{\text{\emph{CPF}}}^u, \bs{i}^{\prime} \in \mc{Y} \subseteq\mc{U}_{\text{\emph{CPF}}}^v.
\end{gather} 
where $d_h(\bs{i},\bs{i}^{\prime})$ is the Hamming distance between patterns $\bs{i},\>\bs{i}^{\prime}$. This means that $\mc{X}$ and $\mc{Y}$ are Hamming distance preserving subsets of $~\mc{U}_{\text{\emph{CPF}}}^u$ and $\mc{U}_{\text{\emph{CPF}}}^v$. Consider two GPI binary channel patterns $\mc{E}_{\bs{i}},\>\mc{E}_{\bs{i}^{\prime}}$ with identical physical properties, and an $m$-mode CV-GHZ probe state $\Phi_{\mu}$. 
It follows that the fidelity
\begin{equation}
F\left[\mc{E}_{\bs{i}}(\Phi_{\mu}),\mc{E}_{\bs{i}^{\prime}}(\Phi_{\mu}) \right] = F(\Phi_{\bs{i},\mu},\Phi_{\bs{i}^{\prime},\mu} ) ,
\end{equation}
is degenerate for all $\mc{X} \ni \bs{i} \neq \bs{i}^{\prime} \in \mc{Y}$. 
\end{theorem}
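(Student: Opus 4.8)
The plan is to exploit the full permutation symmetry of the CV-GHZ state to show that the output fidelity is invariant under any simultaneous relabelling of the two patterns, and then to verify that fixing $u$, $v$ and $d$ pins down a single orbit under such relabellings. First I would observe that the CV-GHZ covariance matrix $V_{\mu}^m$ is manifestly invariant under mode permutations: it carries $\mu I$ on every diagonal block and the same $\Gamma$ on every off-diagonal block, so for any permutation $\pi$ of the $m$ modes with associated (orthogonal, symplectic) permutation matrix $P_{\pi}$ acting on quadrature pairs one has $P_{\pi} V_{\mu}^m P_{\pi}^T = V_{\mu}^m$.

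Next I would establish the permutation covariance of the output states. The GPI transformation of Eq.~(\ref{eq:CMtrans}) acts block-diagonally, mode by mode, with the local symplectic determined solely by the pattern entry $i_j$. Consequently, relabelling the pattern $\bs{i}\mapsto \pi(\bs{i})$ produces exactly the permuted output covariance matrix, $V_{\pi(\bs{i}),\mu} = P_{\pi} V_{\bs{i},\mu} P_{\pi}^T$, once the invariance of $V_{\mu}^m$ is used. Since the Bures fidelity between Gaussian states depends on their covariance matrices only through simultaneous symplectic congruence $V \mapsto S V S^T$, and $P_{\pi}$ is such an $S$, the fidelity is unchanged under the simultaneous map $(\bs{i},\bs{i}^{\prime})\mapsto(\pi(\bs{i}),\pi(\bs{i}^{\prime}))$. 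This reduces the claim to a purely combinatorial statement about orbits of pattern pairs under the diagonal action of the symmetric group.

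The remaining step is to count the invariants of a pair of binary strings under simultaneous permutation. Writing $a,b,c,e$ for the number of positions at which $(\bs{i},\bs{i}^{\prime})$ read $(T,T),(T,B),(B,T),(B,B)$ respectively, these four counts form a complete set of orbit invariants, and I would solve the linear system $a+b=u$, $a+c=v$, $b+c=d$, $a+b+c+e=m$. This yields $a=(u+v-d)/2$, $b=(u-v+d)/2$, $c=(v-u+d)/2$ and $e=m-(u+v+d)/2$, so the quadruple $(a,b,c,e)$ is uniquely fixed by $(m,u,v,d)$. Hence any two pairs drawn from $\mc{X}\times\mc{Y}$ with common Hamming distance $d$ share the same co-occurrence counts, therefore lie in the same permutation orbit, and by the invariance established above they share the same output fidelity, which is precisely the asserted degeneracy.

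The main obstacle I anticipate is the second step: making the covariance-level covariance fully rigorous, in particular confirming that a pattern relabelling is implemented exactly by conjugation with the orthogonal symplectic $P_{\pi}$ on the output covariance matrices, and that the Gaussian fidelity formula is invariant under this congruence. Once that is secured, the completeness of $(a,b,c,e)$ as orbit invariants and the solution of the linear system are entirely routine.
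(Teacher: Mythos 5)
Your proposal is correct, and its engine is the same as the paper's: permutation symmetry of the CV-GHZ input combined with invariance of the fidelity under a common unitary applied to both output states (you phrase this as a common orthogonal-symplectic congruence of the two output covariance matrices; the paper phrases it abstractly through the GUS property $\Phi_{\bs{i},\mu} = S^{u}_{\bs{i}}\Phi_{\bs{j},\mu}S^{u\dagger}_{\bs{i}}$ of Lemma~\ref{lemma:GUS} together with $F(U\rho U^{\dagger},U\sigma U^{\dagger})=F(\rho,\sigma)$). Where you genuinely add something is the final combinatorial step. The paper's proof establishes only the forward direction---that a simultaneous permutation of both patterns preserves the target numbers and the Hamming distance---and from this directly asserts degeneracy on Hamming-distance-preserving subsets; it never verifies the converse, namely that any two pairs sharing the invariants $(u,v,d)$ lie in a \emph{single} orbit of the diagonal permutation action, which is what the degeneracy claim actually requires. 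Your co-occurrence counts $(a,b,c,e)$ for positions reading $(T,T),(T,B),(B,T),(B,B)$, with the linear system forcing $a=(u+v-d)/2$, $b=(u-v+d)/2$, $c=(v-u+d)/2$, $e=m-(u+v+d)/2$, supply exactly this missing transitivity argument, so your write-up is, on this point, more complete than the paper's. What the paper's abstract GUS formulation buys in exchange is reusability: the identical lemma-plus-proof template is invoked again for the $k_{\max}$-LNN output states (Lemma~\ref{lemma:GUSMax} and Theorem~\ref{theorem:DegenMax}), whereas your covariance-level argument would have to be redone for the modified $m(m-1)$-mode patterns. Finally, the obstacle you flag is not a real one: GPI channels act blockwise on quadrature pairs via Eq.~(\ref{eq:CMtrans}) with the local symplectic fixed by the pattern entry, the permutation matrix on quadrature pairs is both orthogonal and symplectic, and all states involved have zero first moments, so the Gaussian fidelity depends on the covariance matrices alone and is manifestly invariant under the simultaneous congruence.
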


The proof for this theorem is found in Appendix \ref{sec:FidDegen}. This degeneracy property is enormously useful. If we consider an arbitrary image space $\mc{U}$, then the total number of ways that we can select unequal pairs (and therefore compute potential output fidelities in Eqs.~(\ref{eq:LB}) and (\ref{eq:UB})) of channel patterns is $|\mc{U}|(|\mc{U}|-1)$. For large pattern spaces, this will be enormous. Luckily, Theorem 1 informs us that when using fully symmetric CV-GHZ states as input quantum probes, the number of unique fidelities that may occur when probing an arbitrary image space is dramatically smaller than this. Indeed, this fidelity degeneracy property tells us that if there are exactly $g_{\text{fid}}$ unique output fidelities, typically $g_{\text{fid}} \ll |\mc{U}|(|\mc{U}|-1) $.\par

It is useful to introduce the following convenient notation to compactly describe the unique output fidelities. Let us collect the channel pattern properties that invoke a degenerate output fidelity. We define a pair of \textit{target numbers} $u$ and $v$ as the number of target channels in $\bs{i}$ and $\bs{i}^{\prime}$ respectively. The fidelity degeneracy also depends on the Hamming distance between these patterns, ${d_h}(\bs{i},\bs{i}^{\prime}) = d$. Along with squeezing $\mu$, these parameters fully characterise any potential output fidelity that can be obtained by irradiating $m$-mode CV-GHZ states over $m$-length channel patterns. As such, we can collect them into the following object,
\begin{equation}
[u:v\>|\>d] \mapsto [\text{Target no. tuple$\>|\>$Hamming distance}].
\end{equation}
In doing so, we can succinctly define a unique fidelity function,
\begin{gather}
F_{[u:v|d]} \defeq F(\Phi_{\bs{i},\mu},\Phi_{\bs{i}^{\prime},\mu}),\\
\forall \bs{i}\in\mc{U}_{\text{CPF}}^u, \text{  } \bs{i}^{\prime}\in\mc{U}_{\text{CPF}}^v, \text{ such that } {d_h}(\bs{i},\bs{i}^{\prime}) = d.
\end{gather}
This states the output fidelity between $m$-mode CV-GHZ states which have probed a pair of $m$-length binary channel patterns, which $u$ and $v$ target channels respectively, and which have Hamming distance $d$. 

Let us take an example: Consider a pair of $m=4$ length channel pattern sequences, represented by binary strings,
\begin{gather}
\bs{i} = \{0,0,1,1\},~ \bs{i}^{\prime} = \{1,1,0,1\}.
\end{gather}
There are $u=2$ target channels in $\bs{i}$, and $v=3$ target channels in $\bs{i}^{\prime}$. Meanwhile, their Hamming distance is $d_h(\bs{i},\bs{i}^{\prime}) = 3$. By probing the channel patterns $\mc{E}_{\bs{i}}$ and $\mc{E}_{\bs{i}^{\prime}}$ with identical $m=4$ mode CV-GHZ states $\Phi_{\mu}$, the fidelity between the two possible output states is given by 
\begin{equation} 
F( \Phi_{\bs{i},\mu}, \Phi_{\bs{i}^{\prime},\mu}) = F_{[2:3|4]}.
\end{equation}

More generally, we can collect all unique fidelities into a single mathematical object which we call a \textit{unique fidelity set}. To do so we provide the following definition.

\begin{defin} \emph{(Unique Fidelity Set):} 
A unique fidelity set $\bs{F}_m$ is the set of all non-degenerate and non-unit output fidelities for $m$-mode CV-GHZ states which are generated when probing $m$-length binary channel patterns. That is,
\begin{equation}
\bs{F}_{m} = \left\{ F_{[u:v|d]} ~\Big|~ \substack{v \geq u \in \{0,\ldots,m\}, \\
d \in \{v-u, \ldots, \min\{u+v, 2m - (u+v)\}} \right\}.
\end{equation}
This iterates over all valid combinations of target numbers $u,v$ and Hamming distances $d$ for $m$-mode states and $m$-length binary multi-channels.
\end{defin}
For example, it is easy to see that for $m=2$ (TMSV states irradiated over $m=2$ length channel patterns) the unique fidelity set is simply
\begin{align}
\begin{aligned}
\bs{F}_{2} &= \{ {F}_{[0:1|1]},  \> {F}_{[0:2|2]}, \> {F}_{[1:1|1]}  ,\> {F}_{[1:2|1]}  \},\\
 &=\{ {F}_{0:1},  \> {F}_{0:2}, \> {F}_{1:1}  ,\> {F}_{1:2} \},
\end{aligned}
\label{eq:TMSFid}
\end{align} 
where in the second line we drop the Hamming distance label $d$, since there is only one unique Hamming distance per pair of target numbers.

\subsection{General Analytical Method \label{sec:GenMeth}}

Fidelity degeneracies are at the crux of our analytical method for simplifying discrimination error bounds in the unassisted multi-channel setting. They tell us that when using CV-GHZ states over binary channel patterns, the number of output fidelities required for computing the bounds in Eqs.~(\ref{eq:LB}) and (\ref{eq:UB}) is greatly diminished. 

Consider an $n$-partite probe-domain distribution ${\mc{S} = \{\bs{s}_j\}_{j=1}^{n}}$ which constructs an unassisted probe state to be irradiated over $m$-length channel patterns. Let us at first assume that $\mc{S}$ is disjoint, and thus describes a fixed-protocol. Given an $m$-length channel pattern $\bs{i}$, we can split this up into $n$ \textit{sub-patterns} $\bs{i_s}$, which are used to define regions of multi-channels within the global pattern. That is,
\begin{equation}
 \bs{i} = \bigcup_{\bs{s}\in \mc{S}} \bs{i}_{\bs{s}} = \bigcup_{\bs{s}\in \mc{S}} \{ i_j \}_{j\in \bs{s}}
\end{equation}
Hence, each sub-pattern $\bs{i}_{\bs{s}}$ corresponds to a multi-channel $\mc{E}_{\bs{i}_{\bs{s}}}$ which is probed by a CV-GHZ state. Using this fixed protocol, consider the fidelity between the potential output states associated with probing two binary channel patterns $\bs{i}$ and $\bs{i}^{\prime}$, given by $F( \Phi_{\bs{i},\mu}^{\mc{S}}, \Phi_{\bs{i}^{\prime},\mu}^{\mc{S}})$. We can exploit the multiplicativity of the fidelity in order to write
\begin{align}
F( \Phi_{\bs{i},\mu}^{\mc{S}}, \Phi_{\bs{i}^{\prime},\mu}^{\mc{S}}) &= F\Big( \bigotimes_{\bs{s}\in\mc{S}} \Phi_{\bs{i}_{\bs{s}},\mu}^{\bs{s}}, \bigotimes_{\bs{s}\in\mc{S}} \Phi_{\bs{i}_{\bs{s}}^{\prime},\mu}^{\bs{s}}\Big),\\
&= \prod_{\bs{s}\in\mc{S}} F(\Phi_{\bs{i}_{\bs{s}},\mu}^{\bs{s}},\Phi_{\bs{i}_{\bs{s}}^{\prime},\mu}^{\bs{s}}). \label{eq:DisjFidel}
\end{align}
That is, the total output fidelity is equal to the product of all the output fidelities from the local sub-patterns $\bs{i}_{\bs{s}}$ being probed by $|\bs{s}|$-mode CV-GHZ states. 

Crucially, knowledge of the CV-GHZ state fidelity degeneracies allows us to simplify this product. Indeed, given a probe domain $\bs{s}$, we know the sub-pattern output fidelity must be an element of the unique fidelity set,
\begin{equation}
F(\Phi_{\bs{i}_{\bs{s}},\mu}^{\bs{s}},\Phi_{\bs{i}_{\bs{s}}^{\prime},\mu}^{\bs{s}}) 
\in \bs{F}_{|\bs{s}|}.
\end{equation}
Therefore, it is desirable to obtain a general expression for the total error quantity in Eq.~(\ref{eq:GenB}) which is exclusively constructed in terms of fidelities from the unique fidelity sets. In doing this, we can hugely simplify the error bounds in Eqs.~(\ref{eq:LB}) and (\ref{eq:UB}) by removing the necessity to compute all the possible output fidelities within a potentially very large image space. In this way, we can construct a general tool for all unassisted multi-channel discrimination protocols using fully symmetric input states.

Let us define a counting function $t_{\bs{i}} \defeq \sum_{j=1}^{|\bs{i}|} i_j$ which counts the number of target channels contained within a channel pattern $\bs{i}$. Then given a probe-domain distribution $\mc{S}$, we can equivalently express Eq.~(\ref{eq:DisjFidel}) via
\begin{align}
F( \Phi_{\bs{i},\mu}^{\mc{S}}, \Phi_{\bs{i}^{\prime},\mu}^{\mc{S}}) 
&= \prod_{\bs{s}\in\mc{S}} \prod_{F_{[u:v|d]} \in \bs{F}_{|\bs{s}|}} F_{[u:v|d]}^{\delta_{[u:v|d]}(\bs{i}_{\bs{s}},\bs{i}_{\bs{s}}^{\prime})},
\end{align}
where we define the Kronecker-delta type function which selects the appropriate sub-fidelity from $\bs{F}_{|\bs{s}|}$ associated with a particular sub-pattern pair,
\begin{equation}
\delta_{[u:v|d]}(\bs{j},\bs{j}^{\prime}) = 
\begin{cases}
1,  &[t_{\bs{j}}:t_{\bs{j}^{\prime}} | d_h(\bs{j}, \bs{j}^{\prime})] = [u:v|d],\\
0, &\text{otherwise}.
\end{cases}
\end{equation}
Here, $\delta_{[u:v|d]}$ is used to correctly remove or apply contributions from the unique fidelity set dependent on the nature of the sub-patterns. As a result, we can express the total error quantity from Eq.~(\ref{eq:GenB}) in a very general way,
\begin{equation}
D_{\bs{F}}[\mc{U},M] = \sum_{\bs{i}\neq\bs{i}^{\prime} \in \mc{U}} \prod_{\bs{s}\in \mc{S}}  \prod_{F_{[u:v|d]} \in \bs{F}_{|\bs{s}|}} \hspace{-3mm}  F_{[u:v|d]}^{M \delta_{[u:v|d]}(\bs{i}_{\bs{s}},\bs{i}_{\bs{s}}^{\prime})}, \label{eq:gen_err_quant}
\end{equation}
This simply the total error quantity rewritten exclusively in terms of unique output fidelities. Using Eqs.~(\ref{eq:LB}) and (\ref{eq:UB}), the error probability can then be bounded according to
\begin{equation}
\frac{1}{2|\mc{U}|^2}D_{\bs{F}}[\mc{U},2M] \leq p_{\text{err}} \leq \frac{1 }{|\mc{U}|} D_{\bs{F}}[\mc{U},M].
\end{equation}\par

The above method applies directly to dynamic protocols as well as fixed protocols. If the probe-domain distribution being considered is non-disjoint, one need only perform the dynamic to fixed protocol transformation from Section~\ref{sec:Transf}, retrieving a modified image space $\{\bs{\nu_i}\}_{\bs{i}\in\mc{U}}$. Then the previous methodology can be immediately applied. That is, for generally non-disjoint $\mc{S}$ we can easily adapt Eq.~(\ref{eq:gen_err_quant}) by using the Kronecker-delta type function $\delta_{[u:v|d]}(\bs{\nu}_{\bs{i}_{\bs{s}}},\bs{\nu}_{\bs{i}_{\bs{s}}^{\prime}})$ which acts on modified channel sub-patterns. Hence this approach is completely general for any probe-domain distribution. 

While this appears abstract and inconsequential, it is in fact a very powerful result. Given any image space $\mc{U}$, a probe domain distribution $\mc{S}$ and its unique fidelity sets, we can readily determine the analytical expression ${D}_{\bs{F}}$ used to derive the error bounds of its respective protocol. While this method may still demand a (potentially large) iteration over an image space for complicated probe distributions, it can always be done using symbolic-programming techniques and need only be carried out \textit{once}. The exact, analytical distribution can then be stored as a simple function, and utilised at will. This supersedes any brute force method used to directly compute these bounds.

\subsection{Constrained Probe-Domains}
We obtain greater simplifications if we restrict all probes to a constant domain size, such that $|\bs{s}| = k, \forall \bs{s} \in \mc{S}$. This means that all of the sub-fidelities will belong to the same unique fidelities set, $\bs{F}_k$. It is then useful to define a function that counts the number of times a specific sub-fidelity occurs over all sub-patterns,
\begin{equation}
{c}_{[u:v|d]}(\bs{i},\bs{i}^{\prime}) \defeq \sum_{\bs{s}\in \mc{S}} \delta_{[u:v|d]}(\bs{i}_{\bs{s}},\bs{i}_{\bs{s}}^{\prime}).
\end{equation}
For example, if ${c}_{[0:1|1]}(\bs{i},\bs{i}^{\prime}) = 3$, then over the entire probe-domain distribution, there exist three instances where the output fidelity $F_{[0:1|1]}$ has occurred between sub-patterns in $\bs{i},\bs{i}^{\prime}$. Then the very general Eq.~(\ref{eq:gen_err_quant}) can be simplified because we can exclusively consider fidelities from $\bs{F}_k$,
\begin{equation}
D_{\bs{F}_k}[\mc{U},M] = \sum_{\bs{i}\neq\bs{i}^{\prime} \in \mc{U}} \prod_{F_{[u:v|d]}\in\bs{F}_k} F_{[u:v|d]}^{Mc_{[u:v|d]}(\bs{i},\bs{i}^{\prime})}. \label{eq:Constk}
\end{equation}

\subsection{Two-Mode Probe-Domains}
While it is interesting to explore the use of $(m>2)$-mode CV-GHZ states, widening its domain of entanglement quickly becomes detrimental. As the quantum correlations of the state become more widespread, their intensity diminishes, causing a rapid decay in their utility for quantum sensing \cite{MP_IdlerFree}. Hence there is strong motivation to maximise these correlations by constraining probes to collections of two-mode states. As discussed in Eq.~(\ref{eq:TMSFid}), all two-mode sub-fidelities are unique with respect to Hamming distance, and therefore we can simplify the total error quantity,
\begin{equation}
D_{\bs{F}_2}[\mc{U},M] = \sum_{\bs{i}\neq\bs{i}^{\prime} \in \mc{U}} \prod_{F_{u:v}\in\bs{F}_2} F_{u:v}^{M c_{u:v}(\bs{i},\bs{i}^{\prime})}. \label{eq:TM_EB}
\end{equation}
That is, for any binary multi-channel discrimination problem which exclusively uses unassisted TMSV states, the fidelity-based error bounds will always be a polynomial function of the four unique sub-fidelities, ${\bs{F}_2 =\{ {F}_{0:1},  \> {F}_{0:2}, \> {F}_{1:1}  ,\> {F}_{1:2} \}}$.

There is good reason to utilise exclusively two-mode inputs and dynamic protocols to optimise unassisted performance. The behaviour of each unique sub-fidelity ${F_{u:v} \in \bs{F}_2}$ in Eq.~(\ref{eq:TMSFid}) depends on different environmental parameters, and the mean photon number of the probe sources. Crucially, some of the sub-fidelities in this unique fidelity set are more distinguishable than others, meaning that the ability to discriminate particular pairs of multi-channels can vary significantly from one to the other. 

For instance, in a quantum reading setting (discrimination of multiple bosonic lossy channels) the sub-fidelity $F_{0:2}$ is typically much larger than all of the other sub-fidelities. This means that when irradiating unassisted TMSV states over pairs of bosonic lossy channels, it is difficult to distinguish between the multi-channels $\mc{E}_B \otimes \mc{E}_B$ (zero target channels) and $\mc{E}_T \otimes \mc{E}_T$ (two target channels). This pair of multi-channels is particularly difficult to discriminate for unassisted protocols, and will lead to a decay in overall discrimination performance. It is therefore desirable to avoid directly probing equivalent pairs of channels together, $\mc{E}_B \otimes \mc{E}_B$ or $\mc{E}_T \otimes \mc{E}_T$ with entangled TMSV states. However, we cannot possibly know where these pairs of channels exist before choosing a probe-domain distribution. 

Yet, through the use of non-disjoint probe-domain distributions the contribution of poorly distinguishable sub-fidelities can be minimised. By probing a channel in conjunction with numerous other channels in the pattern, we increase the likelihood of probing a more distinguishable collection of quantum channels. In this way, even if a poorly distinguishable collection of channels is probed, it is highly likely that each channel will also be probed in conjunction with a more distinguishable one. This can be thought of as gathering more ``opinions" on the true nature of channels in the pattern, improving the overall discrimination error rate.\par

Abstractly, dynamic protocols can be interpreted as a form of intrinsic error-correction, achieved by the shifting of probe-domains. The application of a non-disjoint distribution of probe states is mathematically treated by encoding channel patterns $\{\bs{i} \}_{\bs{i}\in \mc{U}}$ into modified, extended patterns $\{\bs{\nu_i} \}_{\bs{i}\in\mc{U}}$, consistent with this distribution. In doing so, the encoded global patterns become easier to distinguish than their original versions, increasing the probability of probing distinguishable multi-channels.\par

\subsection{Dynamic $k$-Local Discrimination Protocols}

Even when constrained to two-mode probe-domains, there are still clearly an enormous number of non-disjoint distributions that can be considered. In general, every binary channel pattern ${\bs{i}}$ can possess a (not necessarily unique) optimal probe-domain distribution for which the error probability of classification is minimised. For many channel patterns in an image space $\mc{U}$, there may exist a vast range of probe-domain distributions which should be used for different multi-channels. However, applying the optimal probe-domain distribution for every channel pattern in the image space would require knowledge of the pattern before the fact; undermining the task of discrimination. 
Instead, it is much more practical and realistic to devise probe-domain distributions which are not necessarily optimal, but perform well when averaged over an entire image space, and utilise entanglement to obtain quantum advantage. 

To this end, we can pragmatically devise a class of dynamic block protocol which we call $k$-Local Nearest Neighbour ($k$-LNN) protocols. These are unassisted block protocols which use exclusively TMSV states such that over the course of a single round of discrimination, every channel in the pattern will be probed in conjunction with $k$ other channels. To ensure that each channel is probed within exactly $k$ probe domains, we can define unique \textit{neighbourhoods}. More precisely, for the $j^{\text{th}}$ channel in an $m$-length multi-channel we can define a $k$-element set of the indices of neighbouring channels, $\mc{N}_{k}(j)$. Neighbours can be defined spatially, e.g.~nearby channels defined on a lattice can be considered neighbours. However, this is an arbitrary choice and neighbours can be defined in whichever manner is most suitable to the application. 

\begin{figure}
\hspace{-0.5cm} (a) \hspace{4.cm} (b)\\
\includegraphics[width=0.475\linewidth]{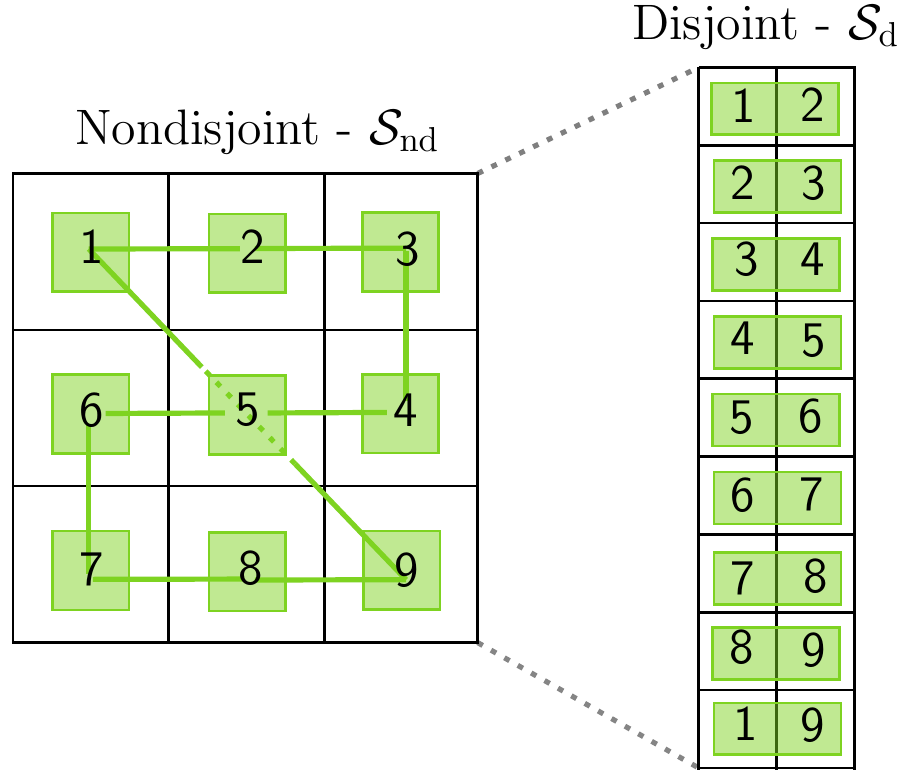}
\includegraphics[width=0.475\linewidth]{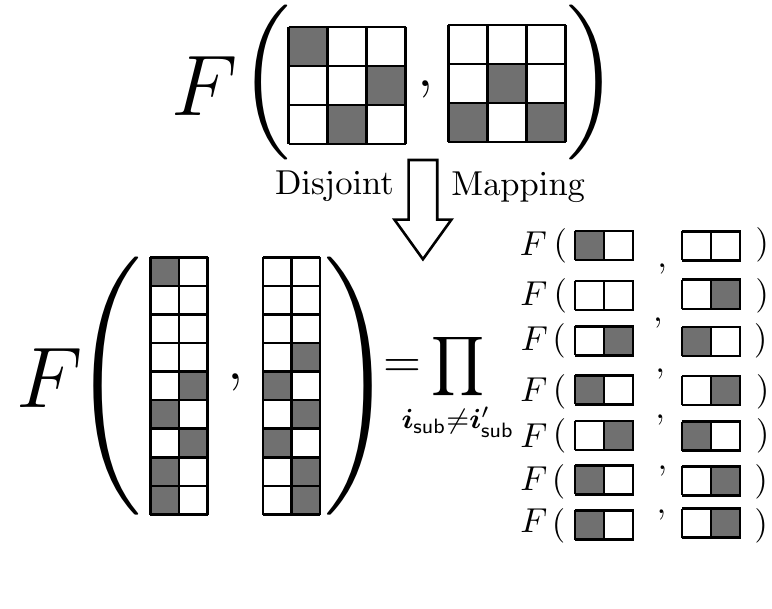}
\caption{(a) The non-disjoint to disjoint mapping for a two-mode constrained, 2-local nearest neighbour (2-LNN) protocol.~(b) Depicts the fidelity computation of two potential patterns undergoing this mapping.}
\label{fig:DisjMap}
\end{figure}

\ifx
Consider the $j^{\text{th}}$ channel in a quantum channel pattern $\bs{i}$, given by $\mc{E}_{i_j}$. For the $j^{\text{th}}$ channel we can define a $k$-element set of the indices of neighbouring channels, $\mc{N}_{k}(j)$. These neighbours can be defined spatially, or as neighbours in a more abstract way. This is used to identify $k$ channels that $\mc{E}_{i_j}$ will be probed in conjunction with, using $k$ pairs of TMSV states. We can then define a non-disjoint probe-domain distribution which contains precisely $km$ elements, such that each channel is probed within $k$ pairs of TMSV states. The non-disjoint partition set of this protocol takes the form,
\begin{equation}
\mc{S}_k^m = \bigcup_{i=1}^m  \bigcup_{j\in \mc{N}_k(i)} \{i,j\} ,
\end{equation}
It is not always possible to probe all channels exact $k$ times, since this depends on the number of channels in the pattern. Thus $k$ must satisfy
\begin{equation}
k = \frac{2n}{m} < m, \text{ for } n \in \mathbb{N}.
\end{equation}
When $k=1$, then all channels are probed using precisely one TMSV state along with one other channel. Therefore the probe-domain distribution is disjoint, and we have a fixed block protocol. For all $2 \leq k \leq m-1$, the protocol is dynamic.
\fi
In the following we provide a formal definition of $k$-LNN protocols.

\begin{defin} \emph{($k$-Local Nearest Neighbour Protocols):} An $M$-copy, $k$-LNN protocol is an unassisted dynamic block protocol which irradiates $m$-length channel patterns using $kmM$ TMSV states.
Each channel is probed in conjunction with exactly $k$-neighbouring channels, where neighbours are defined using a $k$-width neighbourhood function $\mc{N}_k(j)$. The probe-domain distribution follows
\begin{equation}
 \mc{S}_k^m= \bigcup_{i=1}^m  \bigcup_{j\in \mc{N}_k(i)} \big\{ \{i,j\} \big\} ,
\end{equation} 
while valid values of the neighbourhood width $k$ satisfy 
\begin{equation}
{k = \frac{2n}{m} \in \mathbb{N}, \text{ \emph{for} } n\in \mathbb{N}, k < m}.
\end{equation}
As a result, an $M$-copy input state is the product state of many TMSV states, 
\begin{equation}
\bs{\varphi}_{\mu}^{\mc{S}_k^m\otimes M} \defeq \bigotimes_{i=1}^m  \bigotimes_{j\in \mc{N}_k(i)} \varphi_{\mu}^{\{i,j\} \otimes M}.
\end{equation} 
where $\varphi_{\mu}^{\{i,j\}}$ is a TMSV state irradiated over the $i^{\text{th}}$ and $j^{\text{th}}$ channels of the global multi-channel.
\end{defin}

It is not possible to create a $k$-LNN protocol for any $m$-length channel pattern or neighbourhood width, $k$. This is a consequence of an inability to create a probe-domain distribution which contains each channel exactly $k$-times. For example, when $k=1$, then all channels are probed using precisely one TMSV state along with one other channel. Therefore the probe-domain distribution is disjoint, and we have a fixed block protocol. Yet, $k=1$ is only valid when there are an even number of channels. Otherwise, for all $2 \leq k \leq m-1$, the protocol is dynamic.

The average channel use of a $k$-LNN protocol will be $\bar{M} = kM$, since each channel is probed $k$-times per discriminatory round. Furthermore, dynamic protocols described by this non-disjoint partition set $\mc{S}_k^m$ can be mapped to a modified, disjoint format as discussed in Section~\ref{sec:Transf}. For example, for $m=9$ and $k=2$, we generate the probe-domain distribution
\begin{equation}
 \mc{S}_{2}^9 = \big\{ \{ 1,2 \} , \{ 2,3 \} , \ldots , \{ 8,9 \} , \{ 9,1 \} \big\} .
\end{equation}
These modified patterns can now be discriminated disjointly using $km = 18$ TMSV probes. Fig.~\ref{fig:DisjMap}(a) illustrates the dynamic to fixed block protocol mapping for $m=9$ channels, and $k=2$ neighbours. Furthermore, Fig.~\ref{fig:DisjMap}(b) shows how the dynamic to block protocol transformation can then be used to simplify output fidelity calculations into a product of two-mode sub-fidelities.

\section{Results \label{sec:DDP}}
With the previous context and technical ingredients in hand, we are now able to unveil new, analytical error bounds for the bosonic unassisted multi-channel discrimination protocols. In particular, we derive analytical error bounds for dynamic $k$-LNN discrimination protocols which can achieve quantum advantage in the context of quantum reading and environment localisation. We find that not only are these unassisted dynamic protocols better than classical protocols, but in some cases are able to match the performance of idler-assisted protocols for large parameter ranges.

\subsection{Fixed Discrimination Protocols}

Since the probe-domain distribution of a fixed block protocol constrained to two-mode probe states must be disjoint, it can only ever be used over an even number of quantum channels. Hence, in the following we focus on $m$-length channel patterns where $m=2l, l\in \mathbb{N}$ in order to maintain an even parity of channel pattern. For instance, we can devise a fixed block protocol such that a channel pattern is disjointly probed using $l$-pairs of TMSV states according to the disjoint probe-domain distribution {$\mc{S} = \{ \{12\}, \{34\}, \ldots,\{(2l-1)2\}\}$}. This corresponds to a $(k=1)$-LNN protocol, since each channel is probed in conjunction with precisely one of its neighbouring channels. \par

By exploiting the fidelity degeneracies of TMSV states and using Eq.~(\ref{eq:TM_EB}), there are some immediate expressions that can be found to derive analytical error bounds. 
Consider the task of CPF for $m$-length bosonic Gaussian channel patterns using this fixed, two-mode constrained block protocol. Assuming equal \textit{a priori} probabilities for all patterns $p_{\bs{i}} = 1/m, \forall \bs{i}\in \mc{U}_{\text{CPF}}$ it can be easily shown that the average error probability is bounded by
\begin{equation}
\begin{gathered}
\frac{1}{2m^2} D_{\bs{F}_2} [\mc{U}_{\text{CPF}} , 2M] \leq p_{\text{err}} \leq \frac{1}{m} D_{\bs{F}_2} [\mc{U}_{\text{CPF}} , M],\\
D_{\bs{F}_2} [\mc{U}_{\text{CPF}} ,M] = mF_{1:1}^M + {\big(2C_{m}^{2} - {m}\big)} F_{0:1}^{2M}.
\end{gathered}
\label{eq:CPF_disj}
\end{equation}
The derivation of this expression can be found in Appendix~\ref{sec:kLNNApp}.
For the purposes of position based quantum reading, these bounds can be shown to guarantee quantum advantage in some parameter ranges of background/target transmissivities and low signal energies. However if the number of channels is odd $m=2l+1$, there is a problem; one must utilise either a single mode input state on the remaining ``odd" channel, or employ a three-mode input state within the probe-domain distribution. Both of these scenarios lead to a rapid decay in discrimination performance. Meanwhile, for environment localisation tasks, the above bound reveals to be completely ineffective \cite{MP_IdlerFree}.\par

We can derive analytical error bounds with similar performance for the discrimination of uniform channel patterns (barcodes, $\bs{i}\in\mc{U}_{m}$) with an even number of channels,
\begin{equation}
\begin{gathered}
\frac{1}{2^{2m+1}} D_{\bs{F}_2} [\mc{U}_{m} ,2M] \leq p_{\text{err}} \leq \frac{1}{2^{m}} D_{\bs{F}_2} [\mc{U}_{m} ,M],\\
D_{\bs{F}_2} [\mc{U}_{m} ,\hspace{-0.5mm}M] = \left[1\hspace{-0.5mm} + \hspace{-0.5mm}F_{0:1}^M \hspace{-0.5mm}+ \hspace{-0.5mm}F_{1:2}^M \hspace{-0.5mm}+\hspace{-0.5mm}\frac{F_{1:1}^M \hspace{-0.5mm}+\hspace{-0.5mm} F_{0:2}^M}{2}\right]^{\frac{m}{2}}-1 .
\end{gathered}
\end{equation}
Again, these bounds offer quantum advantage for quantum reading, but fail to do so for environment localisation. In the following section, we show how dynamic discrimination protocols can remedy these issues.

\subsection{Channel Position Finding}
Let us focus on the use of $k$-LNN dynamic block protocols for the multi-channel discrimination task of CPF. Exact error bounds corresponding to these protocols can be derived using Eq.~(\ref{eq:TM_EB}), and admit a remarkably compact form. Indeed, it is sufficient to insert the modified image space $\{ \bs{\nu_i} \}_{\bs{i}\in\mc{U}_{\text{CPF}}}$ into Eq.~(\ref{eq:TM_EB}) and semi-numerically derive exact expressions. However, we show in Appendix \ref{sec:kLNNApp} how the following results can be derived in an intuitive fashion.

For a valid $k$-LNN dynamic protocol using $M$-copy input probes, and assuming equal priors we find the error probability of classification is bounded according to
\begin{gather}
\begin{gathered}
\frac{1}{2m^2} D_{\bs{F}_2}^{k}[2M] \leq p_{\text{err}} \leq \frac{1}{m} D_{\bs{F}_2}^{k} [M],\\
D_{\bs{F}_2}^{k} [M] = km(F_{0:1}^{2k-2} F_{1:1})^{M} + {(2C_{m}^{2} - km)} F_{0:1}^{2kM}.
\end{gathered}
\label{eq:kLNNCPF}
\end{gather}
Here we have dropped explicit notation dependence on the image space $D_{\bs{F}_2}^{k} [\mc{U}_{\text{CPF}} ,M] = D_{\bs{F}_2}^{k} [M]$ for convenience and clarity.
Clearly, when assuming an even number of channels, the fixed block protocol bounds from Eq.~(\ref{eq:CPF_disj}) are reproduced by setting $k=1$. Yet for valid values of $k>1$ the parity assumption no longer matters, and we gather a spectrum of CPF error bounds for any $m$ which utilise different collections of TMSV states.\par

The efficacy of these bounds can be readily studied by supplementing the exact expressions for the two-mode sub-fidelities $F_{u:v}$ into the above expressions (see Appendix \ref{sec:SubFids}).
For quantum reading purposes, these bounds observe discrimination performance close to idler-assistance, and guarantee strong quantum advantage for all $k$ and many channel parameters. In many cases, the performances $2\leq k\leq m-1$ are very similar, and it is sufficient/convenient to use $k=2$. However, in the setting of environment localisation the employment of overlapping probe-domains becomes critical to performance, and further increasing the neighbourhood parameter $k$ leads to huge performance gains. Indeed, disjoint probing structures in this regime fail dramatically, but are remedied by dynamic probing. \par

In both cases, we find that the best performance is achieved by the maximising the number of overlapping probe domains, i.e.~$k = k_{\max} = m-1$, such that each channel is probed in conjunction with every other channel. 
The performance of a $k_{\max}$-LNN dynamic protocol for CPF is characterised by the remarkably succinct bounds,
\begin{align}
p_{\text{err}} &\geq \frac{(m-1)}{2m}(F_{0:1}^{2(m-2)} F_{1:1})^{2M}, \label{eq:CPF_ifLB}\\
p_{\text{err}} &\leq (m-1)(F_{0:1}^{2(m-2)} F_{1:1})^{M}  \label{eq:CPF_ifUB}.
\end{align}
Quantum advantage can be guaranteed in this setting when the quantum upper bound in Eq.~(\ref{eq:CPF_ifUB}) is smaller than the best known classical lower bound \cite{OptEnvLoc, EntEnhanced}. For CPF, the $M$-copy, classical lower bound is given by
\begin{equation}
p_{\text{err}} \geq \frac{(m-1)}{2m} F_{\text{cl}}^{4M}.
\end{equation}
where $F_{\text{cl}}$ is the fidelity between the optimal, single mode coherent states used for classical multi-channel discrimination (see Appendix \ref{sec:ClassF} for explicit expressions). \par
In order to fairly compare these protocols, we must also ensure they both utilise the same average channel use, $\bar{M}$. The dynamic protocol has an average channel use of 
\begin{equation}
\bar{M}(\mc{S}_{k_{\max}}^m) = k_{\max} M = (m-1) M,
\end{equation}
since it probes every channel $(m-1)$ times per round of discrimination. Therefore to match resources with the classical protocol we provide it with only $M/(m-1)$ probe copies. This adjustment is made when comparing any quantum/classical protocols in this paper. Using the log-ratio of these bounds (see Eq.~(\ref{eq:log_ratio})) for a sensitive measure of quantum advantage, one finds that advantage is guaranteed when 
\begin{gather}
{M} \geq \frac{\log_{10}(2m)}{4\log_{10}\left({F_{\text{cl}}}\right) - \frac{1}{(m-1)}{\log_{10}\big({F_{0:1}^{2(m-2)} F_{1:1}}\big)}}  \label{eq:gadv}.
\end{gather}
This gives us a lower bound on the number of probe copies required to guarantee quantum advantage for idler-free CPF under certain environmental parameters.

\subsection{Larger Image Spaces and Generalised CPF}

It is desirable to generalise these results to larger, more complex image spaces, such as those pertaining to the task of $u$-CPF or barcode decoding. Yet for generally $k$-LNN dynamic protocols, this becomes very difficult. Analytical expansions of the total error quantity in Eq.~(\ref{eq:TM_EB}) do not necessarily take expedient forms for arbitrary image spaces $\mc{U}$ or all values of $k$. However, it is always possible to semi-numerically generate these expressions via symbolic programming, which can be stored and  used at will.

Fortunately, the output states of $k_{\max}$-LNN dynamic protocols regain the fidelity degeneracy properties of $m$-mode CV-GHZ states. More precisely, 
when all channels are identically probed in conjunction with every other channel using TMSV states, the fidelity of output states becomes degenerate with respect to Hamming distance preserving subspaces of $u/v$-CPF image spaces. This is thanks to the symmetry properties of the total input and output probe states.
Hence, it is possible to write a theorem analogous to Theorem \ref{theorem:Degen1} for the output states of $k_{\max}$-LNN protocols, rather than $m$-mode CV-GHZ states (see Theorem \ref{theorem:DegenMax}, Appendix \ref{sec:FidDegen} for more details).

Following this logic, it is now possible to write a new unique fidelity function between output states of $k_{\max}$-LNN protocols. Consider a pair of $m$-length channel patterns $\bs{i}$ and $\bs{i}^{\prime}$ which have Hamming distance $d_{h}(\bs{i},\bs{i}^{\prime})= d$ and each contain precisely $u$ target channels, i.e.~$\bs{i}, \bs{i}^{\prime} \in \mc{U}_{\text{CPF}}^u$. Then the unique fidelity function takes the form,
\begin{equation}
F_{[u:u|d]}^{k_{\max}} = F_{0:1}^{\frac{d[2(m-u) -d]}{2}}F_{1:1}^{\frac{d^2}{4}} F_{0:2}^{\frac{d(d-2)}{4}}F_{1:2}^{\frac{d(2u-d)}{2}}. \label{eq:kmaxUFid}
\end{equation}
Clearly, when $d=0$, $F_{[u:u|d]}^{k_{\max}} = 1$ as expected. Furthermore, it can be seen that the error bounds in Eqs.~(\ref{eq:CPF_ifLB}) and (\ref{eq:CPF_ifUB}) emerge from this expression, by setting $u=1$ and $d=2$ for the task of CPF and correctly normalising. For more general output fidelities between $\bs{i} \in \mc{U}_{\text{CPF}}^{u}$ and $\bs{i}^{\prime} \in \mc{U}_{\text{CPF}}^{v}$ channel patterns where $u < v$, the unique fidelity function is
\begin{equation}
{F}_{[u:v|d]}^{k_{\max}} \defeq
F_{[u:u|d]}^{k_{\max}} \Big[
{F_{0:1}F_{0:2}^{\frac{v-u}{2} } }{
F_{1:1}^{-\frac{v-u}{2} } F_{1:2}^{ (d-2(v-u))} }\Big]^{\frac{v-u}{2}} ,
\label{eq:kmaxUVFid}
\end{equation}
See Appendix~\ref{sec:GenCPF} for the explicit derivation of these functions.

With these output fidelity expressions in hand, it is straightforward to construct analytical error bounds for the discrimination of more complex image spaces via $k_{\max}$-LNN protocols. Using recently developed tools from Refs.~\cite{PatternRecog,ThermalPatt}, it is possible to simplify the sum over unequal channel patterns in the error bounds of Eqs.~(\ref{eq:LB}) and (\ref{eq:UB}). This is achieved by re-parameterising the sum in terms of the Hamming distance between equal or unequal $u/v$-CPF image spaces, and applying basic counting arguments. For explicit expressions of these bounds, we provide full details in Appendix~\ref{sec:GenCPF}.

\begin{figure}
\hspace{-0.8cm} Pure loss: (a) 1-CPF  \hspace{1.5cm} (b) 8-CPF \\
\includegraphics[width=\linewidth]{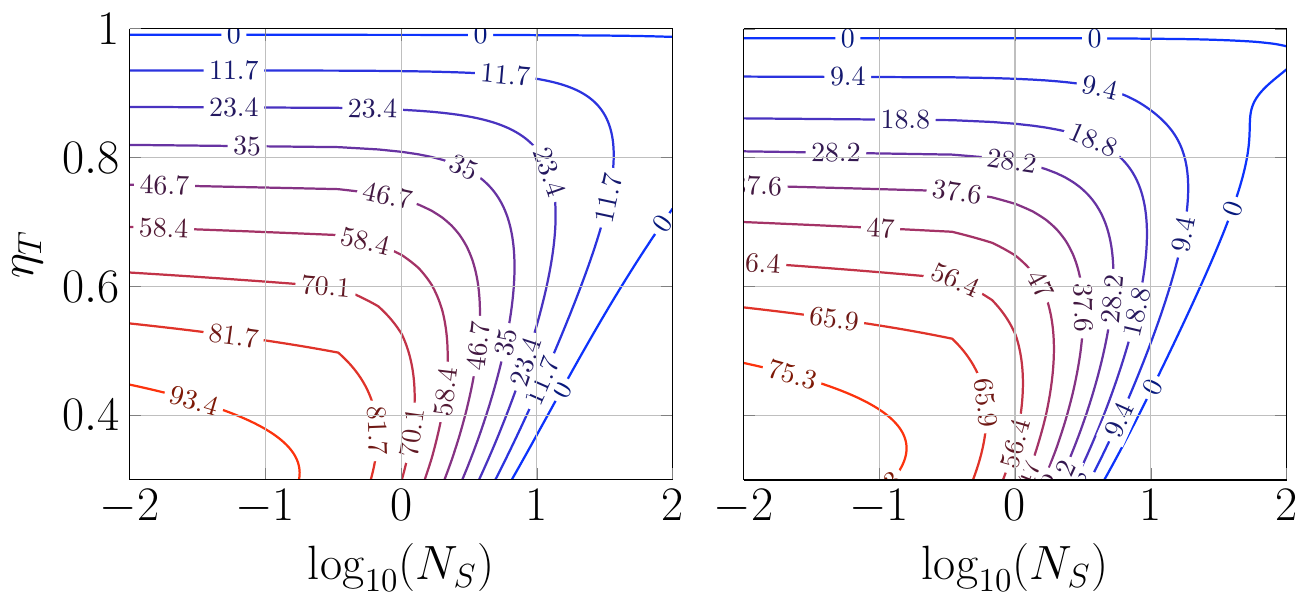}
\hspace{-0.7cm}Add noise: (c) 1-CPF \hspace{1.5cm} (d) 8-CPF \\
\includegraphics[width=\linewidth]{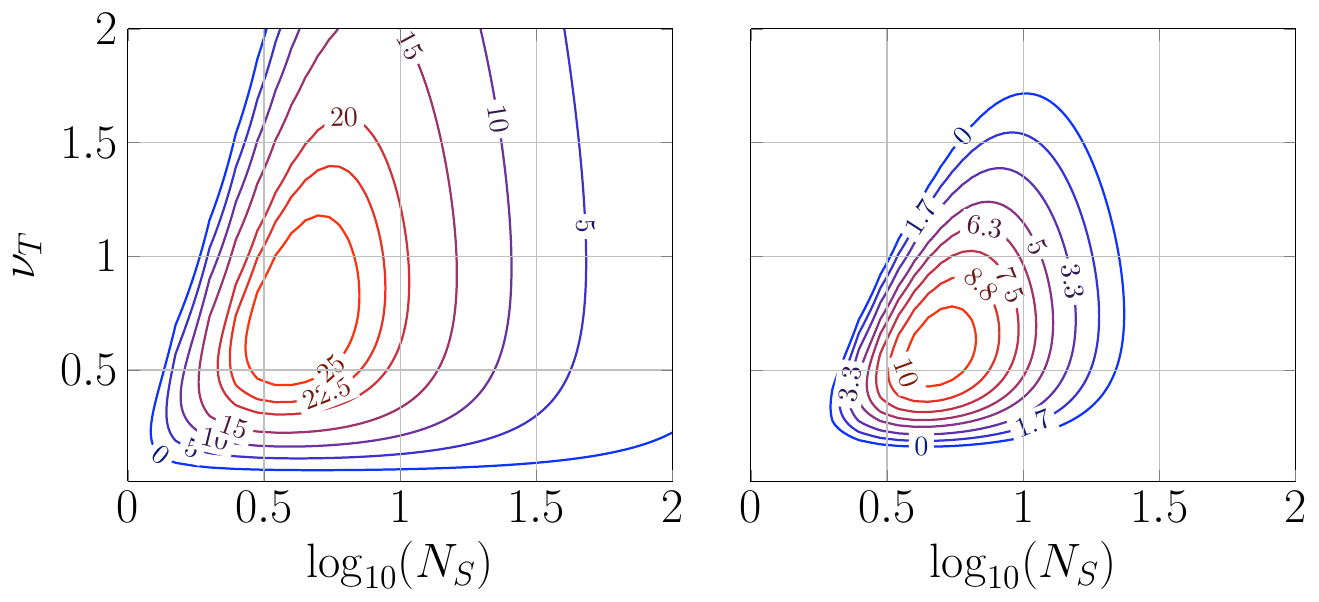}
\caption{Log-ratio of guaranteed advantage $\Delta_{\text{adv}}$ for 64 cell patterns: (a),(b) Position based quantum reading of pure-loss channels with ideal backgrounds $\eta_B = 1$, and a fixed total number of photons per channel $MN_S = 500$. (c),(d) Environment localisation, with background additive noise channels with $\nu_B = 0.02$. The total number of photons per channel is fixed at $MN_S = 1000$. The target channel parameters $\eta_T$ and $\nu_T$ are dimensionless, while $N_S$ is the mean photon number per signal mode.  }
\label{fig:CPFfigs}
\end{figure}

\begin{figure}
(a) Pure loss: $M=500$, $N_S = 10$.\\
\includegraphics[width=0.95\linewidth]{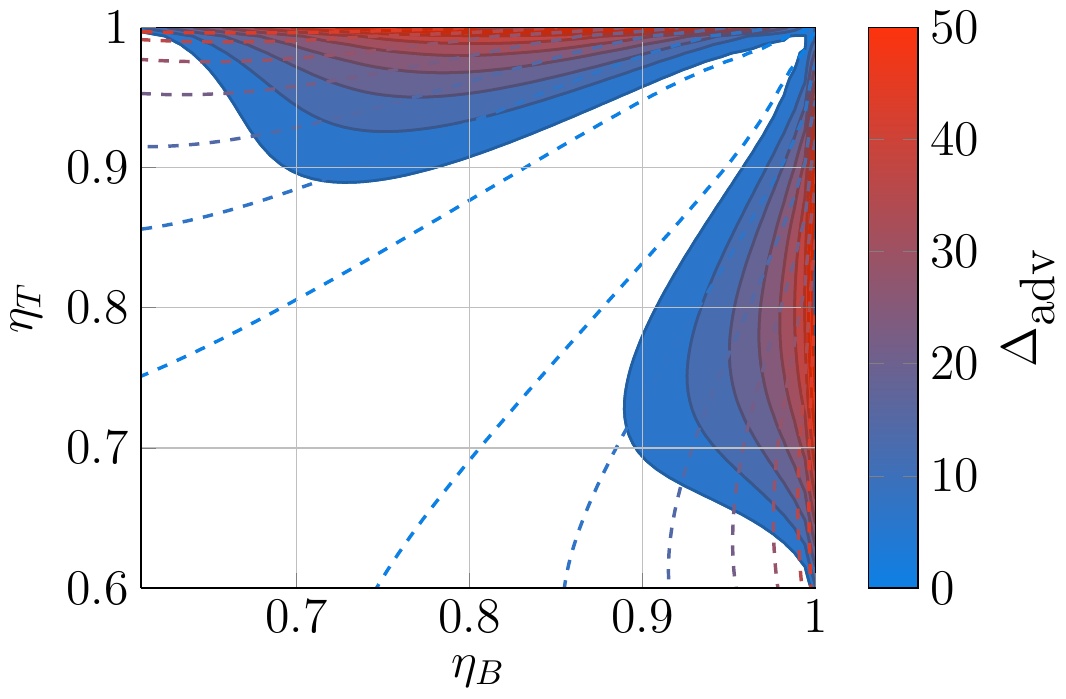}\\
(b) Add noise: $M=2000, N_S = 20$.\\
\includegraphics[width=0.95\linewidth]{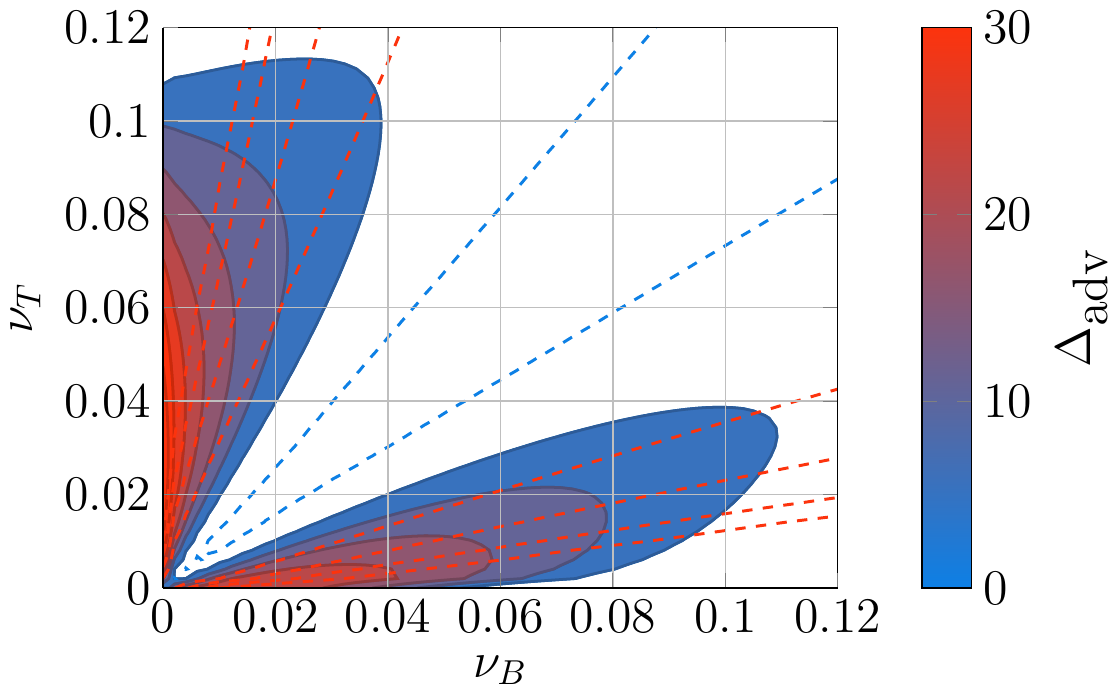}
\caption{Log-ratio of guaranteed advantage $\Delta_{\text{adv}}$ for the idler-free discrimination of $m=25$ cell uniform patterns (barcodes), for (a) quantum reading using $MN_S = 5000$ photons per channel, and (b) environment localisation using $MN_S = 40000$ photons per channel. Filled plots depict regions of advantage, while dashed lines depict the advantage obtained by idler-assisted protocols. All quantities plotted are dimensionless.}
\label{fig:Unifigs}
\end{figure}

\subsection{Quantum Advantage and Performance}
Using the bounds derived in the previous section, unassisted, dynamic protocols can be shown to significantly outperform optimal classical ones for the tasks of quantum reading, and environment localisation. The key metric of advantage used is the log-ratio of the optimal classical lower bound $p_{\text{err}}^{\text{cl,LB}}$, to the upper bound of our idler-free quantum protocols $p_{\text{err}}^{\text{q,UB}}$. Quantum advantage is guaranteed when,
\begin{equation}
\Delta_{\text{adv}} \defeq \log_{10}\left( \frac{p_{\text{err}}^{\text{cl,LB}}}{ p_{\text{err}}^{\text{q,UB}}}\right) \geq 0, \label{eq:log_ratio}
\end{equation}
granted that both protocols possess identical resources (carefully accounting for the increased average channel use of dynamic protocols).\par

\subsubsection{Quantum Reading}
Fig.~\ref{fig:CPFfigs} depicts regions of guaranteed quantum advantage for the tasks of (a) single CPF and (b) 8-CPF quantum reading of 64 cell patterns of pure-loss channels. In both cases, we consider an ideal background channel with transmissivity $\eta_B = 1$, and vary the target transmissivity, $\eta_T$. Here the total number of photons is fixed at $MN_S = 500$ per channel over the entire protocol. 
We find that in both scenarios, when the mean photon energy satisfies $0 < N_S \lesssim 100$ then the dynamic $k_{\max}$-LNN protocol exhibits strong quantum advantage in a vast region of parameter space. When employing very low energy probes, $0 < N_S \lesssim 5$, one observes a very high level of guaranteed advantage. Dynamic protocols offer a clear enhancement over alternative unassisted protocols, such as the fixed strategy in \cite{Jason_IdlerFree}. \par

More complex image spaces with variable target channel number can be investigated using Eq.~(\ref{eq:kBCPF_Bounds}). Fig.~\ref{fig:Unifigs} (a) plots regions of parameter space within which guaranteed advantage is possible for the quantum reading of uniformly distributed $m=25$ cell patterns (barcodes). When $N_S = 10$ and $M=500$, we may guarantee advantage provided either target or background transmissivity satisfy $(\eta_i \gtrsim 0.88) \wedge (\eta_j \gtrsim 0.6)$ for $i\neq j \in \{B,T\}$. Idler-assistance is also able to achieve advantage in lower transmissivity regimes, where unassisted protocols cannot. However, in regions where unassisted protocols are effective, they achieve close to identical performance with idler-assisted strategies.

\subsubsection{Environment Localisation}

Similar advantage is shown in Fig.~\ref{fig:CPFfigs} for (c) single CPF and (d) 8-CPF environment localisation of additive-noise channels for 64 cell patterns. In this setting, we fix the background noise at $\nu_B = 0.02$, and set the total photon input per channel at $MN_S = 1000$. Entanglement shared between probe states is much more fragile with respect to noise than loss, nonetheless, significant levels of guaranteed advantage are still attainable for both tasks. Yet, the parameter space within which advantage can be obtained shrinks with respect to increasing image space complexity.\par

For environment localisation, things are more difficult when the target channel number is variable. The fragility of entanglement to high levels of noise means that unassisted protocols are only effective in a low-noise regime. In CPF settings with large numbers of background channels of low noise (such as in Fig.~\ref{fig:CPFfigs}(c) and (d)) the input states remain resilient, and entanglement can be exploited effectively for enhanced performance. 
Fig.~\ref{fig:Unifigs}(b) plots regions of parameter space within which guaranteed advantage is obtained for environment localisation of $m=25$ cell barcodes. Here, the quantum enhancement becomes much more difficult to maintain. Given $N_S = 20$ and $M=2000$, advantage is approximately guaranteed when $(\nu_i \lesssim 0.04) \wedge (\nu_j \lesssim 0.11)$ for $i\neq j \in \{B,T\}$. Since additive-noise channels describe an idealised thermal-loss channel, we can conclude that unassisted, advantageous protocols for environment localisation are only feasible in a very low loss, low noise setting.

\section{Conclusion and Outlook\label{sec:Concl}}
We have derived exact, analytical error bounds for a class of idler-free multi-channel discrimination strategies which we label dynamic protocols. These unassisted, non-adaptive protocols exploit entanglement without ancillary systems, and use variable probing configurations to minimise their average error probability over binary image spaces. Applying these protocols in the bosonic setting for the tasks of quantum reading and environment localisation, we have shown that significant quantum advantage can be obtained over the best known classical strategies in a variety of settings, without the need for idler-assistance.\par
These dynamic discrimination methods offer new, experimentally feasible ways to exhibit quantum superiority over classical methods without the requirement of quantum memories. In particular, their efficiency over pure-loss channels make them ideal for realising high-rate, position-based quantum reading over large collections of cells in an optical memory.\par
There are many future directions of interest regarding these protocols. The extension to $(d>2)$-ary channel patterns and more complex image spaces offers a critical area of interest. Furthermore, combining idler-free strategies with with cutting edge machine-learning methods could offer a practical path for quantum-enhanced pattern recognition in an optical setting.

\begin{acknowledgements}
C.H acknowledges funding from the EPSRC via a Doctoral Training Partnership (EP/R513386/1). S.P acknowledges funding from the European Union’s Horizon 2020 Research and Innovation Action under grant agreement No.~862644 (Quantum readout techniques and technologies, QUARTET).
\end{acknowledgements}


\begin{thebibliography}{54}%
\makeatletter
\providecommand \@ifxundefined [1]{%
 \@ifx{#1\undefined}
}%
\providecommand \@ifnum [1]{%
 \ifnum #1\expandafter \@firstoftwo
 \else \expandafter \@secondoftwo
 \fi
}%
\providecommand \@ifx [1]{%
 \ifx #1\expandafter \@firstoftwo
 \else \expandafter \@secondoftwo
 \fi
}%
\providecommand \natexlab [1]{#1}%
\providecommand \enquote  [1]{``#1''}%
\providecommand \bibnamefont  [1]{#1}%
\providecommand \bibfnamefont [1]{#1}%
\providecommand \citenamefont [1]{#1}%
\providecommand \href@noop [0]{\@secondoftwo}%
\providecommand \href [0]{\begingroup \@sanitize@url \@href}%
\providecommand \@href[1]{\@@startlink{#1}\@@href}%
\providecommand \@@href[1]{\endgroup#1\@@endlink}%
\providecommand \@sanitize@url [0]{\catcode `\\12\catcode `\$12\catcode
  `\&12\catcode `\#12\catcode `\^12\catcode `\_12\catcode `\%12\relax}%
\providecommand \@@startlink[1]{}%
\providecommand \@@endlink[0]{}%
\providecommand \url  [0]{\begingroup\@sanitize@url \@url }%
\providecommand \@url [1]{\endgroup\@href {#1}{\urlprefix }}%
\providecommand \urlprefix  [0]{URL }%
\providecommand \Eprint [0]{\href }%
\providecommand \doibase [0]{https://doi.org/}%
\providecommand \selectlanguage [0]{\@gobble}%
\providecommand \bibinfo  [0]{\@secondoftwo}%
\providecommand \bibfield  [0]{\@secondoftwo}%
\providecommand \translation [1]{[#1]}%
\providecommand \BibitemOpen [0]{}%
\providecommand \bibitemStop [0]{}%
\providecommand \bibitemNoStop [0]{.\EOS\space}%
\providecommand \EOS [0]{\spacefactor3000\relax}%
\providecommand \BibitemShut  [1]{\csname bibitem#1\endcsname}%
\let\auto@bib@innerbib\@empty
\bibitem [{\citenamefont {Helstrom}(1969)}]{HelstromQHT}%
  \BibitemOpen
  \bibfield  {author} {\bibinfo {author} {\bibfnamefont {C.~W.}\ \bibnamefont
  {Helstrom}},\ }\bibfield  {title} {\bibinfo {title} {Quantum detection and
  estimation theory},\ }\href {https://doi.org/10.1007/bf01007479} {\bibfield
  {journal} {\bibinfo  {journal} {Journal of Statistical Physics}\ }\textbf
  {\bibinfo {volume} {1}},\ \bibinfo {pages} {231} (\bibinfo {year}
  {1969})}\BibitemShut {NoStop}%
\bibitem [{\citenamefont {Bae}\ and\ \citenamefont {Kwek}(2015)}]{BaeQSD}%
  \BibitemOpen
  \bibfield  {author} {\bibinfo {author} {\bibfnamefont {J.}~\bibnamefont
  {Bae}}\ and\ \bibinfo {author} {\bibfnamefont {L.-C.}\ \bibnamefont {Kwek}},\
  }\bibfield  {title} {\bibinfo {title} {Quantum state discrimination and its
  applications},\ }\href {https://doi.org/10.1088/1751-8113/48/8/083001}
  {\bibfield  {journal} {\bibinfo  {journal} {Journal of Physics A:
  Mathematical and Theoretical}\ }\textbf {\bibinfo {volume} {48}},\ \bibinfo
  {pages} {083001} (\bibinfo {year} {2015})}\BibitemShut {NoStop}%
\bibitem [{\citenamefont {Bergou}\ \emph {et~al.}(2004)\citenamefont {Bergou},
  \citenamefont {Herzog},\ and\ \citenamefont {Hillery}}]{QSDBergou}%
  \BibitemOpen
  \bibfield  {author} {\bibinfo {author} {\bibfnamefont {J.~A.}\ \bibnamefont
  {Bergou}}, \bibinfo {author} {\bibfnamefont {U.}~\bibnamefont {Herzog}},\
  and\ \bibinfo {author} {\bibfnamefont {M.}~\bibnamefont {Hillery}},\
  }\bibfield  {title} {\bibinfo {title} {11 discrimination of quantum states},\
  }in\ \href {https://doi.org/10.1007/978-3-540-44481-7_11} {\emph {\bibinfo
  {booktitle} {Quantum State Estimation}}}\ (\bibinfo  {publisher} {Springer
  Berlin Heidelberg},\ \bibinfo {year} {2004})\ pp.\ \bibinfo {pages}
  {417--465}\BibitemShut {NoStop}%
\bibitem [{\citenamefont {Chefles}(2000)}]{QSDChefles}%
  \BibitemOpen
  \bibfield  {author} {\bibinfo {author} {\bibfnamefont {A.}~\bibnamefont
  {Chefles}},\ }\bibfield  {title} {\bibinfo {title} {Quantum state
  discrimination},\ }\href {https://doi.org/10.1080/00107510010002599}
  {\bibfield  {journal} {\bibinfo  {journal} {Contemporary Physics}\ }\textbf
  {\bibinfo {volume} {41}},\ \bibinfo {pages} {401} (\bibinfo {year}
  {2000})}\BibitemShut {NoStop}%
\bibitem [{\citenamefont {Wang}\ and\ \citenamefont
  {Ying}(2006)}]{DiscrimQOps}%
  \BibitemOpen
  \bibfield  {author} {\bibinfo {author} {\bibfnamefont {G.}~\bibnamefont
  {Wang}}\ and\ \bibinfo {author} {\bibfnamefont {M.}~\bibnamefont {Ying}},\
  }\bibfield  {title} {\bibinfo {title} {Unambiguous discrimination among
  quantum operations},\ }\href {https://doi.org/10.1103/PhysRevA.73.042301}
  {\bibfield  {journal} {\bibinfo  {journal} {Phys. Rev. A}\ }\textbf {\bibinfo
  {volume} {73}},\ \bibinfo {pages} {042301} (\bibinfo {year}
  {2006})}\BibitemShut {NoStop}%
\bibitem [{\citenamefont {Marchese}\ \emph {et~al.}(2021)\citenamefont
  {Marchese}, \citenamefont {Belenchia}, \citenamefont {Pirandola},\ and\
  \citenamefont {Paternostro}}]{MarcheseQHT}%
  \BibitemOpen
  \bibfield  {author} {\bibinfo {author} {\bibfnamefont {M.~M.}\ \bibnamefont
  {Marchese}}, \bibinfo {author} {\bibfnamefont {A.}~\bibnamefont {Belenchia}},
  \bibinfo {author} {\bibfnamefont {S.}~\bibnamefont {Pirandola}},\ and\
  \bibinfo {author} {\bibfnamefont {M.}~\bibnamefont {Paternostro}},\
  }\bibfield  {title} {\bibinfo {title} {An optomechanical platform for quantum
  hypothesis testing for collapse models},\ }\href
  {https://doi.org/10.1088/1367-2630/abec0d} {\bibfield  {journal} {\bibinfo
  {journal} {New Journal of Physics}\ }\textbf {\bibinfo {volume} {23}},\
  \bibinfo {pages} {043022} (\bibinfo {year} {2021})}\BibitemShut {NoStop}%
\bibitem [{\citenamefont {Degen}\ \emph {et~al.}(2017)\citenamefont {Degen},
  \citenamefont {Reinhard},\ and\ \citenamefont {Cappellaro}}]{QSensing}%
  \BibitemOpen
  \bibfield  {author} {\bibinfo {author} {\bibfnamefont {C.~L.}\ \bibnamefont
  {Degen}}, \bibinfo {author} {\bibfnamefont {F.}~\bibnamefont {Reinhard}},\
  and\ \bibinfo {author} {\bibfnamefont {P.}~\bibnamefont {Cappellaro}},\
  }\bibfield  {title} {\bibinfo {title} {Quantum sensing},\ }\href
  {https://doi.org/10.1103/RevModPhys.89.035002} {\bibfield  {journal}
  {\bibinfo  {journal} {Rev. Mod. Phys.}\ }\textbf {\bibinfo {volume} {89}},\
  \bibinfo {pages} {035002} (\bibinfo {year} {2017})}\BibitemShut {NoStop}%
\bibitem [{\citenamefont {Pirandola}\ \emph {et~al.}(2018)\citenamefont
  {Pirandola}, \citenamefont {Bardhan}, \citenamefont {Gehring}, \citenamefont
  {Weedbrook},\ and\ \citenamefont {Lloyd}}]{AdvPhS}%
  \BibitemOpen
  \bibfield  {author} {\bibinfo {author} {\bibfnamefont {S.}~\bibnamefont
  {Pirandola}}, \bibinfo {author} {\bibfnamefont {B.~R.}\ \bibnamefont
  {Bardhan}}, \bibinfo {author} {\bibfnamefont {T.}~\bibnamefont {Gehring}},
  \bibinfo {author} {\bibfnamefont {C.}~\bibnamefont {Weedbrook}},\ and\
  \bibinfo {author} {\bibfnamefont {S.}~\bibnamefont {Lloyd}},\ }\bibfield
  {title} {\bibinfo {title} {Advances in photonic quantum sensing},\ }\href
  {https://doi.org/10.1038/s41566-018-0301-6} {\bibfield  {journal} {\bibinfo
  {journal} {Nature Photonics}\ }\textbf {\bibinfo {volume} {12}},\ \bibinfo
  {pages} {724} (\bibinfo {year} {2018})}\BibitemShut {NoStop}%
\bibitem [{\citenamefont {Pirandola}\ \emph {et~al.}(2019)\citenamefont
  {Pirandola}, \citenamefont {Laurenza}, \citenamefont {Lupo},\ and\
  \citenamefont {Pereira}}]{FLQCD}%
  \BibitemOpen
  \bibfield  {author} {\bibinfo {author} {\bibfnamefont {S.}~\bibnamefont
  {Pirandola}}, \bibinfo {author} {\bibfnamefont {R.}~\bibnamefont {Laurenza}},
  \bibinfo {author} {\bibfnamefont {C.}~\bibnamefont {Lupo}},\ and\ \bibinfo
  {author} {\bibfnamefont {J.~L.}\ \bibnamefont {Pereira}},\ }\bibfield
  {title} {\bibinfo {title} {Fundamental limits to quantum channel
  discrimination},\ }\href {https://doi.org/10.1038/s41534-019-0162-y}
  {\bibfield  {journal} {\bibinfo  {journal} {npj Quantum Information}\
  }\textbf {\bibinfo {volume} {5}},\ \bibinfo {pages} {50} (\bibinfo {year}
  {2019})}\BibitemShut {NoStop}%
\bibitem [{\citenamefont {Pirandola}(2011)}]{PirBD}%
  \BibitemOpen
  \bibfield  {author} {\bibinfo {author} {\bibfnamefont {S.}~\bibnamefont
  {Pirandola}},\ }\bibfield  {title} {\bibinfo {title} {Quantum reading of a
  classical digital memory},\ }\href
  {https://doi.org/10.1103/PhysRevLett.106.090504} {\bibfield  {journal}
  {\bibinfo  {journal} {Phys. Rev. Lett.}\ }\textbf {\bibinfo {volume} {106}},\
  \bibinfo {pages} {090504} (\bibinfo {year} {2011})}\BibitemShut {NoStop}%
\bibitem [{\citenamefont {Nair}(2011)}]{NairNDS}%
  \BibitemOpen
  \bibfield  {author} {\bibinfo {author} {\bibfnamefont {R.}~\bibnamefont
  {Nair}},\ }\bibfield  {title} {\bibinfo {title} {Discriminating
  quantum-optical beam-splitter channels with number-diagonal signal states:
  Applications to quantum reading and target detection},\ }\href
  {https://doi.org/10.1103/PhysRevA.84.032312} {\bibfield  {journal} {\bibinfo
  {journal} {Phys. Rev. A}\ }\textbf {\bibinfo {volume} {84}},\ \bibinfo
  {pages} {032312} (\bibinfo {year} {2011})}\BibitemShut {NoStop}%
\bibitem [{\citenamefont {Tej}\ \emph {et~al.}(2013)\citenamefont {Tej},
  \citenamefont {Devi},\ and\ \citenamefont {Rajagopal}}]{Tej2013}%
  \BibitemOpen
  \bibfield  {author} {\bibinfo {author} {\bibfnamefont {J.~P.}\ \bibnamefont
  {Tej}}, \bibinfo {author} {\bibfnamefont {A.~R.~Usha}\ \bibnamefont {Devi}},\
  and\ \bibinfo {author} {\bibfnamefont {A.~K.}\ \bibnamefont {Rajagopal}},\
  }\bibfield  {title} {\bibinfo {title} {Quantum reading of digital memory with
  non-gaussian entangled light},\ }\href
  {https://doi.org/10.1103/PhysRevA.87.052308} {\bibfield  {journal} {\bibinfo
  {journal} {Phys. Rev. A}\ }\textbf {\bibinfo {volume} {87}},\ \bibinfo
  {pages} {052308} (\bibinfo {year} {2013})}\BibitemShut {NoStop}%
\bibitem [{\citenamefont {Spedalieri}\ \emph {et~al.}(2012)\citenamefont
  {Spedalieri}, \citenamefont {Lupo}, \citenamefont {Mancini}, \citenamefont
  {Braunstein},\ and\ \citenamefont {Pirandola}}]{Spedalieri2012}%
  \BibitemOpen
  \bibfield  {author} {\bibinfo {author} {\bibfnamefont {G.}~\bibnamefont
  {Spedalieri}}, \bibinfo {author} {\bibfnamefont {C.}~\bibnamefont {Lupo}},
  \bibinfo {author} {\bibfnamefont {S.}~\bibnamefont {Mancini}}, \bibinfo
  {author} {\bibfnamefont {S.~L.}\ \bibnamefont {Braunstein}},\ and\ \bibinfo
  {author} {\bibfnamefont {S.}~\bibnamefont {Pirandola}},\ }\bibfield  {title}
  {\bibinfo {title} {Quantum reading under a local energy constraint},\ }\href
  {https://doi.org/10.1103/PhysRevA.86.012315} {\bibfield  {journal} {\bibinfo
  {journal} {Phys. Rev. A}\ }\textbf {\bibinfo {volume} {86}},\ \bibinfo
  {pages} {012315} (\bibinfo {year} {2012})}\BibitemShut {NoStop}%
\bibitem [{\citenamefont {Zhuang}\ \emph
  {et~al.}(2017{\natexlab{a}})\citenamefont {Zhuang}, \citenamefont {Zhang},\
  and\ \citenamefont {Shapiro}}]{Zhuang2017_3}%
  \BibitemOpen
  \bibfield  {author} {\bibinfo {author} {\bibfnamefont {Q.}~\bibnamefont
  {Zhuang}}, \bibinfo {author} {\bibfnamefont {Z.}~\bibnamefont {Zhang}},\ and\
  \bibinfo {author} {\bibfnamefont {J.~H.}\ \bibnamefont {Shapiro}},\
  }\bibfield  {title} {\bibinfo {title} {Entanglement-enhanced lidars for
  simultaneous range and velocity measurements},\ }\href
  {https://doi.org/10.1103/PhysRevA.96.040304} {\bibfield  {journal} {\bibinfo
  {journal} {Phys. Rev. A}\ }\textbf {\bibinfo {volume} {96}},\ \bibinfo
  {pages} {040304(R)} (\bibinfo {year} {2017}{\natexlab{a}})}\BibitemShut
  {NoStop}%
\bibitem [{\citenamefont {Hirota}(2017)}]{Hirota2017}%
  \BibitemOpen
  \bibfield  {author} {\bibinfo {author} {\bibfnamefont {O.}~\bibnamefont
  {Hirota}},\ }\bibfield  {title} {\bibinfo {title} {Error free quantum reading
  by quasi bell state of entangled coherent states},\ }\href
  {https://doi.org/10.1515/qmetro-2017-0009} {\bibfield  {journal} {\bibinfo
  {journal} {Quantum Measurements and Quantum Metrology}\ }\textbf {\bibinfo
  {volume} {4}},\ \bibinfo {pages} {70} (\bibinfo {year} {2017})}\BibitemShut
  {NoStop}%
\bibitem [{\citenamefont {Ortolano}\ \emph {et~al.}(2021)\citenamefont
  {Ortolano}, \citenamefont {Losero}, \citenamefont {Pirandola}, \citenamefont
  {Genovese},\ and\ \citenamefont {Ruo-Berchera}}]{expqread}%
  \BibitemOpen
  \bibfield  {author} {\bibinfo {author} {\bibfnamefont {G.}~\bibnamefont
  {Ortolano}}, \bibinfo {author} {\bibfnamefont {E.}~\bibnamefont {Losero}},
  \bibinfo {author} {\bibfnamefont {S.}~\bibnamefont {Pirandola}}, \bibinfo
  {author} {\bibfnamefont {M.}~\bibnamefont {Genovese}},\ and\ \bibinfo
  {author} {\bibfnamefont {I.}~\bibnamefont {Ruo-Berchera}},\ }\bibfield
  {title} {\bibinfo {title} {Experimental quantum reading with photon
  counting},\ }\href {https://advances.sciencemag.org/content/7/4/eabc7796}
  {\bibfield  {journal} {\bibinfo  {journal} {Science Advances}\ }\textbf
  {\bibinfo {volume} {7}},\ \bibinfo {pages} {eabc7796} (\bibinfo {year}
  {2021})}\BibitemShut {NoStop}%
\bibitem [{\citenamefont {Xia}\ \emph {et~al.}(2021)\citenamefont {Xia},
  \citenamefont {Li}, \citenamefont {Zhuang},\ and\ \citenamefont
  {Zhang}}]{QEnhDataCl}%
  \BibitemOpen
  \bibfield  {author} {\bibinfo {author} {\bibfnamefont {Y.}~\bibnamefont
  {Xia}}, \bibinfo {author} {\bibfnamefont {W.}~\bibnamefont {Li}}, \bibinfo
  {author} {\bibfnamefont {Q.}~\bibnamefont {Zhuang}},\ and\ \bibinfo {author}
  {\bibfnamefont {Z.}~\bibnamefont {Zhang}},\ }\bibfield  {title} {\bibinfo
  {title} {Quantum-enhanced data classification with a variational entangled
  sensor network},\ }\href
  {https://link.aps.org/doi/10.1103/PhysRevX.11.021047} {\bibfield  {journal}
  {\bibinfo  {journal} {Phys. Rev. X}\ }\textbf {\bibinfo {volume} {11}},\
  \bibinfo {pages} {021047} (\bibinfo {year} {2021})}\BibitemShut {NoStop}%
\bibitem [{\citenamefont {Lloyd}(2008)}]{Lloyd2008}%
  \BibitemOpen
  \bibfield  {author} {\bibinfo {author} {\bibfnamefont {S.}~\bibnamefont
  {Lloyd}},\ }\bibfield  {title} {\bibinfo {title} {Enhanced sensitivity of
  photodetection via quantum illumination},\ }\href
  {https://doi.org/10.1126/science.1160627} {\bibfield  {journal} {\bibinfo
  {journal} {Science}\ }\textbf {\bibinfo {volume} {321}},\ \bibinfo {pages}
  {1463} (\bibinfo {year} {2008})}\BibitemShut {NoStop}%
\bibitem [{\citenamefont {Tan}\ \emph {et~al.}(2008)\citenamefont {Tan},
  \citenamefont {Erkmen}, \citenamefont {Giovannetti}, \citenamefont {Guha},
  \citenamefont {Lloyd}, \citenamefont {Maccone}, \citenamefont {Pirandola},\
  and\ \citenamefont {Shapiro}}]{Tan2008}%
  \BibitemOpen
  \bibfield  {author} {\bibinfo {author} {\bibfnamefont {S.-H.}\ \bibnamefont
  {Tan}}, \bibinfo {author} {\bibfnamefont {B.~I.}\ \bibnamefont {Erkmen}},
  \bibinfo {author} {\bibfnamefont {V.}~\bibnamefont {Giovannetti}}, \bibinfo
  {author} {\bibfnamefont {S.}~\bibnamefont {Guha}}, \bibinfo {author}
  {\bibfnamefont {S.}~\bibnamefont {Lloyd}}, \bibinfo {author} {\bibfnamefont
  {L.}~\bibnamefont {Maccone}}, \bibinfo {author} {\bibfnamefont
  {S.}~\bibnamefont {Pirandola}},\ and\ \bibinfo {author} {\bibfnamefont
  {J.~H.}\ \bibnamefont {Shapiro}},\ }\bibfield  {title} {\bibinfo {title}
  {Quantum illumination with gaussian states},\ }\href
  {https://doi.org/10.1103/PhysRevLett.101.253601} {\bibfield  {journal}
  {\bibinfo  {journal} {Phys. Rev. Lett.}\ }\textbf {\bibinfo {volume} {101}},\
  \bibinfo {pages} {253601} (\bibinfo {year} {2008})}\BibitemShut {NoStop}%
\bibitem [{\citenamefont {Shapiro}\ and\ \citenamefont
  {Lloyd}(2009)}]{Shapiro2009}%
  \BibitemOpen
  \bibfield  {author} {\bibinfo {author} {\bibfnamefont {J.~H.}\ \bibnamefont
  {Shapiro}}\ and\ \bibinfo {author} {\bibfnamefont {S.}~\bibnamefont
  {Lloyd}},\ }\bibfield  {title} {\bibinfo {title} {Quantum illumination versus
  coherent-state target detection},\ }\href
  {https://doi.org/10.1088/1367-2630/11/6/063045} {\bibfield  {journal}
  {\bibinfo  {journal} {New Journal of Physics}\ }\textbf {\bibinfo {volume}
  {11}},\ \bibinfo {pages} {063045} (\bibinfo {year} {2009})}\BibitemShut
  {NoStop}%
\bibitem [{\citenamefont {Zhang}\ \emph {et~al.}(2013)\citenamefont {Zhang},
  \citenamefont {Tengner}, \citenamefont {Zhong}, \citenamefont {Wong},\ and\
  \citenamefont {Shapiro}}]{Zhang2013}%
  \BibitemOpen
  \bibfield  {author} {\bibinfo {author} {\bibfnamefont {Z.}~\bibnamefont
  {Zhang}}, \bibinfo {author} {\bibfnamefont {M.}~\bibnamefont {Tengner}},
  \bibinfo {author} {\bibfnamefont {T.}~\bibnamefont {Zhong}}, \bibinfo
  {author} {\bibfnamefont {F.~N.~C.}\ \bibnamefont {Wong}},\ and\ \bibinfo
  {author} {\bibfnamefont {J.~H.}\ \bibnamefont {Shapiro}},\ }\bibfield
  {title} {\bibinfo {title} {Entanglement's benefit survives an
  entanglement-breaking channel},\ }\href
  {https://doi.org/10.1103/PhysRevLett.111.010501} {\bibfield  {journal}
  {\bibinfo  {journal} {Phys. Rev. Lett.}\ }\textbf {\bibinfo {volume} {111}},\
  \bibinfo {pages} {010501} (\bibinfo {year} {2013})}\BibitemShut {NoStop}%
\bibitem [{\citenamefont {Zhang}\ \emph {et~al.}(2015)\citenamefont {Zhang},
  \citenamefont {Mouradian}, \citenamefont {Wong},\ and\ \citenamefont
  {Shapiro}}]{Zhang2015}%
  \BibitemOpen
  \bibfield  {author} {\bibinfo {author} {\bibfnamefont {Z.}~\bibnamefont
  {Zhang}}, \bibinfo {author} {\bibfnamefont {S.}~\bibnamefont {Mouradian}},
  \bibinfo {author} {\bibfnamefont {F.~N.~C.}\ \bibnamefont {Wong}},\ and\
  \bibinfo {author} {\bibfnamefont {J.~H.}\ \bibnamefont {Shapiro}},\
  }\bibfield  {title} {\bibinfo {title} {Entanglement-enhanced sensing in a
  lossy and noisy environment},\ }\href
  {https://doi.org/10.1103/PhysRevLett.114.110506} {\bibfield  {journal}
  {\bibinfo  {journal} {Phys. Rev. Lett.}\ }\textbf {\bibinfo {volume} {114}},\
  \bibinfo {pages} {110506} (\bibinfo {year} {2015})}\BibitemShut {NoStop}%
\bibitem [{\citenamefont {Barzanjeh}\ \emph {et~al.}(2015)\citenamefont
  {Barzanjeh}, \citenamefont {Guha}, \citenamefont {Weedbrook}, \citenamefont
  {Vitali}, \citenamefont {Shapiro},\ and\ \citenamefont
  {Pirandola}}]{Barzanjeh2015}%
  \BibitemOpen
  \bibfield  {author} {\bibinfo {author} {\bibfnamefont {S.}~\bibnamefont
  {Barzanjeh}}, \bibinfo {author} {\bibfnamefont {S.}~\bibnamefont {Guha}},
  \bibinfo {author} {\bibfnamefont {C.}~\bibnamefont {Weedbrook}}, \bibinfo
  {author} {\bibfnamefont {D.}~\bibnamefont {Vitali}}, \bibinfo {author}
  {\bibfnamefont {J.~H.}\ \bibnamefont {Shapiro}},\ and\ \bibinfo {author}
  {\bibfnamefont {S.}~\bibnamefont {Pirandola}},\ }\bibfield  {title} {\bibinfo
  {title} {Microwave quantum illumination},\ }\href
  {https://doi.org/10.1103/PhysRevLett.114.080503} {\bibfield  {journal}
  {\bibinfo  {journal} {Phys. Rev. Lett.}\ }\textbf {\bibinfo {volume} {114}},\
  \bibinfo {pages} {080503} (\bibinfo {year} {2015})}\BibitemShut {NoStop}%
\bibitem [{\citenamefont {Dall'Arno}\ \emph {et~al.}(2012)\citenamefont
  {Dall'Arno}, \citenamefont {Bisio},\ and\ \citenamefont
  {D'Ariano}}]{Dallarno2012}%
  \BibitemOpen
  \bibfield  {author} {\bibinfo {author} {\bibfnamefont {M.}~\bibnamefont
  {Dall'Arno}}, \bibinfo {author} {\bibfnamefont {A.}~\bibnamefont {Bisio}},\
  and\ \bibinfo {author} {\bibfnamefont {G.~M.}\ \bibnamefont {D'Ariano}},\
  }\bibfield  {title} {\bibinfo {title} {Ideal quantum reading of optical
  memories},\ }\href {https://doi.org/10.1142/s0219749912410109} {\bibfield
  {journal} {\bibinfo  {journal} {International Journal of Quantum
  Information}\ }\textbf {\bibinfo {volume} {10}},\ \bibinfo {pages} {1241010}
  (\bibinfo {year} {2012})}\BibitemShut {NoStop}%
\bibitem [{\citenamefont {Zhuang}\ \emph
  {et~al.}(2017{\natexlab{b}})\citenamefont {Zhuang}, \citenamefont {Zhang},\
  and\ \citenamefont {Shapiro}}]{Zhuang2017_1}%
  \BibitemOpen
  \bibfield  {author} {\bibinfo {author} {\bibfnamefont {Q.}~\bibnamefont
  {Zhuang}}, \bibinfo {author} {\bibfnamefont {Z.}~\bibnamefont {Zhang}},\ and\
  \bibinfo {author} {\bibfnamefont {J.~H.}\ \bibnamefont {Shapiro}},\
  }\bibfield  {title} {\bibinfo {title} {Entanglement-enhanced
  neyman{\textendash}pearson target detection using quantum illumination},\
  }\href {https://doi.org/10.1364/josab.34.001567} {\bibfield  {journal}
  {\bibinfo  {journal} {Journal of the Optical Society of America B}\ }\textbf
  {\bibinfo {volume} {34}},\ \bibinfo {pages} {1567} (\bibinfo {year}
  {2017}{\natexlab{b}})}\BibitemShut {NoStop}%
\bibitem [{\citenamefont {Zhuang}\ \emph
  {et~al.}(2017{\natexlab{c}})\citenamefont {Zhuang}, \citenamefont {Zhang},\
  and\ \citenamefont {Shapiro}}]{Zhuang2017_2}%
  \BibitemOpen
  \bibfield  {author} {\bibinfo {author} {\bibfnamefont {Q.}~\bibnamefont
  {Zhuang}}, \bibinfo {author} {\bibfnamefont {Z.}~\bibnamefont {Zhang}},\ and\
  \bibinfo {author} {\bibfnamefont {J.~H.}\ \bibnamefont {Shapiro}},\
  }\bibfield  {title} {\bibinfo {title} {Quantum illumination for enhanced
  detection of rayleigh-fading targets},\ }\href
  {https://doi.org/10.1103/PhysRevA.96.020302} {\bibfield  {journal} {\bibinfo
  {journal} {Phys. Rev. A}\ }\textbf {\bibinfo {volume} {96}},\ \bibinfo
  {pages} {020302(R)} (\bibinfo {year} {2017}{\natexlab{c}})}\BibitemShut
  {NoStop}%
\bibitem [{\citenamefont {Heras}\ \emph {et~al.}(2017)\citenamefont {Heras},
  \citenamefont {Candia}, \citenamefont {Fedorov}, \citenamefont {Deppe},
  \citenamefont {Sanz},\ and\ \citenamefont {Solano}}]{LasHeras2017}%
  \BibitemOpen
  \bibfield  {author} {\bibinfo {author} {\bibfnamefont {U.~L.}\ \bibnamefont
  {Heras}}, \bibinfo {author} {\bibfnamefont {R.~D.}\ \bibnamefont {Candia}},
  \bibinfo {author} {\bibfnamefont {K.~G.}\ \bibnamefont {Fedorov}}, \bibinfo
  {author} {\bibfnamefont {F.}~\bibnamefont {Deppe}}, \bibinfo {author}
  {\bibfnamefont {M.}~\bibnamefont {Sanz}},\ and\ \bibinfo {author}
  {\bibfnamefont {E.}~\bibnamefont {Solano}},\ }\bibfield  {title} {\bibinfo
  {title} {Quantum illumination reveals phase-shift inducing cloaking},\ }\href
  {https://doi.org/10.1038/s41598-017-08505-w} {\bibfield  {journal} {\bibinfo
  {journal} {Scientific Reports}\ }\textbf {\bibinfo {volume} {7}} (\bibinfo
  {year} {2017})}\BibitemShut {NoStop}%
\bibitem [{\citenamefont {De~Palma}\ and\ \citenamefont
  {Borregaard}(2018)}]{DePalma2018}%
  \BibitemOpen
  \bibfield  {author} {\bibinfo {author} {\bibfnamefont {G.}~\bibnamefont
  {De~Palma}}\ and\ \bibinfo {author} {\bibfnamefont {J.}~\bibnamefont
  {Borregaard}},\ }\bibfield  {title} {\bibinfo {title} {Minimum error
  probability of quantum illumination},\ }\href
  {https://doi.org/10.1103/PhysRevA.98.012101} {\bibfield  {journal} {\bibinfo
  {journal} {Phys. Rev. A}\ }\textbf {\bibinfo {volume} {98}},\ \bibinfo
  {pages} {012101} (\bibinfo {year} {2018})}\BibitemShut {NoStop}%
\bibitem [{\citenamefont {Nair}\ and\ \citenamefont {Gu}(2020)}]{NairIllum}%
  \BibitemOpen
  \bibfield  {author} {\bibinfo {author} {\bibfnamefont {R.}~\bibnamefont
  {Nair}}\ and\ \bibinfo {author} {\bibfnamefont {M.}~\bibnamefont {Gu}},\
  }\bibfield  {title} {\bibinfo {title} {Fundamental limits of quantum
  illumination},\ }\href {https://doi.org/10.1364/optica.391335} {\bibfield
  {journal} {\bibinfo  {journal} {Optica}\ }\textbf {\bibinfo {volume} {7}},\
  \bibinfo {pages} {771} (\bibinfo {year} {2020})}\BibitemShut {NoStop}%
\bibitem [{\citenamefont {Lopaeva}\ \emph {et~al.}(2013)\citenamefont
  {Lopaeva}, \citenamefont {Ruo~Berchera}, \citenamefont {Degiovanni},
  \citenamefont {Olivares}, \citenamefont {Brida},\ and\ \citenamefont
  {Genovese}}]{Lopaeva2013}%
  \BibitemOpen
  \bibfield  {author} {\bibinfo {author} {\bibfnamefont {E.~D.}\ \bibnamefont
  {Lopaeva}}, \bibinfo {author} {\bibfnamefont {I.}~\bibnamefont
  {Ruo~Berchera}}, \bibinfo {author} {\bibfnamefont {I.~P.}\ \bibnamefont
  {Degiovanni}}, \bibinfo {author} {\bibfnamefont {S.}~\bibnamefont
  {Olivares}}, \bibinfo {author} {\bibfnamefont {G.}~\bibnamefont {Brida}},\
  and\ \bibinfo {author} {\bibfnamefont {M.}~\bibnamefont {Genovese}},\
  }\bibfield  {title} {\bibinfo {title} {Experimental realization of quantum
  illumination},\ }\href {https://doi.org/10.1103/PhysRevLett.110.153603}
  {\bibfield  {journal} {\bibinfo  {journal} {Phys. Rev. Lett.}\ }\textbf
  {\bibinfo {volume} {110}},\ \bibinfo {pages} {153603} (\bibinfo {year}
  {2013})}\BibitemShut {NoStop}%
\bibitem [{\citenamefont {Barzanjeh}\ \emph {et~al.}(2020)\citenamefont
  {Barzanjeh}, \citenamefont {Pirandola}, \citenamefont {Vitali},\ and\
  \citenamefont {Fink}}]{Barzanjeh2020}%
  \BibitemOpen
  \bibfield  {author} {\bibinfo {author} {\bibfnamefont {S.}~\bibnamefont
  {Barzanjeh}}, \bibinfo {author} {\bibfnamefont {S.}~\bibnamefont
  {Pirandola}}, \bibinfo {author} {\bibfnamefont {D.}~\bibnamefont {Vitali}},\
  and\ \bibinfo {author} {\bibfnamefont {J.~M.}\ \bibnamefont {Fink}},\
  }\bibfield  {title} {\bibinfo {title} {Microwave quantum illumination using a
  digital receiver},\ }\href {https://doi.org/10.1126/sciadv.abb0451}
  {\bibfield  {journal} {\bibinfo  {journal} {Science Advances}\ }\textbf
  {\bibinfo {volume} {6}},\ \bibinfo {pages} {eabb0451} (\bibinfo {year}
  {2020})}\BibitemShut {NoStop}%
\bibitem [{\citenamefont {Spedalieri}(2015)}]{GSCryptoQR}%
  \BibitemOpen
  \bibfield  {author} {\bibinfo {author} {\bibfnamefont {G.}~\bibnamefont
  {Spedalieri}},\ }\bibfield  {title} {\bibinfo {title} {Cryptographic aspects
  of quantum reading},\ }\href {https://doi.org/10.3390/e17042218} {\bibfield
  {journal} {\bibinfo  {journal} {Entropy}\ }\textbf {\bibinfo {volume} {17}},\
  \bibinfo {pages} {2218} (\bibinfo {year} {2015})}\BibitemShut {NoStop}%
\bibitem [{\citenamefont {Pereira}\ and\ \citenamefont
  {Pirandola}(2021)}]{AmpDampB}%
  \BibitemOpen
  \bibfield  {author} {\bibinfo {author} {\bibfnamefont {J.~L.}\ \bibnamefont
  {Pereira}}\ and\ \bibinfo {author} {\bibfnamefont {S.}~\bibnamefont
  {Pirandola}},\ }\bibfield  {title} {\bibinfo {title} {Bounds on
  amplitude-damping-channel discrimination},\ }\href
  {https://link.aps.org/doi/10.1103/PhysRevA.103.022610} {\bibfield  {journal}
  {\bibinfo  {journal} {Phys. Rev. A}\ }\textbf {\bibinfo {volume} {103}},\
  \bibinfo {pages} {022610} (\bibinfo {year} {2021})}\BibitemShut {NoStop}%
\bibitem [{\citenamefont {Zhuang}\ and\ \citenamefont
  {Pirandola}(2020{\natexlab{a}})}]{EntEnhanced}%
  \BibitemOpen
  \bibfield  {author} {\bibinfo {author} {\bibfnamefont {Q.}~\bibnamefont
  {Zhuang}}\ and\ \bibinfo {author} {\bibfnamefont {S.}~\bibnamefont
  {Pirandola}},\ }\bibfield  {title} {\bibinfo {title} {Entanglement-enhanced
  testing of multiple quantum hypotheses},\ }\href
  {https://doi.org/10.1038/s42005-020-0369-4} {\bibfield  {journal} {\bibinfo
  {journal} {Communications Physics}\ }\textbf {\bibinfo {volume} {3}},\
  \bibinfo {pages} {103} (\bibinfo {year} {2020}{\natexlab{a}})}\BibitemShut
  {NoStop}%
\bibitem [{\citenamefont {Pirandola}\ and\ \citenamefont
  {Lupo}(2017)}]{UltPrec2017}%
  \BibitemOpen
  \bibfield  {author} {\bibinfo {author} {\bibfnamefont {S.}~\bibnamefont
  {Pirandola}}\ and\ \bibinfo {author} {\bibfnamefont {C.}~\bibnamefont
  {Lupo}},\ }\bibfield  {title} {\bibinfo {title} {Ultimate precision of
  adaptive noise estimation},\ }\href
  {https://doi.org/10.1103/PhysRevLett.118.100502} {\bibfield  {journal}
  {\bibinfo  {journal} {Phys. Rev. Lett.}\ }\textbf {\bibinfo {volume} {118}},\
  \bibinfo {pages} {100502} (\bibinfo {year} {2017})}\BibitemShut {NoStop}%
\bibitem [{\citenamefont {Zhuang}\ and\ \citenamefont
  {Pirandola}(2020{\natexlab{b}})}]{ULMCD}%
  \BibitemOpen
  \bibfield  {author} {\bibinfo {author} {\bibfnamefont {Q.}~\bibnamefont
  {Zhuang}}\ and\ \bibinfo {author} {\bibfnamefont {S.}~\bibnamefont
  {Pirandola}},\ }\bibfield  {title} {\bibinfo {title} {Ultimate limits for
  multiple quantum channel discrimination},\ }\href
  {https://doi.org/10.1103/PhysRevLett.125.080505} {\bibfield  {journal}
  {\bibinfo  {journal} {Phys. Rev. Lett.}\ }\textbf {\bibinfo {volume} {125}},\
  \bibinfo {pages} {080505} (\bibinfo {year} {2020}{\natexlab{b}})}\BibitemShut
  {NoStop}%
\bibitem [{\citenamefont {Banchi}\ \emph {et~al.}(2020)\citenamefont {Banchi},
  \citenamefont {Zhuang},\ and\ \citenamefont {Pirandola}}]{PatternRecog}%
  \BibitemOpen
  \bibfield  {author} {\bibinfo {author} {\bibfnamefont {L.}~\bibnamefont
  {Banchi}}, \bibinfo {author} {\bibfnamefont {Q.}~\bibnamefont {Zhuang}},\
  and\ \bibinfo {author} {\bibfnamefont {S.}~\bibnamefont {Pirandola}},\
  }\bibfield  {title} {\bibinfo {title} {Quantum-enhanced barcode decoding and
  pattern recognition},\ }\href
  {https://link.aps.org/doi/10.1103/PhysRevApplied.14.064026} {\bibfield
  {journal} {\bibinfo  {journal} {Phys. Rev. Applied}\ }\textbf {\bibinfo
  {volume} {14}},\ \bibinfo {pages} {064026} (\bibinfo {year}
  {2020})}\BibitemShut {NoStop}%
\bibitem [{\citenamefont {Harney}\ \emph {et~al.}(2021)\citenamefont {Harney},
  \citenamefont {Banchi},\ and\ \citenamefont {Pirandola}}]{ThermalPatt}%
  \BibitemOpen
  \bibfield  {author} {\bibinfo {author} {\bibfnamefont {C.}~\bibnamefont
  {Harney}}, \bibinfo {author} {\bibfnamefont {L.}~\bibnamefont {Banchi}},\
  and\ \bibinfo {author} {\bibfnamefont {S.}~\bibnamefont {Pirandola}},\
  }\bibfield  {title} {\bibinfo {title} {Ultimate limits of thermal pattern
  recognition},\ }\href {https://link.aps.org/doi/10.1103/PhysRevA.103.052406}
  {\bibfield  {journal} {\bibinfo  {journal} {Phys. Rev. A}\ }\textbf {\bibinfo
  {volume} {103}},\ \bibinfo {pages} {052406} (\bibinfo {year}
  {2021})}\BibitemShut {NoStop}%
\bibitem [{\citenamefont {Lvovsky}\ \emph {et~al.}(2009)\citenamefont
  {Lvovsky}, \citenamefont {Sanders},\ and\ \citenamefont
  {Tittel}}]{Lvovsky_QMem_2009}%
  \BibitemOpen
  \bibfield  {author} {\bibinfo {author} {\bibfnamefont {A.~I.}\ \bibnamefont
  {Lvovsky}}, \bibinfo {author} {\bibfnamefont {B.~C.}\ \bibnamefont
  {Sanders}},\ and\ \bibinfo {author} {\bibfnamefont {W.}~\bibnamefont
  {Tittel}},\ }\bibfield  {title} {\bibinfo {title} {Optical quantum memory},\
  }\href {https://doi.org/10.1038/nphoton.2009.231} {\bibfield  {journal}
  {\bibinfo  {journal} {Nature Photonics}\ }\textbf {\bibinfo {volume} {3}},\
  \bibinfo {pages} {706} (\bibinfo {year} {2009})}\BibitemShut {NoStop}%
\bibitem [{\citenamefont {Jensen}\ \emph {et~al.}(2010)\citenamefont {Jensen},
  \citenamefont {Wasilewski}, \citenamefont {Krauter}, \citenamefont
  {Fernholz}, \citenamefont {Nielsen}, \citenamefont {Owari}, \citenamefont
  {Plenio}, \citenamefont {Serafini}, \citenamefont {Wolf},\ and\ \citenamefont
  {Polzik}}]{Jensen_QMem_2010}%
  \BibitemOpen
  \bibfield  {author} {\bibinfo {author} {\bibfnamefont {K.}~\bibnamefont
  {Jensen}}, \bibinfo {author} {\bibfnamefont {W.}~\bibnamefont {Wasilewski}},
  \bibinfo {author} {\bibfnamefont {H.}~\bibnamefont {Krauter}}, \bibinfo
  {author} {\bibfnamefont {T.}~\bibnamefont {Fernholz}}, \bibinfo {author}
  {\bibfnamefont {B.~M.}\ \bibnamefont {Nielsen}}, \bibinfo {author}
  {\bibfnamefont {M.}~\bibnamefont {Owari}}, \bibinfo {author} {\bibfnamefont
  {M.~B.}\ \bibnamefont {Plenio}}, \bibinfo {author} {\bibfnamefont
  {A.}~\bibnamefont {Serafini}}, \bibinfo {author} {\bibfnamefont {M.~M.}\
  \bibnamefont {Wolf}},\ and\ \bibinfo {author} {\bibfnamefont {E.~S.}\
  \bibnamefont {Polzik}},\ }\bibfield  {title} {\bibinfo {title} {Quantum
  memory for entangled continuous-variable states},\ }\href
  {https://doi.org/10.1038/nphys1819} {\bibfield  {journal} {\bibinfo
  {journal} {Nature Physics}\ }\textbf {\bibinfo {volume} {7}},\ \bibinfo
  {pages} {13} (\bibinfo {year} {2010})}\BibitemShut {NoStop}%
\bibitem [{\citenamefont {Cho}\ \emph {et~al.}(2016)\citenamefont {Cho},
  \citenamefont {Campbell}, \citenamefont {Everett}, \citenamefont {Bernu},
  \citenamefont {Higginbottom}, \citenamefont {Cao}, \citenamefont {Geng},
  \citenamefont {Robins}, \citenamefont {Lam},\ and\ \citenamefont
  {Buchler}}]{Cho_QMem_2016}%
  \BibitemOpen
  \bibfield  {author} {\bibinfo {author} {\bibfnamefont {Y.-W.}\ \bibnamefont
  {Cho}}, \bibinfo {author} {\bibfnamefont {G.~T.}\ \bibnamefont {Campbell}},
  \bibinfo {author} {\bibfnamefont {J.~L.}\ \bibnamefont {Everett}}, \bibinfo
  {author} {\bibfnamefont {J.}~\bibnamefont {Bernu}}, \bibinfo {author}
  {\bibfnamefont {D.~B.}\ \bibnamefont {Higginbottom}}, \bibinfo {author}
  {\bibfnamefont {M.~T.}\ \bibnamefont {Cao}}, \bibinfo {author} {\bibfnamefont
  {J.}~\bibnamefont {Geng}}, \bibinfo {author} {\bibfnamefont {N.~P.}\
  \bibnamefont {Robins}}, \bibinfo {author} {\bibfnamefont {P.~K.}\
  \bibnamefont {Lam}},\ and\ \bibinfo {author} {\bibfnamefont {B.~C.}\
  \bibnamefont {Buchler}},\ }\bibfield  {title} {\bibinfo {title} {Highly
  efficient optical quantum memory with long coherence time in cold atoms},\
  }\href {https://doi.org/10.1364/optica.3.000100} {\bibfield  {journal}
  {\bibinfo  {journal} {Optica}\ }\textbf {\bibinfo {volume} {3}},\ \bibinfo
  {pages} {100} (\bibinfo {year} {2016})}\BibitemShut {NoStop}%
\bibitem [{\citenamefont {Wang}\ \emph {et~al.}(2019)\citenamefont {Wang},
  \citenamefont {Li}, \citenamefont {Zhang}, \citenamefont {Su}, \citenamefont
  {Zhou}, \citenamefont {Liao}, \citenamefont {Du}, \citenamefont {Yan},\ and\
  \citenamefont {Zhu}}]{Wang_QMem_2019}%
  \BibitemOpen
  \bibfield  {author} {\bibinfo {author} {\bibfnamefont {Y.}~\bibnamefont
  {Wang}}, \bibinfo {author} {\bibfnamefont {J.}~\bibnamefont {Li}}, \bibinfo
  {author} {\bibfnamefont {S.}~\bibnamefont {Zhang}}, \bibinfo {author}
  {\bibfnamefont {K.}~\bibnamefont {Su}}, \bibinfo {author} {\bibfnamefont
  {Y.}~\bibnamefont {Zhou}}, \bibinfo {author} {\bibfnamefont {K.}~\bibnamefont
  {Liao}}, \bibinfo {author} {\bibfnamefont {S.}~\bibnamefont {Du}}, \bibinfo
  {author} {\bibfnamefont {H.}~\bibnamefont {Yan}},\ and\ \bibinfo {author}
  {\bibfnamefont {S.-L.}\ \bibnamefont {Zhu}},\ }\bibfield  {title} {\bibinfo
  {title} {Efficient quantum memory for single-photon polarization qubits},\
  }\href {https://doi.org/10.1038/s41566-019-0368-8} {\bibfield  {journal}
  {\bibinfo  {journal} {Nature Photonics}\ }\textbf {\bibinfo {volume} {13}},\
  \bibinfo {pages} {346} (\bibinfo {year} {2019})}\BibitemShut {NoStop}%
\bibitem [{\citenamefont {Pereira}\ \emph {et~al.}(2021)\citenamefont
  {Pereira}, \citenamefont {Banchi}, \citenamefont {Zhuang},\ and\
  \citenamefont {Pirandola}}]{Jason_IdlerFree}%
  \BibitemOpen
  \bibfield  {author} {\bibinfo {author} {\bibfnamefont {J.~L.}\ \bibnamefont
  {Pereira}}, \bibinfo {author} {\bibfnamefont {L.}~\bibnamefont {Banchi}},
  \bibinfo {author} {\bibfnamefont {Q.}~\bibnamefont {Zhuang}},\ and\ \bibinfo
  {author} {\bibfnamefont {S.}~\bibnamefont {Pirandola}},\ }\bibfield  {title}
  {\bibinfo {title} {Idler-free channel position finding},\ }\href
  {https://link.aps.org/doi/10.1103/PhysRevA.103.042614} {\bibfield  {journal}
  {\bibinfo  {journal} {Phys. Rev. A}\ }\textbf {\bibinfo {volume} {103}},\
  \bibinfo {pages} {042614} (\bibinfo {year} {2021})}\BibitemShut {NoStop}%
\bibitem [{\citenamefont {Harney}\ and\ \citenamefont
  {Pirandola}(2020)}]{MP_IdlerFree}%
  \BibitemOpen
  \bibfield  {author} {\bibinfo {author} {\bibfnamefont {C.}~\bibnamefont
  {Harney}}\ and\ \bibinfo {author} {\bibfnamefont {S.}~\bibnamefont
  {Pirandola}},\ }\bibfield  {title} {\bibinfo {title} {Idler-free
  multi-channel discrimination via multipartite probe states},\ }\href
  {https://arxiv.org/abs/2010.12354} {\bibfield  {journal} {\bibinfo {journal}
  {npj Quantum Information}\ }(\bibinfo {year}{in press})}\BibitemShut {NoStop}%
\bibitem [{\citenamefont {Pereira}\ \emph {et~al.}(2020)\citenamefont
  {Pereira}, \citenamefont {Zhuang},\ and\ \citenamefont
  {Pirandola}}]{OptEnvLoc}%
  \BibitemOpen
  \bibfield  {author} {\bibinfo {author} {\bibfnamefont {J.~L.}\ \bibnamefont
  {Pereira}}, \bibinfo {author} {\bibfnamefont {Q.}~\bibnamefont {Zhuang}},\
  and\ \bibinfo {author} {\bibfnamefont {S.}~\bibnamefont {Pirandola}},\
  }\bibfield  {title} {\bibinfo {title} {Optimal environment localization},\
  }\href {https://link.aps.org/doi/10.1103/PhysRevResearch.2.043189} {\bibfield
   {journal} {\bibinfo  {journal} {Phys. Rev. Research}\ }\textbf {\bibinfo
  {volume} {2}},\ \bibinfo {pages} {043189} (\bibinfo {year}
  {2020})}\BibitemShut {NoStop}%
\bibitem [{\citenamefont {Chiribella}\ \emph {et~al.}(2008)\citenamefont
  {Chiribella}, \citenamefont {D'Ariano},\ and\ \citenamefont
  {Perinotti}}]{ChiribellaComb}%
  \BibitemOpen
  \bibfield  {author} {\bibinfo {author} {\bibfnamefont {G.}~\bibnamefont
  {Chiribella}}, \bibinfo {author} {\bibfnamefont {G.~M.}\ \bibnamefont
  {D'Ariano}},\ and\ \bibinfo {author} {\bibfnamefont {P.}~\bibnamefont
  {Perinotti}},\ }\bibfield  {title} {\bibinfo {title} {Quantum circuit
  architecture},\ }\href {https://doi.org/10.1103/PhysRevLett.101.060401}
  {\bibfield  {journal} {\bibinfo  {journal} {Phys. Rev. Lett.}\ }\textbf
  {\bibinfo {volume} {101}},\ \bibinfo {pages} {060401} (\bibinfo {year}
  {2008})}\BibitemShut {NoStop}%
\bibitem [{\citenamefont {Nakahira}\ and\ \citenamefont
  {Kato}(2021)}]{UltLimProDisc}%
  \BibitemOpen
  \bibfield  {author} {\bibinfo {author} {\bibfnamefont {K.}~\bibnamefont
  {Nakahira}}\ and\ \bibinfo {author} {\bibfnamefont {K.}~\bibnamefont
  {Kato}},\ }\bibfield  {title} {\bibinfo {title} {Simple upper and lower
  bounds on the ultimate success probability for discriminating arbitrary
  finite-dimensional quantum processes},\ }\href
  {https://link.aps.org/doi/10.1103/PhysRevLett.126.200502} {\bibfield
  {journal} {\bibinfo  {journal} {Phys. Rev. Lett.}\ }\textbf {\bibinfo
  {volume} {126}},\ \bibinfo {pages} {200502} (\bibinfo {year}
  {2021})}\BibitemShut {NoStop}%
\bibitem [{\citenamefont {{Montanaro}}(2008)}]{LBMont}%
  \BibitemOpen
  \bibfield  {author} {\bibinfo {author} {\bibfnamefont {A.}~\bibnamefont
  {{Montanaro}}},\ }\bibfield  {title} {\bibinfo {title} {A lower bound on the
  probability of error in quantum state discrimination},\ }in\ \href@noop {}
  {\emph {\bibinfo {booktitle} {2008 IEEE Information Theory Workshop}}}\
  (\bibinfo {year} {2008})\ pp.\ \bibinfo {pages} {378--380}\BibitemShut
  {NoStop}%
\bibitem [{\citenamefont {Barnum}\ and\ \citenamefont {Knill}(2002)}]{UBPGM}%
  \BibitemOpen
  \bibfield  {author} {\bibinfo {author} {\bibfnamefont {H.}~\bibnamefont
  {Barnum}}\ and\ \bibinfo {author} {\bibfnamefont {E.}~\bibnamefont {Knill}},\
  }\bibfield  {title} {\bibinfo {title} {Reversing quantum dynamics with
  near-optimal quantum and classical fidelity},\ }\href
  {https://aip.scitation.org/doi/abs/10.1063/1.1459754} {\bibfield  {journal}
  {\bibinfo  {journal} {Journal of Mathematical Physics}\ }\textbf {\bibinfo
  {volume} {43}},\ \bibinfo {pages} {2097} (\bibinfo {year}
  {2002})}\BibitemShut {NoStop}%
\bibitem [{\citenamefont {Weedbrook}\ \emph {et~al.}(2012)\citenamefont
  {Weedbrook}, \citenamefont {Pirandola}, \citenamefont {Garc\'{\i}a-Patr\'on},
  \citenamefont {Cerf}, \citenamefont {Ralph}, \citenamefont {Shapiro},\ and\
  \citenamefont {Lloyd}}]{GaussRev}%
  \BibitemOpen
  \bibfield  {author} {\bibinfo {author} {\bibfnamefont {C.}~\bibnamefont
  {Weedbrook}}, \bibinfo {author} {\bibfnamefont {S.}~\bibnamefont
  {Pirandola}}, \bibinfo {author} {\bibfnamefont {R.}~\bibnamefont
  {Garc\'{\i}a-Patr\'on}}, \bibinfo {author} {\bibfnamefont {N.~J.}\
  \bibnamefont {Cerf}}, \bibinfo {author} {\bibfnamefont {T.~C.}\ \bibnamefont
  {Ralph}}, \bibinfo {author} {\bibfnamefont {J.~H.}\ \bibnamefont {Shapiro}},\
  and\ \bibinfo {author} {\bibfnamefont {S.}~\bibnamefont {Lloyd}},\ }\bibfield
   {title} {\bibinfo {title} {Gaussian quantum information},\ }\href
  {https://doi.org/10.1103/RevModPhys.84.621} {\bibfield  {journal} {\bibinfo
  {journal} {Rev. Mod. Phys.}\ }\textbf {\bibinfo {volume} {84}},\ \bibinfo
  {pages} {621} (\bibinfo {year} {2012})}\BibitemShut {NoStop}%
\bibitem [{\citenamefont {Serafini}(2017)}]{SerafiniCV}%
  \BibitemOpen
  \bibfield  {author} {\bibinfo {author} {\bibfnamefont {A.}~\bibnamefont
  {Serafini}},\ }\href {https://books.google.co.uk/books?id=zHtgvgAACAAJ}
  {\emph {\bibinfo {title} {Quantum Continuous Variables: A Primer of
  Theoretical Methods}}}\ (\bibinfo  {publisher} {CRC Press, Taylor \& Francis
  Group},\ \bibinfo {year} {2017})\BibitemShut {NoStop}%
\bibitem [{\citenamefont {Adesso}\ \emph {et~al.}(2006)\citenamefont {Adesso},
  \citenamefont {Serafini},\ and\ \citenamefont {Illuminati}}]{multimodeGHZ}%
  \BibitemOpen
  \bibfield  {author} {\bibinfo {author} {\bibfnamefont {G.}~\bibnamefont
  {Adesso}}, \bibinfo {author} {\bibfnamefont {A.}~\bibnamefont {Serafini}},\
  and\ \bibinfo {author} {\bibfnamefont {F.}~\bibnamefont {Illuminati}},\
  }\bibfield  {title} {\bibinfo {title} {Multipartite entanglement in
  three-mode gaussian states of continuous-variable systems: Quantification,
  sharing structure, and decoherence},\ }\href
  {https://doi.org/10.1103/PhysRevA.73.032345} {\bibfield  {journal} {\bibinfo
  {journal} {Phys. Rev. A}\ }\textbf {\bibinfo {volume} {73}},\ \bibinfo
  {pages} {032345} (\bibinfo {year} {2006})}\BibitemShut {NoStop}%
\bibitem [{Note1()}]{Note1}%
  \BibitemOpen
  \bibinfo {note} {$P_{\protect \bm {a}\protect \bm {b}}$ are clearly unitary
  $P_{\protect \bm {a}\protect \bm {b}}^{\dagger }P_{\protect \bm {a}\protect
  \bm {b}} = P_{\protect \bm {a}\protect \bm {b}} P_{\protect \bm {a}\protect
  \bm {b}}^{\dagger } = I$, since permuting/re-permuting produces a net zero
  transformation}\BibitemShut {NoStop}%
\bibitem [{\citenamefont {Banchi}\ \emph {et~al.}(2015)\citenamefont {Banchi},
  \citenamefont {Braunstein},\ and\ \citenamefont {Pirandola}}]{GFid}%
  \BibitemOpen
  \bibfield  {author} {\bibinfo {author} {\bibfnamefont {L.}~\bibnamefont
  {Banchi}}, \bibinfo {author} {\bibfnamefont {S.~L.}\ \bibnamefont
  {Braunstein}},\ and\ \bibinfo {author} {\bibfnamefont {S.}~\bibnamefont
  {Pirandola}},\ }\bibfield  {title} {\bibinfo {title} {Quantum fidelity for
  arbitrary gaussian states},\ }\href
  {https://doi.org/10.1103/PhysRevLett.115.260501} {\bibfield  {journal}
  {\bibinfo  {journal} {Phys. Rev. Lett.}\ }\textbf {\bibinfo {volume} {115}},\
  \bibinfo {pages} {260501} (\bibinfo {year} {2015})}\BibitemShut {NoStop}%
\end{thebibliography}

%

\appendix

\section{Fidelity Degeneracies\label{sec:FidDegen}}
In order to compute the error bound quantities discussed in Section \ref{sec:Methods}, we require the following tools:

\subsection{CV-GHZ Output States}
\begin{lemma}\label{lemma:GUS}
For $m$-mode CV-GHZ states interacting with strictly $u$-CPF, GPI channel patterns $\bs{i}\in\mc{U}_{\text{\emph{CPF}}}^u$, the output ensemble possesses geometrical uniform symmetry (GUS) assuming equal priors $p_{\bs{i}} = {1}/C_{m}^u$.
\end{lemma}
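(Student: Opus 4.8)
The plan is to exhibit a finite group of unitaries under which the output ensemble $\{p_{\bs{i}},\, \rho_{\bs{i}} = \mc{E}_{\bs{i}}(\Phi_{\mu}^m)\}_{\bs{i}\in\mc{U}_{\text{CPF}}^u}$ decomposes as a single orbit carrying equal weights, which is precisely the defining property of GUS. The natural candidate group is the set of $m$-mode permutation unitaries $\{P_{\pi}\}_{\pi\in S_m}$, and the whole argument rests on two complementary symmetries: the permutation invariance of the CV-GHZ probe, and the permutation covariance of the channel patterns.

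First I would record the invariance of the input. Because the CV-GHZ covariance matrix $V_{\mu}^m$ has identical diagonal blocks $\mu I$ and identical off-diagonal blocks $\Gamma$, it is unchanged under any simultaneous permutation of its mode-indexed rows and columns; equivalently $P_{\pi}\,\Phi_{\mu}^m\,P_{\pi}^{\dagger} = \Phi_{\mu}^m$ for every $\pi\in S_m$. Next I would establish the covariance of the GPI pattern under mode relabelling. Since $\mc{E}_{\bs{i}} = \bigotimes_{k=1}^m \mc{E}_{i_k}$ acts locally, permuting the modes and then applying the pattern coincides with applying the relabelled pattern after permuting, i.e.\ $P_{\pi}\circ\mc{E}_{\bs{i}} = \mc{E}_{\pi(\bs{i})}\circ P_{\pi}$, where $\pi(\bs{i})$ denotes the pattern whose target/background labels have been permuted by $\pi$ (explicitly $\pi(\bs{i})_j = i_{\pi^{-1}(j)}$).

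Combining these two facts is the heart of the argument: conjugating an output state by $P_{\pi}$ and using both the covariance relation and the invariance of the probe gives
\begin{equation}
P_{\pi}\,\rho_{\bs{i}}\,P_{\pi}^{\dagger} = \mc{E}_{\pi(\bs{i})}\big(P_{\pi}\,\Phi_{\mu}^m\,P_{\pi}^{\dagger}\big) = \mc{E}_{\pi(\bs{i})}(\Phi_{\mu}^m) = \rho_{\pi(\bs{i})},
\end{equation}
so the permutation unitaries act on the output states exactly as $S_m$ acts on the patterns. To close the proof I would invoke transitivity: any two binary strings carrying precisely $u$ targets are related by some permutation of the $m$ positions, so $S_m$ acts transitively on $\mc{U}_{\text{CPF}}^u$. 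Fixing a reference pattern $\bs{i}_0$, every output state is then $\rho_{\bs{i}} = P_{\pi_{\bs{i}}}\,\rho_{\bs{i}_0}\,P_{\pi_{\bs{i}}}^{\dagger}$ for a suitable $\pi_{\bs{i}}$; together with the uniform priors $p_{\bs{i}} = 1/C_m^u$, the ensemble is a single unitary orbit with equal weights, which is GUS. As a consistency check, the orbit--stabiliser theorem gives orbit size $|S_m|/|S_u\times S_{m-u}| = C_m^u = |\mc{U}_{\text{CPF}}^u|$.

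The main obstacle I anticipate is that the action of $S_m$ on $\mc{U}_{\text{CPF}}^u$ is transitive but not regular, since its stabiliser $S_u\times S_{m-u}$ is nontrivial, and so the states are indexed by cosets rather than by group elements. I would therefore take care to invoke the orbit-based formulation of GUS (states forming a single orbit of a unitary group under uniform priors) rather than a formulation demanding a simply transitive, group-indexed representation, because no subgroup of order $C_m^u$ acts simply transitively in general. This transitive-orbit structure is exactly what is required for the downstream consequences that GUS is used for (a symmetric Gram structure of the output states and the associated fidelity degeneracies of Theorem~\ref{theorem:Degen1}), so it suffices for every application made of the lemma.
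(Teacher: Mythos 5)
Your proof is correct and follows essentially the same route as the paper's: both exploit the full permutation symmetry of the CV-GHZ input together with the fixed target number $u$ to conclude that every output state is a mode-permutation conjugate of a reference output state, which under uniform priors $p_{\bs{i}} = 1/C_m^u$ yields GUS. Your write-up is in fact more careful than the paper's short argument, since it makes the covariance relation $P_{\pi}\circ\mc{E}_{\bs{i}}=\mc{E}_{\pi(\bs{i})}\circ P_{\pi}$ explicit and correctly flags that the $S_m$-action on $\mc{U}_{\text{CPF}}^u$ is transitive but not free, so the orbit-based formulation of GUS is the one being invoked.
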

\begin{proof}
The strictly $u$-CPF image space with equal priors is given by $\mc{U}_{\text{CPF}}^u$ such that all $\bs{i}$ occur with probability $p_{\bs{i}} = {1}/{C_m^u}$ and possess exactly $u$-target channels. The output ensemble from CV-GHZ state probing provides an output ensemble $\{{1}/{C_m^u}; \Phi_{\bs{i},\mu}\}_{\bs{i}\in\mc{U}_{\text{CPF}}^u}$. Since the input state $\Phi_{\mu}$ is fully symmetric and the number of target channels is fixed, then any possible output state can be generated by permuting the modes of an initial output state. That is, we can devise a set of a symmetry unitaries $\{ S_{\bs{i}}^u \}_{\bs{i}\in\mc{U}_{\text{CPF}}^u}$ such that for all $\bs{i},\bs{j}  \in \mc{U}_{\text{{CPF}}}^u$ we can write $\Phi_{\bs{i},\mu} = S_{\bs{i}}^u \Phi_{\bs{j},\mu} S_{\bs{i}}^{u\dagger}$. When $\bs{i}=\bs{j}$, then the permutation operator is simply the $m$-mode identity. 
\end{proof}\\

Hence all output states from a $u$-CPF image space have GUS, and this can be exploited with respect to the fidelity.\\

\begin{proof} \textit{(Proof of Theorem \ref{theorem:Degen1}):}\\
Consider the output quantum state ensembles according to the probing of a $u$-CPF image space $\bs{i}\in \mc{U}_{\text{{CPF}}}^u$ and the $v$-CPF image space $\bs{i}^{\prime} \in \mc{U}_{\text{{CPF}}}^v$ using bosonic CV-GHZ states $\Phi_{\mu}$. By Lemma~\ref{lemma:GUS} these output ensembles possess GUS. Therefore we can generate any possible pair of output states using the permutation operators $S_{\bs{i}}^u$ and $S_{{\bs{i}^{\prime}}}^v$ respectively, and using any initial output state from each space, which we label $\Phi_{\bs{j},\mu}$ and $\Phi_{\bs{j}^{\prime},\mu}$. Then any fidelity between output states from these image spaces is given by,
\begin{equation}
F( \Phi_{\bs{i},\mu}, \Phi_{\bs{i}^{\prime},\mu}) = F(S_{\bs{i}}^u \Phi_{\bs{j},\mu} {S_{\bs{i}}^{u\dagger}}, S_{\bs{i}^{\prime}}^v \Phi_{\bs{j}^{\prime},\mu} {S_{\bs{i}^{\prime}}^{v\dagger}}).
\end{equation}
The fidelity will only be degenerate when the symmetry unitaries $S_{\bs{i}}^{u} = S_{\bs{i}^{\prime}}^{v}$ are equal, since identically applied unitaries $U$ cannot increase the distance between quantum states $F(U\rho U^{\dagger}, U\sigma U^{\dagger}) = F(\rho,\sigma)$. These symmetry operators $S_{\bs{i}}^{u} = S_{\bs{i}^{\prime}}^{v} = P_{\bs{a}\bs{b}}$ are simply permutation unitaries that exchange the modes $\bs{a} = \{a_1,\ldots\,a_n\}$ and $\bs{b} = \{b_1,\ldots,b_n\}$ \footnote{$P_{\bs{a}\bs{b}}$ are clearly unitary $P_{\bs{a}\bs{b}}^{\dagger}P_{\bs{a}\bs{b}} = P_{\bs{a}\bs{b}} P_{\bs{a}\bs{b}}^{\dagger} = I$, since permuting/re-permuting produces a net zero transformation}. Since the identical permutation of modes of both states does not alter the Hamming distance between re-ordered patterns, it follows that
\begin{equation}
F(P_{\bs{a}\bs{b}} \Phi_{\bs{i},\mu}P_{\bs{a}\bs{b}}^{\dagger} , P_{\bs{a}\bs{b}} \Phi_{\bs{i}^{\prime},\mu} P_{\bs{a}\bs{b}}^{\dagger}) = F(\Phi_{ \bs{i},\mu},\Phi_{\bs{i}^{\prime},\mu}),
\end{equation}
and therefore the fidelity is then degenerate with respect to sub-spaces $\mc{X} \subseteq \mc{U}_{\text{{CPF}}}^u$ and $\mc{Y} \subseteq \mc{U}_{\text{{CPF}}}^v$ within which the Hamming distance is preserved.
\end{proof}\\

The exploitation of fidelity degeneracies is the most critical tool in this paper, and is effectively used to simplify and analyse error bounds for general multipartite input states.

\subsection{$k_{\text{max}}$-LNN Output States}
Consider strictly $u$-CPF patterns $\bs{i} \in \kCPF{u}$. The use of a $k_{\max}$-LNN dynamic block protocol means that within every full round of discrimination, we will use TMSV states to probe every possible pair of channels within the channel pattern. Thus, each channel is probed in conjunction with exactly $k_{\max} = m - 1$ other channels. This probe domain distribution takes the general form,
\begin{equation}
 \mc{S}_{k_{\max}}^m =  \bigcup_{i=1}^{m-1} \bigcup_{j=i+1}^{m} \big\{ \{ i, j \} \big\}
\end{equation}
For example, for $m=4$, this probe-domain distribution takes the explicit form,
\begin{equation}
 \mc{S}_{3}^4 = \big\{\{1,2\}, \{1,3\}, \{1,4\}, \{2,3\}, \{2,4\}, \{3,4\}\big\}
\end{equation}
where it is clear that each channel label occurs precisely $k_{\max}=m-1=3$ times throughout the distribution.

A $k_{\text{max}}$-LNN dynamic protocol can be studied by performing the dynamic to fixed protocol transformation introduced in Ref.~\cite{MP_IdlerFree}; converting $m$-length channel patterns into a modified $m(m-1)$-length channel pattern within which copies of channels have been placed to mimic the dynamic protocol. The transformation of a generic $m$-length channel pattern then follows the probe-domain distribution as discussed in the main text,
 \begin{align}
  \bs{i} = \{i_1,\ldots,i_m\} \mapsto \bs{\nu_i} &  =\biguplus_{\bs{s}\in\mc{S}_{k_{\text{max}}}^m} \{ i_j \}_{j\in\bs{s}}, \\
 & = \biguplus_{n = 1}^{\frac{m(m-1)}{2}} \{i_j\}_{\bs{s}_n},
 \end{align}
where in the last expression we are explicitly indexing the probe domains within the total $k_{\max}$-LNN probe-domain distribution. Hence, there are exactly ${m(m-1)}/{2}$ channel pairs that are considered within $\mc{S}_{k_{\max}}^m$.

Given this probe domain distribution, we may define the corresponding unassisted quantum input state that it constructs. Let us denote TMSV states using $\varphi_{\mu}$ so to distinguish them more easily from potentially larger CV-GHZ states. Then the (single-copy, $M=1$) input state for a $k_{\max}$-LNN dynamic protocol input state as the tensor product $\frac{m(m-1)}{2}$ TMSV states, accounting for each pair of channels that are probed together in the protocol,
\begin{equation}
\bs{\varphi}_{\mu} \defeq \bigotimes_{c=1}^{\frac{m(m-1)}{2}} \varphi_{\mu}.
\end{equation}
Since each sub-state in the global input state is of the same mean photon number and are themselves fully symmetric, then the total state $\bs{\varphi}_{\mu}$ is also fully symmetric. 
We can then analyse the performance of these protocols by studying the output states of $\bs{\varphi}_{\mu}$ being irradiated over the set of $m(m-1)$-length modified channel patterns $\{\bs{\nu_i}\}_{\bs{i}\in\mc{U}}$.
Any output state then takes the form
\begin{equation}
\bs{\varphi}_{\bs{\nu_i}, \mu} = \mc{E}_{\bs{\nu_i}} \left( \bs{\varphi}_{\mu} \right) = \bigotimes_{\bs{s}\in\mc{S}_{k_{\text{max}}}^m} \mc{E}_{\bs{\nu_{i_{s}}}}(\varphi_{\mu}),
\end{equation}
where $\bs{\nu_{i_{s}}}$ is a sub-pattern from the larger modified channel pattern. 

We now reach an extremely important point. Let's explicitly consider the transformation of a $u$-CPF image space according to $\mc{S}_{k_{\max}}^m$, i.e.~$ \mc{U}_{\text{CPF}}^u \mapsto \{ \bs{\nu_i} \}_{\bs{i} \in\mc{U}_{\text{CPF}}^u} $. For every possible $\bs{i}\in\mc{U}_{\text{CPF}}^u$ there are precisely $u$ target channels. As discussed in Lemma \ref{lemma:GUS}, since the number of targets is always $u$ then every pattern $\bs{i}$ can be generated by simply permuting the positions of target and background channels within the pattern. Under the $k_{\text{max}}$-LNN dynamic protocol every channel is probed in conjunction with every other channel, therefore every pattern in the modified image space must also contain precisely $(m-1)u$ target channels. It then follows that any pattern $\bs{\nu_i} \in \{ \bs{\nu_i} \}_{\bs{i} \in\mc{U}_{\text{CPF}}^u}$ can be generated by permuting the positions of \textit{pairs of target and background channels} within the pattern. Because of this, we find that the output ensembles of these protocols interacting with $u$-CPF image spaces possess GUS.\\

\begin{lemma}\label{lemma:GUSMax}
The output ensemble of a $k_{\text{max}}$-LNN dynamic protocol interacting with strictly $u$-CPF, GPI channel patterns $\bs{i} \in \mc{U}_{\text{CPF}}^u$ possesses GUS, assuming equal priors $p_{\bs{i}} = 1/C_m^u$.
\end{lemma}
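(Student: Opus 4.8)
The plan is to mirror the argument of Lemma~\ref{lemma:GUS}, replacing the single $m$-mode CV-GHZ state with the fully symmetric $k_{\max}$-LNN input $\bs{\varphi}_{\mu}$ and tracking how a relabelling of the physical channels lifts to a mode permutation of the modified $m(m-1)$-length pattern. The essential observation, already flagged in the preceding discussion, is that the probe-domain distribution $\mc{S}_{k_{\max}}^m$ contains \emph{every} unordered pair of channel labels. Consequently, any permutation $\pi$ of the $m$ physical channels maps the set of probed pairs bijectively onto itself, and therefore induces a well-defined permutation $\tilde{\pi}$ of the $\frac{m(m-1)}{2}$ TMSV sub-states, hence a mode permutation $P_{\tilde{\pi}}$ acting on the $m(m-1)$ modes of $\bs{\nu_i}$.

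First I would fix a seed pattern $\bs{j}\in\mc{U}_{\text{CPF}}^u$ and recall that, because every pattern in $\mc{U}_{\text{CPF}}^u$ carries exactly $u$ targets, any other $\bs{i}\in\mc{U}_{\text{CPF}}^u$ is obtained as $\bs{i}=\pi(\bs{j})$ for some channel relabelling $\pi$. I would then invoke the standard permutation covariance of the modified channel pattern,
\begin{equation}
\mc{E}_{\bs{\nu}_{\pi(\bs{j})}}(\cdot) = P_{\tilde{\pi}}\,\mc{E}_{\bs{\nu_j}}\!\big(P_{\tilde{\pi}}^{\dagger}(\cdot)P_{\tilde{\pi}}\big)P_{\tilde{\pi}}^{\dagger},
\end{equation}
which holds precisely because permuting channel labels and then applying the identical-property GPI channels is the same as applying the channels and then permuting the outputs.

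The key step is to combine this covariance with the full symmetry of the input. Since $\bs{\varphi}_{\mu}$ is a tensor product of identical TMSV states and is itself fully symmetric, we have $P_{\tilde{\pi}}^{\dagger}\bs{\varphi}_{\mu}P_{\tilde{\pi}} = \bs{\varphi}_{\mu}$. Substituting $\bs{\varphi}_{\mu}$ into the covariance relation collapses the inner conjugation, yielding (for $\bs{i}=\pi(\bs{j})$)
\begin{equation}
\bs{\varphi}_{\bs{\nu_i},\mu} = P_{\tilde{\pi}}\,\bs{\varphi}_{\bs{\nu_j},\mu}\,P_{\tilde{\pi}}^{\dagger},
\end{equation}
so every output state is a unitary image of the single seed state $\bs{\varphi}_{\bs{\nu_j},\mu}$. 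Finally I would note that $\pi\mapsto P_{\tilde{\pi}}$ is a group homomorphism into the unitaries, so the symmetry operators $\{P_{\tilde{\pi}}\}$ form a group; together with the transitivity of the relabelling action on $\mc{U}_{\text{CPF}}^u$ and the equal priors $p_{\bs{i}}=1/C_m^u$, this is exactly the geometrically uniform symmetry of the output ensemble.

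The main obstacle I anticipate is not the algebra but making the lifting $\pi\mapsto\tilde{\pi}$ rigorous: one must verify that the induced action on the \emph{modified} pattern is genuinely a mode permutation compatible with the multiset concatenation $\uplus$ used to build $\bs{\nu_i}$, rather than merely a reshuffling of sub-states up to the internal symmetry of each TMSV block. This amounts to checking that relabelling channels commutes with the construction of $\mc{S}_{k_{\max}}^m$, which is transparent only because the distribution is the \emph{complete} set of pairs; for a general $k$-LNN neighbourhood this compatibility fails, which is precisely why the degeneracy is recovered only at $k=k_{\max}$.
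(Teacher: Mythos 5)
Your proposal is correct and follows essentially the same route as the paper's own proof: both exploit the full symmetry of the product-of-identical-TMSV input together with the fact that $\mc{S}_{k_{\max}}^m$ is the complete set of channel pairs, so that any two modified patterns with $u$ targets are related by a mode permutation, yielding symmetry unitaries $\bs{\varphi}_{\bs{\nu_i},\mu} = P\,\bs{\varphi}_{\bs{\nu_j},\mu}\,P^{\dagger}$ and hence GUS under equal priors. The only difference is one of rigour rather than substance: where the paper asserts that output states can be generated by ``permuting/rearranging TMSV sub-states,'' you make this explicit by lifting a channel relabelling $\pi$ to a pair permutation $\tilde{\pi}$ and invoking channel covariance plus input invariance, and your closing remark that the lift exists only because the pair set is complete correctly reproduces the paper's observation that GUS fails for all $k<k_{\max}$.
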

\begin{proof}
Under this protocol and image space, an output ensemble then takes the form, $\{1/C_m^u; \bs{\varphi}_{\bs{\nu_i},\mu}\}_{\bs{i}\in \mc{U}_{\text{CPF}}^u}$. Since $\bs{\varphi}_{\mu}$ is fully symmetric and the number of target channels is fixed for all patterns in the modified pattern space $\{ \bs{\nu_i} \}_{\bs{i} \in\mc{U}_{\text{CPF}}^u}$, then any possible output state can be generated by permuting/rearranging TMSV sub-states $\mc{E}_{\bs{\nu_{i_{s}}}}(\varphi_{\mu})$ within the global output state. Therefore, we can devise a set of symmetry unitaries $\{ S_{\bs{\nu_i}}^u \}_{\bs{i} \in\mc{U}_{\text{CPF}}^u}$ such that for all $\bs{i},\bs{j}\in\mc{U}_{\text{CPF}}^u$ we can write
$
\bs{\varphi}_{\bs{\nu_i},\mu} = S_{\bs{\nu_i}}^u \bs{\varphi}_{\bs{\nu_j},\mu} S_{\bs{\nu_i}}^{u \dag}$ which permute/exchange pairs of modes within the output state. When $\bs{i} = \bs{j}$, then  $\bs{\nu_i} = \bs{\nu_j}$ and the symmetry unitaries are simply $m(m-1)$-mode identity operators. Hence the output ensemble possesses GUS.
\end{proof}\\

Interestingly, GUS does not apply to output ensembles from dynamic protocols generated from any other $\mc{S}_k^m$ probe distributions. This is because no other $k$-LNN dynamic protocol invokes a channel pattern transformation which ensures that all patterns in the modified image space possess the same number of targets. By choosing any $k < m-1$, then the corresponding modified image space may have varying numbers of target channels in each $\bs{\nu_i}$. When this is the case, it is not possible to transform between patterns via symmetry operators, since symmetry operators are unable to change the number of targets in the channel pattern; thus removing GUS.
As a result of Lemma~\ref{lemma:GUSMax}, we can produce an analogous fidelity degeneracy theorem for $k_{\max}$-LNN dynamic block protocols.

\begin{theorem}\label{theorem:DegenMax}
Consider two subsets of $u/v$-CPF image spaces respectively $\mc{X}$ and $\mc{Y}$, where
\begin{gather} 
d_h(\bs{i},\bs{i}^{\prime}) = d > 0,\\
\forall \bs{i} \in \mc{X}\subseteq\mc{U}_{\text{\emph{CPF}}}^u, \bs{i}^{\prime} \in \mc{Y} \subseteq\mc{U}_{\text{\emph{CPF}}}^v.
\end{gather} 
This means that $\mc{X}$ and $\mc{Y}$ are Hamming distance preserving subsets of $~\mc{U}_{\text{\emph{CPF}}}^u$ and $\mc{U}_{\text{\emph{CPF}}}^v$. 
Let $\bs{i},\bs{i}^{\prime} \mapsto \bs{\nu_i},\bs{\nu}_{\bs{i}^{\prime}}$ be the mapping of the original patterns to a modified image space according to a $k_{\text{max}}$-LNN dynamic block protocol.
For two GPI channel patterns $\mc{E}_{\bs{i}}, \mc{E}_{\bs{i}^{\prime}}$ with identical physical properties, and an $m(m-1)$-mode product of TMSV states $\bs{\varphi}_{\mu}$,
it follows that the fidelity
\begin{equation}
F\left[\mc{E}_{\bs{\nu_i}}(\bs{\varphi}_{\mu}),\mc{E}_{\bs{\nu}_{\bs{i}^{\prime}}}(\bs{\varphi}_{\mu}) \right] = F(\bs{\varphi}_{\bs{\nu_i},\mu},\bs{\varphi}_{\bs{\nu}_{\bs{i}^{\prime}},\mu} ),
\end{equation}
is degenerate for all $\mc{X} \ni \bs{i} \neq \bs{i}^{\prime} \in \mc{Y}$. 
\end{theorem}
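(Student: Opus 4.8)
The plan is to prove Theorem~\ref{theorem:DegenMax} as the near-verbatim $k_{\max}$-LNN analogue of Theorem~\ref{theorem:Degen1}, with Lemma~\ref{lemma:GUSMax} doing the heavy lifting in place of Lemma~\ref{lemma:GUS}. The only structural fact I need is that the Bures fidelity is invariant under identical unitary conjugation, $F(U\rho U^{\dagger}, U\sigma U^{\dagger}) = F(\rho,\sigma)$; combined with geometric uniform symmetry of the modified output ensembles, this forces the fidelity to depend only on the target numbers $u,v$ and the Hamming distance $d$.

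First I would invoke Lemma~\ref{lemma:GUSMax}: for patterns drawn from $\kCPF{u}$ and $\kCPF{v}$, every modified output state can be written relative to fixed reference states $\bs{\varphi}_{\bs{\nu_j},\mu}$ and $\bs{\varphi}_{\bs{\nu}_{\bs{j}^{\prime}},\mu}$ as $\bs{\varphi}_{\bs{\nu_i},\mu} = S_{\bs{\nu_i}}^u \bs{\varphi}_{\bs{\nu_j},\mu} S_{\bs{\nu_i}}^{u\dagger}$ and likewise for the primed space. Substituting these into $F(\bs{\varphi}_{\bs{\nu_i},\mu},\bs{\varphi}_{\bs{\nu}_{\bs{i}^{\prime}},\mu})$ expresses the fidelity purely in terms of the two symmetry unitaries. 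Next I would argue that degeneracy occurs precisely when the two symmetry operators coincide as a common permutation $P_{\bs{a}\bs{b}}$, since then unitary invariance collapses the fidelity onto that of the reference pair. Finally, because a common relabelling of channels leaves $d_h(\bs{i},\bs{i}^{\prime})$ unchanged, the fidelity is constant across the Hamming-distance-preserving subsets $\mc{X}$ and $\mc{Y}$, which is the claimed degeneracy.

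The hard part will be the step that distinguishes this result from Theorem~\ref{theorem:Degen1}: the symmetry operators here act on the extended pattern $\bs{\nu_i}$ of length $m(m-1)$, not on the bare $m$-length pattern. I would therefore need to verify that a permutation $\pi$ of the \emph{original} channel labels carrying $(\bs{i},\bs{i}^{\prime})\mapsto(\bs{j},\bs{j}^{\prime})$ lifts to a \emph{single} mode permutation that simultaneously conjugates both modified output states. This works exactly because $\mc{S}_{k_{\max}}^m$ contains every unordered pair $\{i,j\}$ and is hence invariant under relabelling: $\pi$ merely permutes the TMSV sub-states amongst themselves, inducing a well-defined $P_{\bs{a}\bs{b}}$ on the $m(m-1)$ modes. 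I would also record the elementary counting fact that any two pattern pairs sharing target numbers $(u,v)$ and Hamming distance $d$ have identical position-type counts $(n_{00},n_{01},n_{10},n_{11})$ and are therefore connected by one such permutation. This permutation-covariance of the probe-domain distribution is precisely what breaks down for generic $k < m-1$ (where the modified patterns need not share a fixed target number), which is why both GUS and the degeneracy are special to the $k_{\max}$ protocol.
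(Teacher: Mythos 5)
Your proposal is correct and follows essentially the same route as the paper: invoke Lemma~\ref{lemma:GUSMax} to establish GUS of the modified output ensembles, then run the Theorem~\ref{theorem:Degen1} argument that a common permutation unitary leaves the Bures fidelity invariant while preserving the Hamming distance between the underlying patterns. The paper's own proof is precisely this two-line reduction; your extra verification that a relabelling of the original $m$ channels lifts to a single permutation of the $m(m-1)$ modes (because $\mc{S}_{k_{\max}}^m$ contains every unordered pair and is thus relabelling-invariant), together with the type-count argument guaranteeing such a relabelling exists for fixed $(u,v,d)$, spells out details the paper leaves implicit rather than taking a different approach.
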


\begin{proof}
By Lemma \ref{lemma:GUSMax} the output ensemble satisfies GUS, thus the proof follows in direct analogy to Theorem \ref{theorem:Degen1}. 
\end{proof}\\

This theorem is used in Appendix \ref{sec:GenCPF} to derive explicit fidelity formulae for Hamming distance preserving sub-spaces of general CPF image spaces.\par

\section{$k$-LNN CPF Bounds \label{sec:kLNNApp}}
While the CPF bounds presented in Eq.~(\ref{eq:kLNNCPF}) can be immediately generated by means of semi-numerical methods and Eq.~(\ref{eq:TM_EB}), it is possible to provide a more intuitive explanation behind these results.\par
\subsection{$k=1$ (Disjoint)}
Consider first the case of $k=1$, and an even number of channels, $m=2l > 2, l\in\mathbb{N}$. Then the probe distribution is disjoint and requires no dynamic to fixed protocol image space modification. 
Since the task of CPF is embodied by an image space with patterns that have precisely one target channel, any pair of CPF patterns can be characterised by the pair of indices of those single target channels, i.e $\bs{i},\bs{i}^{\prime} \rightarrow i,i^{\prime}$. For example, the 3-length pattern $\bs{i} = \{B,T,B\} = \{0,1,0\} \rightarrow i = 2$. 
When constrained to two mode input states over an even number of $m$-channels and iterating over all unequal channel patterns, then there are only two unique sub-fidelities that will occur:
\begin{enumerate}
\item A pair of sub-patterns $i_{\bs{s}} \in \{1,2\} \rightarrow \bs{i}_{\bs{s}}\in \{\{1,0\},\{0,1\}\} $ and $ i_{\bs{s}}^{\prime} =0 \rightarrow \bs{i}_{\bs{s}}^{\prime} = \{0,0\}$, leading to the sub-fidelity $F_{0:1}$.
\item A pair of sub-patterns $i_{\bs{s}} = 1 \mapsto \bs{i}_{\bs{s}} = \{1,0\}$ and $i_{\bs{s}}^{\prime} = 2 \mapsto \bs{i}_{\bs{s}}^{\prime} = \{0,1\}$ (and vice versa) leading to the sub-fidelity $F_{1:1}$.
\end{enumerate}
What is the distribution of these fidelities when we consider the sum over all non-unit fidelities throughout the image space $\mc{U}_{\text{CPF}}$? The sub-fidelity $F_{1:1}$ will only occur when we sample a sub-pattern pair with targets that belong to the same probe-domain, $i_{\bs{s}}^{\prime} \in \mc{N}_1(i_{\bs{s}})$. This will occur exactly $m$ times, since each channel only has one neighbour.\par 
For all other sampled sub-pattern pairs, the sub-fidelity $F_{0:1}$ will occur twice per pattern when the target channels are outside of each other's neighbourhoods. This remaining number of patterns is $(2C_{m}^2 - m)$, which leads to the following fidelity distribution:
\begin{equation}
m F_{1:1} + (2C_{m}^2 - m)F_{0:1}^2,
\end{equation}
which can then be used to compose Eq.~(\ref{eq:CPF_disj}). When the number of channels is no longer even, this result does not apply since we cannot disjointly partition an odd number of channels into two-mode groupings.

\subsection{$k \geq 2$ (Non-Disjoint)}
Let us complicate matters and let $k=2$. Now there is no restriction on $m$-channel parity, since we can construct lattices of 2-local neighbourhoods regardless of the number of channels. Defining the non-disjoint probe distribution $\mc{S}_{m,2}$, we transform the image space according to the dynamic to fixed protocol mapping, retrieving the modified image space $\{ \bs{\nu_i} \}_{\bs{i}\in\mc{U}_{\text{CPF}}}$. Disjointly interacting with this modified space using a $2m$-mode collection of TMSV states is then equivalent to studying the dynamic protocol. The extended neighbourhood domain alters the distribution of sub-fidelities across the pattern. Instead of having only one target channel per pattern $\bs{i}$, we have two targets per modified pattern $\bs{\nu_i}$. We may obtain the following scenarios:
\begin{enumerate}
\item Suppose we draw two unequal global patterns $i,i^{\prime}$. If $i^{\prime} \in \mc{N}_2(i)$ (i.e.~the target channel in $\bs{i}^{\prime}$ is in the neighbourhood of the target channel in $\bs{i}$) then we will obtain a fidelity $F_{1:1}$, for the same reasons as in the $k=1$ case. But now due to the copies of channels which are inserted into the modified image space, there will simultaneously be two further instances of the sub-pattern pairs $\bs{i}_{\bs{s}} = \{0,1\}$ and $\bs{i}_{\bs{s}}^{\prime} = \{1,0\}$ (and vice versa). This unique sub-fidelity corresponds to $F_{0:1}^2 F_{1:1}$.

\item  Suppose that the target channel in $\bs{i}^{\prime}$ is not in the channel neighbourhood of $\bs{i}$, that is $i^{\prime} \notin \mc{N}_2(i)$. Then there will occur 4 instances of the sub-pattern pair $\bs{\nu}_{\bs{i}_{\bs{s}}} = \{0,0\}$ and $\bs{\nu}_{\bs{i}_{\bs{s}}^{\prime}} = \{1,0\}$ in the modified pair of patterns, since there are no common neighbours. This corresponds to the fidelity $F_{0:1}^{4}$.
\end{enumerate}

The distribution of these fidelities is easy: Each channel possesses $2$ neighbours, therefore the fidelity $F_{0:1}^2 F_{1:1}$ will occur exactly $2m$ times. The remaining sub-fidelities which take the form $F_{0:1}^{4}$ will occur $(2C_{m}^2 - 2m)$ times. The total distribution is thus,
\begin{equation}
2mF_{0:1}^2 F_{1:1} + 2(C_{m}^2 - m)F_{0:1}^4.
\end{equation}\par
The generalisation to $k$-neighbours is now immediate. For CPF image spaces, there will still only be two forms of output fidelity based on whether the target channel in the sampled pattern is within the other's $k$-width neighbourhood. As the neighbourhoods become larger, we inherit one fidelity more frequently than the other. If we utilise $k$-width neighbourhoods,
\begin{equation}
\prod_{F_{u:v}\in\bs{F}_2} F_{u:v}^{c_{u:v}(\bs{i},\bs{i}^{\prime})} = 
\begin{cases}
F_{0:1}^{2k - 2} F_{1:1}, & \text{if } i^{\prime}\in \mc{N}_k(i),\\
F_{0:1}^{2k}, & \text{if } i^{\prime}\notin \mc{N}_k(i).
\end{cases}
\end{equation}
This is due to the increase in copy channels across the modified channel pattern, reducing the number of instances of identical sub-patterns. The fidelity $F_{0:1}^{2k - 2} F_{1:1}$ will occur $km$ times since each channel has $k$-neighbours, and the remaining $(2C_m^2 - km)$ fidelities will be $F_{0:1}^{2k}$. The distribution then follows,
\begin{equation}
kmF_{0:1}^{(2k-2)} F_{1:1} + (2C_{m}^2 - km)F_{0:1}^{2k}.
\end{equation}
which produces Eq.~(\ref{eq:kLNNCPF}). It is then clear that when we set $k = k_{\max} = m-1$ that we gather
\begin{equation}
m(m-1)F_{0:1}^{2(m-2)} F_{1:1},
\end{equation}
where the sub-fidelity term $F_{0:1}^{2k}$ no longer has any contribution, since there is never an instance where the target channel index in $\bs{i}^{\prime}$ is not in the neighbourhood of the target channel in $\bs{i}$.

\begin{figure*}
\includegraphics[width=\linewidth]{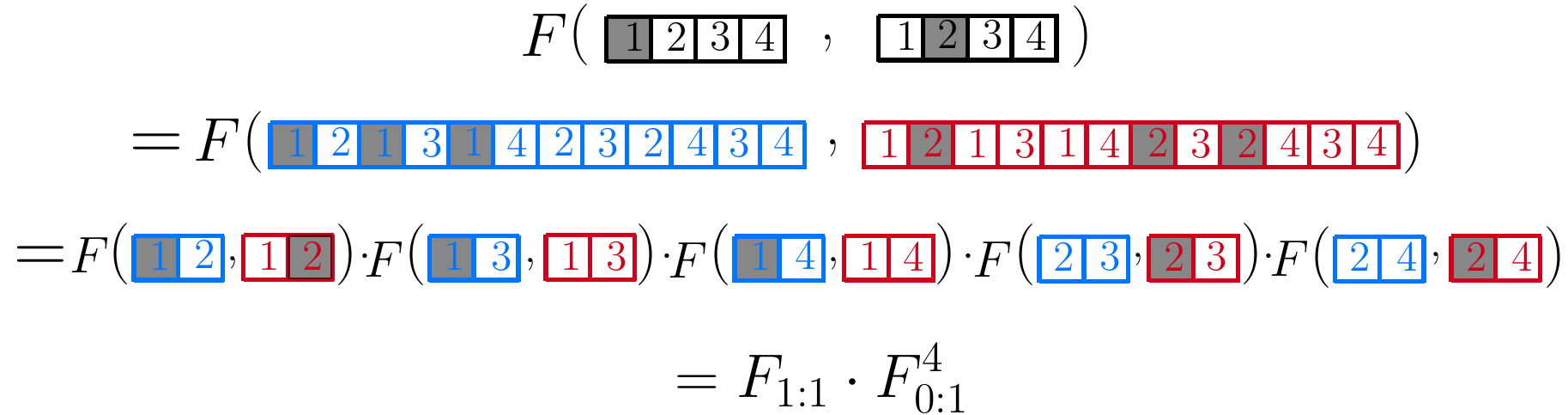}
\caption{An example simplification of the output fidelity between $m=4$ length CPF channel patterns when using a $k_{\max}$-LNN dynamic block protocol with probe-domain distribution {$\mc{S}_{3}^4 = \{\{1,2\},\{1,3\},\{1,4\},\{2,3\},\{2,4\},\{3,4\}\}$}. The fidelity between output states from modified channel patterns (blue and red) can be simplified via the multiplicativity of the fidelity and the degeneracy properties of CV-GHZ states. }
\end{figure*}

\section{$k_{\max}$-LNN Generalised CPF Bounds \label{sec:GenCPF}}

In the previous Appendix we derived the distributions of output fidelities according to $k$-LNN dynamic protocols over the CPF image space. More generally, we may investigate $u$-CPF image spaces in which each channel pattern $\bs{u}\in\mc{U}_{\text{CPF}}^u$ contains exactly $u$ target channels. For general $k$-width neighbourhoods this is analytically very difficult, however one can always algorithmically employ Eq.~(\ref{eq:TM_EB}) in order to generate analytical expressions. But when $k=k_{\max}=m-1$, the fidelity degeneracy properties unveiled in Theorem \ref{theorem:DegenMax} make it much easier to derive precise, analytical expressions. In particular, we wish to derive a formula for the output fidelity of unequal channel patterns from the same $u$-CPF image space, and then unique $u\neq v$-CPF image spaces.

As proven in Theorem \ref{theorem:DegenMax}, the output fidelities of $k_{\text{max}}$-LNN probing protocols defined by non-disjoint probe distributions $\mc{S}_{k_{\text{max}}}^m$ are degenerate with respect to $u/v$-CPF image sub-spaces that preserve the Hamming distance. This is extremely useful, since it allows us to utilise the tools from Section \ref{sec:GenMeth}, and in particular, make use of Eq.~(\ref{eq:TM_EB}).
Theorem 2 tells us that we can define a unique output fidelity for $k_{\text{max}}$-LNN dynamic protocols which is analogous to that of $m$-mode CV-GHZ states irradiated over $m$-length bosonic Gaussian channel patterns. 
More precisely, for a pair of {target numbers} $u$ and $v$ which refer to the number of target channels in $\bs{i}$ and $\bs{i}^{\prime}$ respectively and the Hamming distance between these patterns, ${d_h}(\bs{i},\bs{i}^{\prime}) = d$ we can define the unique fidelity function,
\begin{gather}
F_{[u:v|d]}^{k_{\max}} \defeq F({\bs{\varphi}}_{\bs{\nu_i},\mu},\bs{\varphi}_{\bs{\nu}_{\bs{i}^{\prime}},\mu}),\\
\forall \bs{i}\in\mc{U}_{\text{CPF}}^u, \text{  } \bs{i}^{\prime}\in\mc{U}_{\text{CPF}}^v, \text{ such that } {d_h}(\bs{i},\bs{i}^{\prime}) = d.
\end{gather}
Importantly, since the fidelity is multiplicative, it is possible to express any unique fidelity $F_{[u:v|d]}^{k_{\max}}$ as a product of sub-fidelities from the two-mode constrained unique fidelity set. More precisely, we can write
\begin{align}
F_{[u:v|d]}^{k_{\max}} &=  F(\bs{\varphi}_{\bs{\nu_{i}},\mu}, \bs{\varphi}_{\bs{\nu}_{\bs{i}^{\prime}},\mu}), \\
&= F\Big( \bigotimes_{\bs{s}} \bs{\varphi}_{\bs{\nu_{i_{s}}},\mu}, \bigotimes_{\bs{s}} \bs{\varphi}_{\bs{\nu}_{\bs{i_s}^{\prime}},\mu} \Big),\\
&\overset{(1)}{=} \prod_{\bs{s}\in\mc{S}_{k_{\text{max}}}^m}\hspace{-3mm} F (\varphi_{{\bs{\nu_{i_{s}}}},\mu}, \varphi_{\bs{\nu}_{\bs{i_s}^{\prime}}, \mu}),\\
&\overset{(2)}{=} \prod_{F_{x:y}\in \bs{F}_2} F_{x:y}^{f_{x:y}(u,v,d)}, \\
&\overset{(3)}{=} F_{0:1}^{f_{0:1}(u,v,d)}F_{1:1}^{f_{1:1}(u,v,d)}F_{0:2}^{f_{0:2}(u,v,d)}F_{1:2}^{f_{1:2}(u,v,d)}.
\end{align}
where in (1) we exploit the multiplicativity of the fidelity and in (2) we use the fidelity degeneracy of $k_{\max}$-LNN output states, and (3) we use the definition of the two-mode constrained unique fidelity set $\bs{F}_2$. Here, $f_{x:y}(u,v,d)$ is a counting function that counts the number of times the two-mode sub-fidelity $F_{x:y}$ contributes to the total output fidelity, given the target numbers $u,v$ and Hamming distance $d$ between two channel patterns $\bs{i},\bs{i}^{\prime}$.

\subsection{$u$-CPF Output Fidelities}
 Let us first focus on this unique fidelity function between strictly $u$-CPF patterns $\bs{i},\bs{i}^{\prime} \in \mc{U}_{\text{CPF}}^u$, i.e.~we wish to study the fidelity function $F_{[u:u|d]}^{k_{\max}}$. 
 We find that it is possible to explicitly derive the counting functions $f_{x:y}(u,v,d)$ via recursion. 
 Here we provide a brief sketch of how this can be achieved.
 Consider two initially identical $m=5$ length channel patterns $\bs{i} = \bs{i}^{\prime}$ for $u=2$,
 \begin{equation}
\bs{i} = \{1,0,0,1,0\}, ~ \bs{i}^{\prime} = \{1,0,0,1,0\}, ~d = 0,
\end{equation} 
where the Hamming distance $d$ between them is clearly zero since they are identical. The fidelity between these two identical output states is of course $F(\bs{\varphi}_{\bs{\nu_i},\mu},\bs{\varphi}_{\bs{\nu}_{\bs{i}^{\prime}},\mu}) = 1$ meaning that initially all the counting functions should satisfy $f_{x:y}(u,u,0) = 0$. What happens when the Hamming distance is increased?
The smallest increment that the Hamming distance can be increased by is 2, corresponding to the \textit{exchange} of the position of a target channel within one of the patterns. For example, we can exchange the positions of a target and background channel in $\bs{i}^{\prime}$ resulting in,
\begin{equation}
\bs{i} = \{1,0,0,1,0\}, ~ \bs{i}^{\prime} = \{1,0,0,0,1\}, ~d= 2.
\end{equation} 
A further target channel exchange in $\bs{i}^{\prime}$ can further increase the Hamming distance to its maximum value,
\begin{equation}
\bs{i} = \{1,0,0,1,0\}, ~ \bs{i}^{\prime} = \{0,1,0,0,1\}, ~d= 4.
\end{equation} 
Since the output fidelity is degenerate with respect to the number of target channels $u$ and the Hamming distance, it is sufficient to only investigate one instance of each unique fidelity.
Hence, it is possible to explore all the unique output fidelities $F_{[u:u|d]}^{k_{\max}}$ by keeping the channel pattern $\bs{i}$ fixed, and monitoring how the counting functions of each sub-fidelity behave when the other pattern $\bs{i}^{\prime}$ is altered in such a way that changes the Hamming distance. This leads to recursive formulae for each counting function with respect to the Hamming distance which can each be easily solved.

In the following we consider two initially identical $m$-length channel patterns $\bs{i},\bs{i}^{\prime}$ which are $u$-CPF, and derive recurrence relations for the counting functions by keeping $\bs{i}$ fixed and editing the pattern $\bs{i}^{\prime}$.

\begin{itemize}
\item The counting function $f_{1:1}(u,u,d)$ monitors the number of times the sub-fidelity $F_{1:1}$ occurs within the total output fidelity expression. When $d$ is increased by 2, this means that a single background ($0$) and target ($1$) channel in the pattern $\bs{i}^{\prime}$ have been exchanged. Since every channel is probed in conjunction with every other channel, then there will be an increase of one pair of sub-patterns $\bs{\nu}_{\bs{i}_{\bs{s}}}= \{0,1\}$ and $\bs{\nu}_{\bs{i}_{\bs{s}}^{\prime}}  = \{1,0\}$ between the modified images, due to an entangled probe pair irradiated over the exchanged indices. Since this exchange happens nowhere else in the pattern, it is the only instance. Indeed, any additional exchange of this kind increments the number of times that this sub-pattern pair occurs. This corresponds to the recursion relation,
\begin{equation}
f_{1:1}(u,u,d) = f_{1:1}(u,u,d-2) + (d-1).
\end{equation}
\item A similar behaviour is shown for the counting function $f_{0:2}(u,u,d)$. The sub-fidelity $F_{0:2}$ only occurs when we find pairs of sub-patterns $\bs{\nu}_{\bs{i}_{\bs{s}}} = \{0,0\}$ and $\bs{\nu}_{\bs{i}_{\bs{s}}^{\prime}} = \{1,1\}$ between the modified patterns. After only one target channel exchange, there are still no $\{0,0\},\{1,1\}$ pairs, since this would require a Hamming distance of $d = 4$. But after two exchanges, we can find two of these pairs by irradiating entangled probes over both of the exchange sites. This behaviour corresponds to the recursion relation,
\begin{equation}
f_{0:2}(u,u,d) = f_{0:2}(u,u,d-2) + (d-2).
\end{equation}
\item The function  $f_{0:1}(u,u,d)$ describes the distribution of $\bs{\nu}_{\bs{i}_{\bs{s}}}= \{0,0\}$ and $\bs{\nu}_{\bs{i}_{\bs{s}}^{\prime}}  = \{0,1\}$ pairs of channel sub-patterns in the modified image space. Any $u$-CPF pattern has exactly $(m-u)$ background channels. When $d$ is initially increased by 2, there is a single background/target channel exchange; we label the channels involved in this as the \textit{exchange channels}. This leaves $(m-u-(d-1))$ background channels uninvolved in the exchange. It is then always possible to construct an entangled probe pair that is irradiated over an uninvolved background channel, and an exchange channel. Since there are two exchange channels, we conclude that there will be an increase of $2(m-u-(d-1))$ occurences of $\{0,0\},\{0,1\}$ sub-pattern pairs. This corresponds to the recursion relation, 
\begin{equation}
f_{0:1}(u,u,d) = f_{0:1}(u,u,d-2) + 2(m-u - (d - 1)).
\end{equation}
\item The function  $f_{1:2}(u,u,d)$ describes the distribution of $\bs{\nu}_{\bs{i}_{\bs{s}}}= \{0,1\}$ and $\bs{\nu}_{\bs{i}_{\bs{s}}^{\prime}}  = \{1,1\}$ sub-pattern pairs in the modified image space, and its recursive formula derived similarly to $f_{0:1}(u,u,d)$. A single, initial exchange involves two exchange channels. This leaves $(u - (d-1))$ target channels remaining throughout the pattern that have not been involved in the exchange. It is then always possible to irradiate an entangled probe pair over an uninvolved target channel, and an exchange channel. This will result in $2(u-(d-1))$ instances of $\{0,1\},\{1,1\}$ sub-pattern pairs, corresponding to the recursion relation,
\begin{equation}
f_{1:2}(u,u,d) = f_{1:2}(u,u,d-2) + 2(u -(d-1)).
\end{equation}
\end{itemize}
In all cases, $d = 2n, n\in \mathbb{N}$, and solving these recursion relations provides the functions,
\begin{align}
f_{1:1}(u,u,d) &= \frac{d^2}{4}\label{eq:f11_d},\\
f_{0:2}(u,u,d) &= \frac{d(d-2)}{4},\\
f_{0:1}(u,u,d) &= \frac{2d(m-u)-d^2}{2},\\
f_{1:2}(u,u,d) &= \frac{d(2u-d)}{2}. \label{eq:f12_d}
\end{align}
The full form fidelity can then be composed as the product
\begin{equation}
F_{[u:u|d]}^{k_{\max}} = F_{0:1}^{\frac{d[2(m-u) -d]}{2}}F_{1:1}^{\frac{d^2}{4}} F_{0:2}^{\frac{d(d-2)}{4}}F_{1:2}^{\frac{d(2u-d)}{2}},
\end{equation}
which provides a total characterisation of any fidelity that can occur when utilising a $k_{\max}$-LNN dynamic block protocol over strictly $u$-CPF image spaces. 

\subsection{$u/v$-CPF Output Fidelities}
To generalise the previous result, we need to consider the case of potentially unequal CPF image spaces (non-identical numbers of target channels in each channel pattern) and thus provide a characterisation of the fidelity of any pattern pair from different image spaces. Given $\bs{i} \in {\mc{U}_{{\text{CPF}}}^u}$ and $\bs{i}^\prime \in {\mc{U}_{{\text{CPF}}}^v}$ and assuming that $u < v$, all possible Hamming distances take the values:
\begin{gather}
\begin{gathered}
d = d_h(\bs{i}, \bs{i}^\prime) = 2t - (u+v), \\ \forall 
t \in \{v,\ldots,\min\{u+v,m\}\}.
\end{gathered} 
\label{eq:Ham_uv}
\end{gather}
This generates bounds on the Hamming distance,
\begin{gather}
\begin{gathered}
d_{\text{min}} \leq d \leq d_{\text{max}},\\
d_{\text{min}} \defeq v-u,~~
d_{\text{max}} \defeq \min\left\{ u+v, 2m-(u+v)\right\}, 
\end{gathered}
\label{eq:Ham_bounds}
\end{gather}
Therefore the Hamming distance may now have odd or even parity based on the values of $u/v$ unlike in the previous case of strict $u$-CPF. Furthermore, it is clear that if we consider two patterns $\bs{i} \in {\mc{U}_{{\text{CPF}}}^u}$ and $\bs{i}^\prime \in {\mc{U}_{{\text{CPF}}}^v}$, then the minimum Hamming distance will never be zero, but some non-zero value. Hence, it can be easily shown that the recursive functions take the following minimum Hamming distance values,
\begin{align}
\begin{aligned}
f_{0:1}(u,v,d_{\text{min}}) &= (m-v)(v-u),\\
f_{1:1}(u,v,d_{\text{min}}) &= 0, \\
f_{0:2}(u,v,d_{\text{min}}) &= \frac{1}{2}(v-u-1)(v-u),\\
f_{1:2}(u,v,d_{\text{min}}) &= u(v-u),
\end{aligned}
\end{align}
which clearly collapse to zero when $v=u$, as before. The recursive relationships then follow intuitively from the arguments of the previous section,
\begin{align}
\begin{aligned}
f_{0:1}(u,v,d) &= f_{0:1}(u,v,d-2) + 2m-(u+v) -2(d-1),\\
f_{1:1}(u,v,d) &= f_{1:1}(u,v,d-2) + (d-1), \\
f_{0:2}(u,v,d) &= f_{0:2}(u,v,d-2) + (d-2), \\
f_{1:2}(u,v,d) &= f_{1:2}(u,v,d-2) + (u+v) -2(d-1).
\end{aligned}
\end{align}
These recursive functions are simple to solve with the initial conditions, and we find the following generalised sub-fidelity exponents for images drawn from $u/v$-CPF image spaces:
\begin{align}
&f_{0:1}(u,v,d) =   \frac{2d(m-v) - d^2 + d_{\text{min}}}{2} ,  \label{eq:f01_uv} \\
&f_{1:1}(u,v,d) = \frac{d^2-d_{\text{min}}^2}{4},\\
&f_{0:2}(u,v,d) =  \frac{d(d-2)+d_{\text{min}}^2}{4} ,\\
&f_{1:2}(u,v,d) =  \frac{d(2v - (d-d_{\text{min}})) - 2d_{\text{min}}^2}{2}  \label{eq:f12_uv}.
\end{align}
These can be seen to be generalisations of the strictly $u$-CPF exponents where we account for different distance ranges using $d_{\text{min}}$, and can be re-expressed in terms of Eqs.~(\ref{eq:f11_d})-(\ref{eq:f12_d}). The output fidelity with respect to Hamming distance can then be expressed for any $u$ and $v$,
\begin{align}
\begin{aligned}
F_{[u:v|d]}^{k_{\max}} = &F_{0:1}^{\frac{2d(m-v) - d^2 + d_{\text{min}}}{2}}F_{1:1}^{ \frac{d^2-d_{\text{min}}^2}{4}} \\
\times &F_{0:2}^{\frac{d(d-2)+d_{\text{min}}^2}{4}}F_{1:2}^{\frac{d(2v - (d-d_{\text{min}})) - 2d_{\text{min}}^2}{2}}.
\end{aligned}
\end{align}

\subsection{Generalised CPF Error Bounds}

\subsubsection*{Error Bounds for $u$-CPF}

Consider the image space of $m$-length channel patterns that possess precisely $u$ target channels, i.e.~the $u$-CPF image space ${\mc{U}_{{\text{CPF}}}^u}$. For pairs of channel patterns from this image space, all the possible Hamming distances take the values
\begin{equation}
d_{h}(\bs{i}, \bs{i}^\prime) = 2(t - u), \> \forall t \in \{u+1,\ldots, 2u\}.
\end{equation}
Using the fidelity degeneracy properties of $k_{\max}$-LNN output states, the total error quantity can be written as a sum over all unique Hamming distances within the $u$-CPF image space,
\begin{align}
D_{u:u}^{\text{\tiny $k_{\text{\tiny max}}$}} [M] &= \sum_{\bs{i}\neq\bs{i}^{\prime} \in \mc{U}_{\text{CPF}}^u}  F^M(\bs{\varphi}_{\bs{\nu_{i}},\mu}, \bs{\varphi}_{\bs{\nu}_{\bs{i}^{\prime}},\mu}), \\
&=  \sum_{t=u+1}^{2u} {C_m^t} C_t^u C_u^{2u-t} ({F}_{[u:u|2(t-u)]}^{k_{\max}} )^{M} \label{eq:uCPF}.
\end{align}
This result follows from the use of the unique fidelity function defined in Eq.~(\ref{eq:kmaxUFid}), and basic counting arguments over the space of all $u$-CPF channel patterns for which we refer the reader to Appendix E, Ref.~\cite{ThermalPatt} for more details. 
Using this formula in Eq.~(\ref{eq:uCPF}) then allows us to write the following error bounds,
\begin{equation}
\frac{1}{2(C_{m}^u)^2}D_{{u:u}}^{\text{\tiny $k_{\text{\tiny max}}$}} [2M]  \leq p_{\text{err}} \leq \frac{1}{C_{m}^u} D_{{u:u}}^{\text{\tiny $k_{\text{\tiny max}}$}} [M].
\end{equation}
These bounds are easily confirmed by setting $u=1$ and $d=2$, which reproduces  Eqs.~(\ref{eq:CPF_ifLB})-(\ref{eq:CPF_ifUB}) immediately.\par

\subsubsection*{Error Bounds for Bounded CPF}

We may also consider more general image spaces based on the CPF formalism. Introduced in \cite{ThermalPatt}, one can study Bounded CPF (BCPF), which refers to a scenario in which there is ambiguity over the number of targets present in each channel pattern. That is, we can consider a larger image space which is the union of a number of CPF image spaces, 
\begin{equation}\mc{U}_{\text{CPF}}^{\bs{u}} = \bigcup_{u\in\bs{u}}  \mc{U}_{\text{CPF}}^u
\end{equation}
where $\bs{u}$ contains all possible numbers of target channels in any image in the space. If $\bs{u} = \{0,\ldots,m\}$ then the total image space $\mc{U}_{\text{CPF}}^{\bs{u}}$ is the complete set of all binary channel patterns. Similarly one may define $\bs{u} = \{1\}$ such that there is no ambiguity over the number of target channels, and we regain the single target channel CPF image space.\par

In this setting, the total error quantity can be decomposed into two contributions: Errors that are generated from misclassifying channel patterns from the same image space, and errors that are generated from misclassifying channel patterns from different image spaces. This can be more precisely written as, 
\begin{align}
D_{{\bs{u}}}^{\text{\tiny $k_{\text{\tiny max}}$}} [M]  &= \sum_{\bs{i}\neq\bs{i}^{\prime} \in \mc{U}_{\text{CPF}}^{\bs{u}}}  F^M(\bs{\varphi}_{\bs{\nu_{i}},\mu}, \bs{\varphi}_{\bs{\nu}_{\bs{i}^{\prime}},\mu}), \\
&= \sum_{j \in \bs{u} }D_{u:u}^{\text{\tiny $k_{\text{\tiny max}}$}} [M]  + \sum_{ u \in \bs{u} }   \sum_{v \neq u} \tilde{D}_{u:v}^{\text{\tiny $k_{\text{\tiny max}}$}}[M].
\label{eq:Boundk3}
\end{align}
Here $\tilde{D}_{\bs{F}_2}^{\text{\tiny $k_{\text{\tiny max}}$}}[M]$ is sum over all unequal fidelities between $u$-CPF and $v$-CPF patterns when $u\neq v$, which describes a contribution to the error probability from the misclassification of channel patterns from different CPF image spaces. Let $u \leq v$ count the number of targets in each image space. Then the valid Hamming distance between any two patterns is given by the Eqs.~(\ref{eq:Ham_uv}) and (\ref{eq:Ham_bounds}). Once again, following the counting arguments in \cite{ThermalPatt}, it can be shown that this error contribution can be re-parameterised in terms of the Hamming distance and the degenerate output fidelities,
\begin{equation}
 \tilde{D}_{u:v}^{\text{\tiny $k_{\text{\tiny max}}$}} [M] = \hspace{-3mm}\sum_{t=v}^{\min\{u+v,m\}} \hspace{-3mm}C_m^t C_t^v C_v^{(u+v)-t} ({F}_{[u:v|2t-(u+v)]}^{k_{\max}} )^{M} 
\end{equation}
With these results at hand, and letting $\Sigma \defeq \sum_{u\in\bs{u}} C_{m}^u$ be the uniform \textit{a priori} probability of each channel pattern, the error probability associated with $\bs{u}$-BCPF using the $k_{\max}$-LNN unassisted discrimination protocol is given by
\begin{equation}
\frac{1}{2\Sigma^2} D_{\bs{u}}^{\text{\tiny $k_{\text{\tiny max}}$}} [2M]  \leq p_{\text{err}} \leq \frac{1}{\Sigma} D_{\bs{u}}^{\text{\tiny $k_{\text{\tiny max}}$}} [M] . \label{eq:kBCPF_Bounds}
\end{equation}

\section{Classical Output Fidelities\label{sec:ClassF}}
The best known classical multi-channel discrimination protocols are achieved via single-mode coherent states. For quantum reading, these take the form 
$
\bigotimes_{i=1}^m \ket{{N_S}}\!\bra{N_S}_i
$. Probing a single pure loss channel with either $\eta_B$ or $\eta_T$ transmissivity, the fidelity between the two possible output states is given by \cite{PatternRecog},
\begin{equation}
F_{\text{cl}}^{\text{\scriptsize loss}} = \exp\left[-\frac{N_S}{2}(\eta_B - \eta_T)^2\right].
\end{equation}\par
For additive noise channels, the optimal classical input state is just the $m$-copy vacuum state $\ket{0}^{\otimes m}$, since displacements or phase shifts have no impact on the output states from the channel. Probing a single additive-noise channel with either  $\nu_B$ or $\nu_T$ noise, the fidelity between the two possible output states is given by \cite{OptEnvLoc},
\begin{equation}
F_{\text{cl}}^{\text{\scriptsize add}} = (\sqrt{(\nu_T +1)(\nu_B +1)} - \sqrt{\nu_T \nu_B})^{-1}.
\end{equation}

\begin{widetext}
\section{Two-Mode Sub-Fidelities\label{sec:SubFids}}
The two mode output sub-fidelities over binary channel patterns relevant to quantum reading and environment localisation (for additive noise channels) can be derived through the use of Gaussian fidelity formula from \cite{GFid}. 
For quantum reading purposes, define $\eta_j$ for $j\in \{B,T\}$ background or target transmissivity, and $N_S$ as the mean photon number of the incident probe. Then the sub-fidelities take the form,
\begin{align}
F_{0:1}(N_S,\eta_T,\eta_B) &= F_{1:2}(N_S,\eta_B,\eta_T) = \frac{-\sqrt{
\theta - | \phi | +
|\psi|} -
\sqrt{\theta+ | \phi |+|\psi| }}{\sqrt{2}\phi},\\
F_{1:1}(N_S,\eta_T,\eta_B) &= \frac{1}{N_S \left(-2 \sqrt{\bar{\eta}}+\eta_{*}\right)+1},\\
F_{0:2}(N_S,\eta_T,\eta_B) &\defeq \frac{2 N_S \sqrt{\bar{\eta}} + \sqrt{N_S (\tilde{\eta} - 4 N_S\bar{\eta}) -1}}{{1-N_S\tilde{\eta}}},
\end{align}
where the following quantities are defined,
\begin{align}
\bar{\eta} &= \eta_B\eta_T (1-\eta_T) (1-\eta_B), \>\> \tilde{\eta} = (\eta_B+\eta_T-2) (\eta_B+\eta_T),\>\> \eta_{*} = \eta_B+\eta_T -2 \eta_B \eta_T,\\
\theta &= 4 \sqrt{\eta_B^3 \eta_T} N_S^2 \sqrt{(\eta_B-1)^3 (\eta_T-1)}, \\
\tilde{\theta}&= 2 \sqrt{\eta_B^3 \eta_T} N_S(N_S+1)  -  \eta_TN_S - 1, \\
\phi &=  \tilde{\theta} - \eta_B^2 N_S (N_S-1) - \eta_B N_S (\eta_T (N_S-1)+3),\\
\psi &= \tilde{\theta} + \eta_B\eta_*N_S^2(2\eta_B -3) + \eta_BN_S\left( \tilde{\eta} +2\eta_B\eta_T - 3\right).
\end{align}
For the purposes of environment localisation of additive noise channels, denoting the additive noise properties $\nu_j$ for $j\in \{B,T\}$, the sub-fidelities take the form,
\begin{align}
F_{0:1}(N_S,\nu_T,\nu_B) &= F_{1:2}(N_S,\nu_B,\nu_T) = \frac{\sqrt{{
\nu_B \omega_B \left(\bar{\gamma}+\sqrt{\gamma_{B,T}\gamma_{T,B}}\right)}
}+{\sqrt{1 + \bar{\gamma} +  \nu_B \omega_B (\bar{\gamma} +2+ \sqrt{\gamma_{B,T}\gamma_{T,B}})}}}{\bar{\gamma}+ 2\nu_B\omega_B + 1}\\
F_{1:1}(N_S,\nu_T,\nu_B) &= \frac{1}{2\left[ \tilde{\nu}_{T,B} -\nu_T - \sqrt{\tilde{\nu}_{T,B}\> \tilde{\nu}_{B,T} }\right] +1}\\
F_{0:2}(N_S,\nu_T,\nu_B) &= \frac{1}{\sqrt{2 \nu_B \omega_{T} (2\nu_T \omega_B +1)+\nu_T (\omega_T +2\mu ) +\nu_B+1} -2 \sqrt{\nu_B \nu_T\omega_B \omega_T}}
\end{align}
where we define the quantities,
\begin{align}
\mu &= N_S+\frac{1}{2},\\
  \omega_{j}& = \nu_j + 2\mu,\\
   \tilde{\nu}_{i,j} &= N_S(\nu_i+\nu_j)+ \nu_i(\nu_j  +1), \\
\gamma_{i,j} &= 2 \mu  (\nu_i+\nu_j)+ \nu_i(2 \nu_j+1)-\nu_j, \>\> \bar{\gamma} = \frac{\gamma_{B,T}  +\gamma_{T,B}}{2}.
\end{align}

\end{widetext}
\hspace{1mm}

\end{document}